\definecolor{DarkGreen}{rgb}{0.1,0.5,0.1}
\definecolor{DarkRed}{rgb}{0.5,0.1,0.1}
\definecolor{DarkBlue}{rgb}{0.1,0.1,0.5}
\pgfplotsset{compat=1.14}
\newcommand{\cB}{\ensuremath{\mathcal{B}}}
\newcommand{\cP}{\ensuremath{\mathcal{P}}}
\newcommand{\cI}{\ensuremath{\mathcal{I}}}
\newcommand{\cT}{\ensuremath{\mathcal{T}}}
\newcommand{\R}{{\mathbb R}}
\newcommand{\Z}{{\mathbb Z}}
\newcommand{\F}{{\mathbb F}}
\newcommand{\N}{\ensuremath{\mathbb{N}}}
\newcommand{\PR}[1]{\mathrm{Pr}\left[ #1\right]}
\newcommand{\PROver}[2]{\mathrm{Pr}_{#1}\left[ #2\right]}
\renewcommand{\Pr}{\mathrm{Pr}}
\newcommand{\E}[1]{\mathbb{E}\left[ #1\right]}
\newcommand{\EE}{\mathbb{E}}
\newcommand{\Eover}[2]{\mathop{\mathbb{E}}_{#1}\left[ #2 \right]}
\newcommand{\VAR}[1]{\mathrm{Var}\left[ #1\right]}
\newcommand{\PRover}[2]{\mathop{\mathrm{Pr}}_{#1}\left[ #2\right]}
\newcommand{\floor}[1]{\left\lfloor {#1}\right\rfloor}
\newcommand{\inset}[1]{\left\{#1\right\}}
\newcommand{\inparen}[1]{\left(#1\right)}
\newcommand{\suchthat}{\,:\,}
\newcommand{\supp}{\mathrm{supp}}
\newcommand{\argmax}{\mathrm{argmax}}
\newcommand{\argmin}{\mathrm{argmin}}
\newcommand{\poly}{\mathrm{poly}}
\newcommand{\Sym}{\ensuremath{\operatorname{Sym}}}
\newcommand{\dis}[2]{#1_{\ensuremath{\operatorname{distinct}}}^{#2}}
\newcommand{\DKL}[3]{{D_{\mathrm{KL}}}_{#3}\left(#1\parallel #2\right)}
\newcommand{\TRC}{R_{\mathrm{RC}}}
\newcommand{\CRC}{C_{\mathrm{RC}}}
\newcommand{\TRLC}{R_{\mathrm{RLC}}}
\newcommand{\CRLC}{C_{\mathrm{RLC}}}
\newcommand{\histypes}{{\bar T}}
\newcommand\numberthis{\addtocounter{equation}{1}\tag{\theequation}}
\newcommand{\eps}{\varepsilon}
\renewcommand{\epsilon}{\varepsilon}
\definecolor{brightpink}{rgb}{1.0, 0.0, 0.5}
\definecolor{byzantine}{rgb}{0.74, 0.2, 0.64}
\definecolor{byzantium}{rgb}{0.44, 0.16, 0.39}
\DeclareMathOperator*\Uni{Uniform}
\theoremstyle{plain}
\declaretheorem[name=Theorem,numberwithin=section]{theorem}
\declaretheorem[name=Lemma,sibling=theorem]{lemma}
\newtheorem*{lemma*}{Lemma} 
\newtheorem*{theorem*}{Theorem} 
\newtheorem{definition}[theorem]{Definition}
\newtheorem{corollary}[theorem]{Corollary} 
\newtheorem{remark}[theorem]{Remark}
\newtheorem{claim}[theorem]{Claim}
\newtheorem{proposition}[theorem]{Proposition}
\newtheorem{fact}[theorem]{Fact}
\DeclareFontFamily{U}{mathx}{\hyphenchar\font45}
\DeclareFontShape{U}{mathx}{m}{n}{<-> mathx10}{}
\DeclareSymbolFont{mathx}{U}{mathx}{m}{n}
\DeclareMathAccent{\widebar}{0}{mathx}{"73}
\title{\textbf{Threshold rates for properties of random codes}\thanks{
		SS and MW are partially funded by NSF-CAREER grant CCF-1844628, NSF-BSF grant CCF-1814629, and a Sloan Research Fellowship. SS is partially supported by a Google Graduate Fellowship. VG, JM and NR are partially funded by NSF grants CCF-1563742 and CCF-1814603 and a Simons Investigator Award. NR is also partially supported by ERC H2020 grant No.74079 (ALGSTRONGCRYPTO).}}
\author[1]{Venkatesan Guruswami}
\author[1]{Jonathan Mosheiff} 
\author[2]{Nicolas Resch}
\author[3]{\\Shashwat Silas}
\author[3]{Mary Wootters}
\affil[1]{Carnegie Mellon University}
\affil[2]{Centrum Wiskunde en Informatica}
\affil[3]{Stanford University}
\date{}
\begin{document}
	\maketitle
	\thispagestyle{empty}
	
	\begin{abstract}
		Suppose that $\mathcal{P}$ is a property that may be satisfied by a random code $C \subset \Sigma^n$.  For example, for some $p \in (0,1)$, $\mathcal{P}$ might be the property that there exist three elements of $C$ that lie in some Hamming ball of radius $pn$.  We say that $R^*$ is the \em threshold rate \em for $\mathcal{P}$ if a random code of rate $R^* + \eps$ is very likely to satisfy $\mathcal{P}$, while a random code of rate $R^* - \eps$ is very unlikely to satisfy $\mathcal{P}$.
		While random codes are well-studied in coding theory, even the threshold rates for relatively simple properties like the one above are not well understood.
		
		\smallskip
		We characterize threshold rates for a rich class of properties.  These properties, like the example above, are defined by the inclusion of specific sets of codewords which are also suitably ``symmetric.'' For properties in this class, we show that the threshold rate is in fact \em equal \em to the lower bound that a simple first-moment calculation obtains. Our techniques not only pin down the threshold rate for the property $\mathcal{P}$ above, they give sharp bounds on the threshold rate for \emph{list-recovery} in several parameter regimes, as well as an efficient algorithm for estimating the threshold rates for list-recovery in general. 
	\end{abstract}
	
	\newpage
	
	\section{Introduction} 
	
Random codes are ubiquitous in the theory of error correcting codes: when thinking about the ``right’’ trade-offs for a particular problem, a coding theorist’s first instinct may be to try a random code. 
A \em random code \em here is simply a random set.  That is, let $C \subseteq \Sigma^n$ be chosen so that each $x \in \Sigma^n$ is included in $C$ with probability $|\Sigma|^{-n(1-R)}$ for some parameter $R$, which is called the (expected\footnote{Throughout, we refer to $R$ as the rate of the code, and drop the adjective ``expected.''}) \em rate \em of the code $C$.  
Random codes are used in the proofs of the Gilbert-Varshamov bound, Shannon’s channel coding theorem, and the list-decoding capacity theorem, to name just a few.  This success may lead to the intuition that random codes are ``easy’’ to analyze, and that the hard part is finding explicit constructions that match (or in rare cases, exceed) the parameters of random codes.  However, there is still much we do not know about random codes, especially if we want extremely precise answers.

In particular, the question of \em threshold rates\em, of broader interest in probability theory, is something that we do not understand well for random codes.
In more detail, suppose that $\mathcal{P}$ is a code property.  For example, perhaps $\mathcal{P}$ is the property that there is some pair of codewords $c^{(1)}, c^{(2)} \in C$ that both lie in some Hamming ball of radius $pn$.
Or perhaps $\mathcal{P}$ is the property that there are three codewords $c^{(1)}, c^{(2)}, c^{(3)} \in C$ that lie in such a Hamming ball.  A value $R^* \in (0,1)$ is a \em threshold rate \em for $\mathcal{P}$ if a random code of rate $R^* + \eps$ is very likely to satisfy $\mathcal{P}$, but a random code of rate $R^* - \eps$ is very unlikely to satisfy $\mathcal{C}$.
For the first example above, about pairs of codewords, the property in question is just the property of the code having \emph{minimum distance} less than $2pn$, and this is not too hard to understand. 
%
%
However, already for the second example above---called \emph{list-of-two decoding}---the threshold rate was not known.

\subsection{Contributions}
In this paper, we characterize threshold rates for a rich class of natural properties of random codes.  We apply our characterization to obtain threshold rates for list-of-two decoding, as well as to properties like \em list-decoding \em and \em perfect hashing codes, \em and more generally to \em list-recovery. \em 
We outline our contributions below.

\paragraph{A characterization of the threshold rate $R^*$ for symmetric properties.}  Suppose that $\mathcal{P}$ is a property defined by the inclusion of certain ``bad’’ sets.  For example, the list-of-two decoding property described above is defined by the inclusion of three codewords that lie in a radius-$pn$ Hamming ball.  
 For such properties that are also ``symmetric enough,'' our main technical result, Theorem~\ref{thm:threshold-intro}, characterizes the threshold rate $R^*$.  Moreover, we show that this threshold rate is exactly the same as the lower bound that one obtains from a simple first-moment calculation!  This is in contrast to recent work of \cite{MRRSW} for random \em linear \em codes, which shows that the corresponding first-moment calculation is not the correct answer in that setting.
 
Part of our contribution is formalizing the correct notion of ``symmetric enough.'' As we describe in the technical overview in Section~\ref{sec:techoverview}, this definition turns out to be fairly subtle.  Moreover, we give an example in Appendix~\ref{app:example} that shows that this definition is necessary: there are natural properties that do not meet this requirement, for which the simple first-moment calculation is \em not \em the correct threshold rate.


\paragraph{Estimates of $R^*$ for list-recovery.} 
We give precise estimates of the threshold rate $R^*$ for \em list-recovery. \em 
We say that a code $C \subseteq \Sigma^n$ is $(p, \ell, L)$-list-recoverable if for all sets $K_i \subseteq \Sigma$ (for $1 \leq i \leq n$) with $|K_i| \leq \ell$, 
\[ |\{ c \in C \,:\, \Pr_{i \sim [n]}[c_i \not\in K_i] \leq p \}| < L. \]
List-recovery is a useful primitive in list-decoding, algorithm design, and pseudorandomness~(see, e.g., \cite{RudraW18,guruswami2009unbalanced,Vadhan12}).  In particular, it generalizes the list-of-two decoding example above (when $\ell=1$ and $L=3$), as well as other interesting properties, such as list-decoding and perfect hashing codes, discussed below.

Our characterization allows us to estimate or even exactly compute the threshold rate for $(p, \ell, L)$-list-recovery in a wide variety of parameter regimes.  To demonstrate this, we include several results along these lines.  First, in Section~\ref{sec:ThresholdComputation} (Corollary~\ref{cor:thresh}), we give estimates that are quite sharp when $\frac{q\log L}{L}$ is small.  In Section~\ref{sec:hash} (Lemma~\ref{lem:zero}), we give an exact formula for the case $p=0$, which is relevant for perfect hashing codes.  In Section~\ref{sec:list-of-2} (Theorem~\ref{thm:list-of-2}(I)), we give an exact formula for the case that $L=3$ and $\ell=1$, relevant for list-of-two decoding.  Moreover, in Section~\ref{sec:efficient} (Corollary~\ref{cor:algorithm}) we use our characterization to develop an efficient algorithm to compute the threshold rate up to an additive error of $\epsilon > 0$; our algorithm runs in time  $O_p( L^q + \poly(q, L, \log(1/\eps))$.

\paragraph{List-of-two decoding and a separation between random codes and random linear codes.} 
We obtain new results for list-of-two decoding, the example discussed above.  
List-of-two decoding is a special case of \em list-decoding, \em which itself the special case of list-recovery where $\ell=1$.  We say that a code is $(p,L)$-list-decodable if there is no Hamming ball of radius $pn$ containing $L$ codewords; list-of-two decoding is the special case of $L=3$.\footnote{It is called list-of-\emph{two} decoding, even though $L$ is \emph{three}, because any Hamming ball contains at most \emph{two} codewords.}
We show in Section~\ref{sec:list-of-2} (Theorem~\ref{thm:list-of-2}) that the threshold rate for this question, for random binary codes, is $R^* = 1 - \frac{1 - h_2(3p) + 3p \log_2 3 }{3}$.  That is, above this rate, a random binary code is very likely to have three codewords contained in a radius $pn$ ball, while below this rate, the code most likely avoids all such triples. 

This result is interesting for two reasons.  First, it demonstrates that our techniques are refined enough to pin down the threshold rate in this parameter regime.  Second, the particular value of $R^*$ is interesting because it is \emph{different} than the corresponding threshold rate for random \emph{linear} codes.  A \em random linear code \em over $\mathbb{F}_q$ of rate $R$ is a random linear subspace of $\mathbb{F}^n_q$, of dimension $Rn$.  The list-decodability of random linear codes has been extensively studied, and it is known (e.g.,~\cite{ZyablovP81,GHK11}) that the $(p,L)$-list-decoding threshold rate for both random linear codes and random codes is $1 - h_q(p)$, for sufficiently large list sizes $L$.\footnote{Here, $h_q(x) = x\log_q(q-1) - x \log_q(x) - (1-x)\log_q(1-x)$ is the $q$-ary entropy.}
On the other hand, it is well-known that the distance of a random linear code (corresponding to the $L=2$ case) is better than the distance of a completely random code.  Our results show that this difference continues to hold for $L=3$ (list-of-two decoding).  We compute the threshold for random codes in Section~\ref{sec:list-of-2}, and
in Appendix~\ref{app:list-of-2-rlc}, we show how to use the techniques of \cite{MRRSW,GLMRSW20} to establish the list-of-two decoding threshold for random linear codes.
The $L=3$ case is interesting on its own, and moreover this result is a proof-of-concept to show that these techniques could perhaps be used to pin down the difference between random codes and random linear codes for larger (but still small) values of $L$.

\paragraph{Limitations of random codes for perfect hashing.}  
Another special case of list-recovery is \em perfect hashing codes. \em 
Suppose that $|\Sigma| = q$.  A code $C \subseteq \Sigma^n$ is said to be a $q$-hash code if, for any set of $q$ distinct codewords $c^{(1)}, c^{(2)}, \ldots, c^{(q)} \in C$, there is at least one $i \in [n]$ so that $\{c^{(1)}_i, c^{(2)}_i, \ldots, c^{(q)}_i\} = \Sigma$; that is, if the set of symbols that appear in position $i$ are all distinct.  Thus, $C$ is a $q$-hash code if and only if it is $(0,q-1,q)$-list-recoverable.  As the name suggests, $q$-hash codes have applications in constructing small perfect hash families, and it is a classical question to determine the largest rate possible for a $q$-hash code.\footnote{A $q$-hash code naturally gives rise to a perfect hash family: suppose that $C$ is a universe of items, and define a hash function $h_i:C \to \Sigma$ given by $h_i(c) = c_i$.  Then the property of being a $q$-hash code is equivalent to the property that, for any set of $q$ items in the universe, there exists some hash function $h_i$ for $1 \leq i \leq n$ that maps each item to a different value.} 

A simple random coding argument shows that a random code of rate $R = \frac{1}{q} \log_q\frac{1 }{1 - q!/q^q} - o(1)$ is a $q$-hash code with high probability~\cite{FK84,korner86}.  However, it is still an area of active research to do significantly better than this bound for any $q$.  It is known that $R < \frac{q!}{q^{q-1}}$ for any $q$-hash code~\cite{FK84,GR19}, and for large $q$, there is a gap of a multiplicative factor of about $q^2$ between these upper and lower bounds.  K\"orner and Matron gave a construction that beats the random bound for $q=3$~\cite{KM88}, and recently Xing and Yuan gave a construction that beats the random bound for infinitely many $q$'s~\cite{XY19}.  

One might have hoped that a random code might in fact do better than the straightforward probabilistic argument (which follows from a union bound).  Unfortunately, our results show that this is not the case.  In Corollary~\ref{cor:hash}, we use our characterization to pin down the threshold rate for perfect hashing codes, and show that, for random codes, the threshold rate is in fact $R^* = \frac{1}{q} \log_q\frac{1 }{1 - q!/q^q}$.

\paragraph{A broader view.}
Taking a broader view, threshold phenomena in other combinatorial domains, notably random graphs and Boolean functions, have been the subject of extensive study at least since Erd\H{o}s and R\'{e}nyi's seminal work \cite{ER59}. Some of the deeper results in this field (e.g.\ \cite{Fri99}), deal simultaneously with a wide class of properties, rather than a specific one. Other works, such as the recent \cite{FrankstonKNP19}, are general enough to cover not only multiple properties, but also multiple domains.
Our work (as with the work of \cite{MRRSW} on random linear codes, discussed below) is not as general as these, but we are able to get more precise results.  
%
It would be interesting to find a  general framework that connects threshold phenomena in a variety of random code models, with analogues from random graphs and other natural combinatorial structures.

\subsection{Technical Overview}\label{sec:techoverview}
As mentioned above, we study properties defined by the inclusion of bad subsets.  We organize bad subsets of size $b$ into matrices $B \in \Sigma^{n \times b}$, interpreting the columns of $B$ as the elements of the set.  We write ``$B \subseteq C$'' to mean that the columns of $B$ are all contained in the code $C$.

As a running example---and also our motivating example---consider list recovery, defined above. 
The property $\cP$ of \em not \em being $(p,\ell,L)$-list-recoverable is defined by the inclusion of ``bad'' matrices $B \in \Sigma^{n \times L}$ so that for some sets $K_1, \ldots, K_n \subset \Sigma$ of size at most $\ell$, $\Pr_{i \sim [n]}[B_{ij} \notin K_i ] \leq p$ for each $j \in [L]$. Moreover we require the columns of $B$ to be distinct. 

\paragraph{Analyzing a property as a union of types.}
Following the approach of~\cite{MRRSW} for random linear codes, we group the bad matrices into \em types \em based on their row distributions.  That is, for a bad matrix $B \in \Sigma^{n \times b}$, let $\tau$ denote the row distribution
\[ \tau(v) = \frac{ |\{i \in [n] : B_{i,\star} = v\}| }{n}, \] 
where $B_{i,\star}$ denotes the $i$'th row of $B$.  We say that $B$ has \em type \em $\tau$.  Consider the set $\mathcal{B}$ of all of the matrices of type $\tau$; equivalently, $\mathcal{B}$ is the set of matrices obtained by permuting the rows of $B$.

For example, if $B \subset \{\alpha, \beta, \gamma\}^{n\times 3}$ is given by
\begin{center}
    \begin{tikzpicture}
    \node {\begin{minipage}{2cm}
\[ B = \begin{bmatrix} \alpha & \alpha & \beta \\ \alpha & \alpha& \beta \\  & \vdots & \\ \alpha & \alpha& \beta \\ \gamma & \beta & \beta \\ & \vdots &  \\\gamma & \beta & \beta \\\gamma & \beta & \beta \end{bmatrix} \]
\end{minipage}};
    \draw [decorate,decoration={brace,amplitude=10pt},xshift=4pt,yshift=0pt] (1.5,1.7) -- (1.5,-.2)node [black,midway,xshift=9pt,anchor=west] {$n/2$};
        \draw [decorate,decoration={brace,amplitude=10pt},xshift=4pt,yshift=0pt] (1.5,-.25) -- (1.5,-2.1)node [black,midway,xshift=9pt,anchor=west] {$n/2$};
    \end{tikzpicture}
    \end{center}
then $B$ is bad for $(0, 2, 3)$-list-recovery, the row distribution $\tau$ of $B$ is given by
    \[ \tau( (\alpha, \alpha, \beta) ) = \tau( (\gamma, \beta, \beta) ) = \frac{1}{2}, \] 
and the set $\mathcal{B}$ consists of all $n \times 3$ matrices with half the rows $(\alpha, \alpha, \beta)$ and the other half $(\gamma, \beta, \beta)$.  

We note that possible types $\tau$ depend on $n$, because of divisibility constraints.  For simplicity, let us ignore these restrictions for now (we will deal with them later), and suppose that a single type $\tau$ can appear for all $n$.

\paragraph{First-moment bound and main theorem.}
We can use a simple first-moment approach to give a lower bound on the threshold rate. 
In more detail, the probability that a particular $B$ is contained in $C$ is $q^{-nb(1-R)}$, assuming that $B$ has $b$ distinct columns.
Using the fact that $|\mathcal{B}| \approx q^{H_q(\tau)\cdot n}$, where $H_q(\tau)$ is the base-$q$ entropy of $\tau$ (see Section~\ref{sec:Preliminaries}), and applying a union bound over all $B \in \mathcal{B}$, we see that the probability that any $B \in \mathcal{B}$ is contained in $C$ is at most
\[ q^{nb(H_q(\tau) - (1-R))}. \]
Thus, if $R \leq 1 - \frac{H_q(\tau)}{b} - \eps$ for some small $\eps > 0$, it is very unlikely that $\tau$ will be represented in $C$.

Now suppose that our collection of bad sets, which define the property $\mathcal{P}$, is closed under row permutations.  This means that $\mathcal{P}$ can be represented as a collection $T$ of types $\tau$; note that the size of $T$ is polynomial in $n$.  Union bounding over all of these types, the computation above shows that a random code $C$ of rate $R < 1 - \max_{\tau \in T} \frac{H_q(\tau)}{b} - \eps$ will, with high probability, not satisfy $\mathcal{P}$.

The question is, could the rate be larger?  Might it be the case that $\mathcal{P}$ still not satisfied (with high probability) by a random code of rate $R$ significantly larger than $1 - \max_{\tau}H_q(\tau)/b$?  In \cite{MRRSW}, it was shown that the answer for random \em linear \em codes is ``yes.''  If $\mathcal{P}$ exhibits certain linear structure, then it may be possible that a higher rate random linear code still does not satisfy $\mathcal{P}$ with high probability.  One may conjecture that something similar holds for random codes.

Our main technical result, Theorem~\ref{thm:threshold}, is that, for random codes, for sufficiently symmetric properties, the answer to this question is ``no.''  That is, the simple calculation above \em does \em give the right answer for random codes!   

\begin{theorem}[Informal; see Theorem~\ref{thm:threshold} for the formal version]\label{thm:threshold-intro}
Let $\mathcal{P}$ be a ``symmetric'' property defined by the inclusion of a type among the types in $T$.
Let 
\[ R^* = 1 - \frac{\max_{\tau \in T} H_q(\tau)}{b} \]
Then for all $\eps > 0$, a random code of rate $R \geq R^* + \eps$ satisfies $\mathcal{P}$ with probability $1 - o(1)$, while a random code of rate $R^* - \eps$ satisfies $\mathcal{P}$ with probability $o(1)$.
\end{theorem}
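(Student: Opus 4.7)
I would prove the two implications of Theorem~\ref{thm:threshold-intro} separately. The lower-bound direction (a random code of rate $R^* - \eps$ satisfies $\mathcal{P}$ with probability $o(1)$) is the first-moment union bound already essentially sketched in the surrounding exposition: summing over the $\poly(n)$ types in $T$, the expected number of bad matrices contained in $C$ is at most $\poly(n) \cdot q^{nH_q(\tau^*)} \cdot q^{-nb(1-R)} = \poly(n) \cdot q^{-nb\eps} = o(1)$, so Markov's inequality concludes.

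The upper-bound direction is the substantive half, and I would attack it via a second-moment argument concentrated on a single extremal type. Fix $\tau^* \in T$ attaining $H_q(\tau^*) = \max_{\tau \in T} H_q(\tau)$, and let $X$ count the $n \times b$ matrices of type $\tau^*$ (with distinct columns) whose columns all lie in $C$. Standard type-counting gives $|\cB_{\tau^*}| = q^{nH_q(\tau^*) - o(n)}$, hence $\mathbb{E}[X] = q^{nb\eps - o(n)} \to \infty$, so by the Paley--Zygmund inequality it suffices to prove $\mathbb{E}[X^2] \le (1 + o(1))\mathbb{E}[X]^2$, i.e.\ that almost all ordered pairs of type-$\tau^*$ matrices in $\cB_{\tau^*}$ have disjoint column sets.

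Expand
\[
\mathbb{E}[X^2] = \sum_{k=0}^{b}\;\sum_{\substack{(B_1,B_2) \\ |\cols(B_1) \cap \cols(B_2)| = k}} q^{-n(2b-k)(1-R)},
\]
where the $k=0$ layer contributes at most $\mathbb{E}[X]^2$. A pair of type-$\tau^*$ matrices sharing $k$ columns can be encoded as an $n \times (2b-k)$ matrix whose joint row-type $\rho$ on $\Sigma^{2b-k}$ has both of its ``$b$-block'' marginals equal to $\tau^*$. A standard chain-rule manipulation then gives $H_q(\rho) \le 2H_q(\tau^*) - H_q(\tau^*|_S)$, where $S \subset [b]$ is the $k$-subset along which the stitching occurs. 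Substituting, each $k \ge 1$ layer is at most $q^{-n[H_q(\tau^*|_S) - k(H_q(\tau^*)/b - \eps)]}$ times $\mathbb{E}[X]^2$, so everything reduces to the ``Han-type'' inequality $H_q(\tau^*|_S) \ge k H_q(\tau^*)/b$ for every $S$ of size $k$; the $\eps$-slack in the rate then kills each layer and the geometric sum over $k$ remains $o(1)$.

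The main obstacle is this Han-type bound, and it is precisely where the symmetry hypothesis on $\mathcal{P}$ must enter. The bound holds for permutation-symmetric $\tau^*$ by Han's (or Shearer's) inequality, so my plan is to leverage the symmetry of $\mathcal{P}$ to guarantee that the entropy-maximizing type $\tau^*$ may be chosen permutation-symmetric (roughly, by averaging over an orbit). Without such structure, the extremal $\tau^*$ could concentrate its joint entropy on a few coordinates while leaving the $S$-marginal nearly trivial, breaking the second-moment bound---which I expect is the mechanism behind the counter-example in Appendix~\ref{app:example}. Pinning down the precise symmetry condition, weak enough to cover list-recovery and the other applications yet strong enough to force the Han-type bound, is the conceptual heart of the argument.
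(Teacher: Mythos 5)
Your proposal is correct and follows essentially the same route as the paper: a first-moment union bound for the easy direction, and a second-moment argument (you via Paley--Zygmund, the paper via Chebyshev) layered by column-overlap size, with the key entropy inequality for the permutation-symmetric entropy maximizer---your ``Han-type'' bound $H(\tau^*|_S) \ge \tfrac{k}{b}H(\tau^*)$ is precisely what the paper proves by observing that $f(r) = H(\tau) - H(\tau|_{[r]})$ is convex in $r$ with $f(b)=0$, so the two formulations are equivalent. The one technical gap you gloss over is the divisibility issue: $\tau^*$ need not be a realizable $n\times b$ type for every $n$, so the paper works instead with a sequence $\tau_n \in T^n$ converging to $\tau^*$ (and needs a continuity lemma for conditional entropies, Lemma~\ref{lem:entropyDifference}, to transfer the entropy bounds from $\tau^*$ to $\tau_n$); your ``standard type-counting gives $|\cB_{\tau^*}| = q^{nH_q(\tau^*)-o(n)}$'' implicitly assumes $\tau^*\in\cT_b^n$, which fails in general and is exactly what the convex-approximation definition of symmetry is designed to handle.
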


\paragraph{Sketch of proof: second moment method.}
Below, we sketch the proof of Theorem~\ref{thm:threshold-intro}, and explain what the assumption of ``symmetry'' means.
As noted above, it is straightforward to show that the threshold rate $R^*$ is at least $1 - \max_{\tau\in T} \frac{H_q(\tau)}{b}$, so the challenge is to show that it is not larger.  The proof of Theorem~\ref{thm:threshold-intro} uses the second-moment method to show that for any \em histogram type \em $\tau$ (we discuss histogram types more below), a random code $C$ of rate $1 - H_q(\tau)/b + \eps$ is very likely to contain some matrix $B$ with type $\tau$.  Thus, the threshold rate is at most $1 - \max_{\tau} H_q(\tau)/b$, where the maximum is over all histogram types $\tau$ that appear in $T$.  Our eventual definition of ``symmetric'' will guarantee that it is legitimate to restrict our attention to histogram types.

\paragraph{Histogram types and the meaning of ``symmetry.''}
In order to apply the second moment method, we bound the variance of $\sum_{B \sim \tau} \mathbf{1}[B \subset C]$, where the sum is over all matrices $B$ of type $\tau$.  This turns out to be possible when $\tau$ has the following symmetry property: for any $u \in \Sigma^b$, and for any permutation $\pi:[b]\to[b]$, it holds that $\tau(u) = \tau(\pi(u))$, where $\pi(u)$ denotes the corresponding coordinate permutation of $u$.  We call such a type $\tau$ a \em histogram-type \em (Definition~\ref{def:histogramType}) because the probability of a particular vector $u$ under $\tau$ depends only on the histogram of $u$.  

A first attempt to formulate a definition of ``symmetry'' for Theorem~\ref{thm:threshold-intro} is thus to require $\mathcal{P}$ to be defined by histogram types.  This results in a true statement, but unfortunately it is too restrictive: it is not hard to see that, for example, the property of not being list-decodable contains types $\tau$ that are not histogram types.  Fortunately, for the logic above to go through, it is enough to show that $T$ contains a type $\tau$ that is \em both \em a maximum entropy distribution in $T$, and is also a histogram type.  Thus, the assumption of ``symmetry'' we will use is that $T$, the collection of types represented in the property $\mathcal{P}$, forms a convex set.  Then, using the fact that $\mathcal{P}$ is defined by the inclusion of bad sets (which do not care about the order of the columns in the corresponding matrices), we can always find a maximum entropy histogram type by ``symmetrizing’’ and taking a convex combination of column permutations of some maximum entropy type $\tau$.  

One might wonder if this symmetrization step (and the resulting assumption about convexity) is necessary.  In fact, it is.  In Appendix~\ref{app:example}, we give an example of a property $\mathcal{P}$ that is given by the union of bad types, but which is not closed under such ``symmetrization''.   For this property, the threshold rate turns out to be larger than the rate predicted by Theorem~\ref{thm:threshold-intro}.

\paragraph{Taking a limit as $n \to \infty$.}
There is one more challenge to consider, which is that in the description above, we have ignored the fact that we would like our characterization to work for a sequence of values of $n$.  However, a type $\tau$ only works for certain values of $n$ due to divisibility restrictions.  To get around this, we work instead with a sequence of types $\tau_n$ which tend to $\tau$.
This leads us to our final definition of ``symmetric’’ (Definition~\ref{def:SymmetricProperty}).  Suppose that $\mathcal{P}$ is a property defined by the inclusion of size-$b$ bad sets.  Then for each $n$, there is some collection $T^n$ of bad types $\tau_n$, each of which is a distribution on $\Sigma^b$. 
We say that $\mathcal{P}$ is \em symmetric \em if the sets $T^n$ approach some convex set $T$ as $n$ goes to infinity.  The logic above then goes through to give Theorem~\ref{thm:threshold-intro}.

\paragraph{Applications to list-recovery.}
Finally, in order to apply Theorem~\ref{thm:threshold-intro}, we need to understand the maximum entropy distribution $\tau$ for our property $\mathcal{P}$.  We do this for the property $\mathcal{P}$ of not being $(p,\ell,L)$-list-recoverable in a variety of parameter regimes in Sections~\ref{sec:ThresholdComputation}, \ref{sec:hash} and \ref{sec:list-of-2}, and along the way obtain our results about list-of-two decoding and perfect hashing codes.
Finally, in Section~\ref{sec:efficient}, we use our framework to develop an algorithm to efficiently calculate the threshold rate for $(p,\ell,L)$-list-recovery.

\subsection{Related Work}

Below, we briefly survey work in a few categories that relates to our results.

\paragraph{Limitations of random codes.}
Random codes have been studied in coding theory since its inception, as they are used to prove fundamental existential results.  However, typically these constructions show that a random code is ``good'' (e.g., list-decodable, or a $q$-hash code) with high probability, not that a random code of slightly higher rate is \em not \em good.  Our work establishes a threshold rate $R^*$ for random codes, meaning that in particular it establishes both a positive and a negative result.  
There are a few works that establish limitations for random codes. For example, it is known that random codes of rate $1 - h_q(p) - \eps$ require list sizes of at least $\Omega(1/\eps)$~\cite{GuruswamiN14,LiW18}, which is a larger lower bound than that which is known for general codes. 

\paragraph{Sharp thresholds for random linear codes.} Our work takes inspiration from the techniques of \cite{MRRSW}, which recently developed sharp thresholds for random \em linear \em codes.  The starting points for that work and our work are similar: in particular, that work also classified bad sets into types based on their row distribution.  However, as discussed above, the situation for random linear codes is different than the situation for random codes, because for random linear codes looking at the entropy-maximizing distribution $\tau$ in a property is not enough: one has to look at suitable projections of these distributions $\tau$.  One of the contributions of this work is showing that for random codes, there are no such complications, at least for suitably symmetric properties.  The entropy-maximizing distribution in a property directly determines the threshold rate.

\paragraph{List-of-two decoding.} 
As mentioned above, list-of-two decoding is a special case of $(p,L)$-list-decoding when $L=3$.
For larger $L$, a classical result known as the list-decoding capacity theorem says that a 
random code $C \subseteq \Sigma^n$ of rate $R \leq 1 - h_{q}(p) - \eps$, where $q = |\Sigma|$, is $(p, O(1/\eps))$ list-decodable with high probability; while if $R \geq 1 - h_{q}(p) + \eps$, then no code of rate $R$ is $(p,L)$ list-decodable for any $L = o(2^n)$.  Thus, for large enough list sizes, the threshold rate for list-decoding is $R^* = 1 - h_q(p)$.

While the list-size $L$ is often viewed as a growing parameter in the list-decoding literature, it is also interesting to consider fixed values of $L$. A natural first step is to consider list-of-two decoding, which in our notation is $(p,3)$-list-decodability.  We focus on the case where $q=2$.  Many prior works have studied the best rate (in terms of $p$) at which list-of-two decoding is possible.
First of all, Blinovsky~\cite{blinovsky1986bounds} demonstrated that in order to have positive rate $(p,3)$-list-decodable codes for infinitely many block lengths $n$, it is necessary and sufficient to have $p<1/4$. 
Moreover, \cite{blinovsky1986bounds} gave bounds on this best rate; these bounds were further improved in \cite{ashikhmin2000new}.  Results concerning list-of-2 decoding in the zero-rate regime (that is, where the rate tends to zero as $n$ tends to infinity) have also been obtained in, e.g.~\cite{Elias91, alon2018list}.  The works summarized here have focused on the \em best \em possible rate for list-of-two-decoding.  In contrast, our work focuses on random codes in particular, with the goal of being able to pin down the threshold rate very precisely.  Our result shows that random codes are \em not \em the best for list-of-two decoding.  

\subsection{Future directions and open questions}

We have given a characterization of the threshold rate for a rich class of properties to be satisfied by a random code, and we have focused on list-recovery as an example property.  We hope that this framework, and its potential extensions, will be useful more broadly.  We mention a few open questions and directions for future research.

\paragraph{Extending the zero-rate regime?} Due to the list-recovery capacity theorem, it is known that when $p \geq 1-\ell/q$, there are no positive rate $(p,\ell,L)$-list-recoverable codes over an alphabet of size $q$, for any (subexponential) list size $L$. However, for small values of $L$, it is natural to expect that even for values of $p$ slightly less than $1-\ell/q$, there are still no positive-rate $(p,\ell,L)$-list-recoverable codes. For example, if we want a positive rate $(p,2,3)$-list-recoverable code over an alphabet of size $5$, must we have $p\leq 0.599$? We show that this is the case for ``most'' codes: more precisely, for any integers $\ell,L$ there exists a $p^* < 1-\ell/q$ such that a positive rate random code is $(p,\ell,L)$-list-recoverable with high probability if and only if $p<p^*$ (see Remark~\ref{rem:zero-rate-regime}). It would be interesting to determine whether or not this holds for \emph{all} codes: namely, can we prove that a $(p,\ell,L)$-list-recoverable code must have zero rate if $p\geq p^*$ for the same value of $p^*$? 

\paragraph{Other properties of random codes?} We have focused on list-recovery and its special cases as examples of symmetric properties.  What other examples are there?  Could we extend our work beyond symmetric properties?  As the example in Appendix~\ref{app:example} shows, extending our results beyond symmetric properties would require a more complicated expression than the one in Theorem~\ref{thm:threshold-intro}.

\paragraph{Other ensembles of codes?}  Our framework draws inspiration from the work \cite{MRRSW} that develops a similar framework for random linear codes.  What other random ensembles of codes are amenable to such a framework?  Is there some meta-framework that would encompass more of these?  A potential starting point would be to study \emph{pseudolinear} codes~\cite{guruswami2004list,guruswami2001linear} which in some sense interpolate between uniformly random codes (as we study in this work) and random linear codes (as addressed by \cite{MRRSW,GLMRSW20}).  It would be also be interesting to study random (and random linear) codes in the rank-metric~\cite{ding,GuruswamiR18}. Random subspaces of Euclidean space are of significant interest in many areas including compressed sensing, dimensionality reduction, and Euclidean sections, and it would be interesting to investigate to what extent, if at all, the frameworks in \cite{MRRSW,GLMRSW20} or this work might apply to these settings.   

\paragraph{Sharp thresholds for list-of-$(L-1)$ decoding?}  We used our framework to compute the threshold for list-of-two decoding (e.g., $L=3$), and we believe that our techniques could be used for pinning down the threshold rate, in terms of $L$, for any $L$.  While this is known asymptotically for large $L$~\cite{GuruswamiN14,LiW18}, the precise value remains open for constant $L > 3$.

\subsection{Organization}
In Section~\ref{sec:Preliminaries}, we introduce notation, and also set up the definitions we need about types, thresholds, properties, and various notions of symmetry.  We also introduce (non-)list-recoverability as a property, and prove in Corollary~\ref{cor:ListRecoveryIsAnAmazingProperty} that it is symmetric.

In Section~\ref{sec:ThersholdCharacterization}, we state and prove Theorem~\ref{thm:threshold}, the formal version of the characterization theorem (Theorem~\ref{thm:threshold-intro} above).  
At the end of Section~\ref{sec:ThersholdCharacterization}, we begin to apply Theorem~\ref{thm:threshold} to list-recovery, and in particular define several notions we will need to analyze list recovery in the subsequent sections.

In the remaining sections, we specialize to list-recovery.
In Section~\ref{sec:ThresholdComputation}, we develop bounds on the threshold rate $R^*$ for list-recovery that are tight when $(q\log L)/L$ is small. In Section~\ref{sec:hash}, we compute the threshold rate $R^*$ exactly for zero-error list-recovery (that is, when $p=0$), and use this to compute the threshold rate for perfect hashing.  In Section~\ref{sec:list-of-2}, we compute the threshold rate $R^*$ for list-of-two decoding (e.g., list-recovery when $\ell=1$ and $L=3$), and use this to quantify the gap between random codes and random linear codes for list-of-two decoding.  Finally, in Section~\ref{sec:efficient}, we give an efficient algorithm to compute the threshold rate.
	
	\section{Preliminaries}\label{sec:Preliminaries}
	
First, we fix some basic notation.  Throughout, we consider codes $C\subseteq \Sigma^n$ of block length $n$ over an alphabet $\Sigma$, where $|\Sigma|=q$.   When we use $\log(x)$ without an explicit base, we mean $\log_2(x)$.  We use $H_q$ to denote the base-$q$ entropy: for a distribution $\tau$, 
\[ H_q(\tau) := -\sum_x \tau(x) \log_q(\tau(x)).\]
When $q$ is clear from context, we will use $H(\tau)$ to denote $H_q(\tau)$. 
If $u$ is a random variable distributed according to $\tau$, then we abuse notation slightly and define $H(u) := H(\tau)$.
We use $h_q(x) := x \log_q(q-1) - x \log_q(x) - (1 -x) \log_q(1-x)$ to denote the $q$-ary entropy of $x \in (0,1)$.  Again, when $q$ is clear from context we will use $h(x)$ to denote $h_q(x)$.

For a vector $x \in \Sigma^k$ and $I \subseteq [k]$, we use $x_I$ to refer to the vector $(x_i)_{i \in I} \in \Sigma^I$.  Given a vector $u \in \Sigma^k$ and a permutation $\pi:[k] \to [k]$, we let $\pi(u) \in \Sigma^k$ denote the corresponding coordinate permutation of $u$.

Given distributions $\tau,\mu$ on the same finite set, we define their $\ell_\infty$-distance by 
\[
    d_\infty(\tau,\mu) := \max_{x}|\tau(x)-\mu(x)| \ .
\]
Given a set of distributions $T$, we define the $\ell_\infty$ distance from $\mu$ to $T$ by 
\[
    d_\infty(\mu,T) := \inf_{\tau \in T}d_\infty(\mu,\tau) \ .
\]






\subsection{Basic notions}

As mentioned in the introduction, we will organize our ``bad'' sets into matrices.  We formalize this with the following two definitions.

\begin{definition}[Matrices with distinct columns]
Let $\dis{\Sigma}{n\times b}$ denote the collection of all matrices $B\in \Sigma^{n\times b}$ such that each column of $B$ is distinct.
\end{definition}

\begin{definition}[Subsets as matrices]
Let $C\subseteq \Sigma^n$ be a code, and let $B \in \Sigma^{n\times b}$ be a matrix. We write $B\subseteq C$ to mean that each column of $B$ is an element of $C$. If $A\subseteq \Sigma^n$, let $\cB_A\subseteq \Sigma^{n\times |A|}$ denote the collection all matrices $B\in \dis{\Sigma}{n\times |A|}$ such that the columns of $B$ are the elements of $A$.
\end{definition}

For completeness, we reiterate our definition of a random code from the introduction.
\begin{definition}[Random code]
Let $\Sigma$ be a finite set with $q:=|\Sigma|\ge 2$. For $n\in \N$ and $R\in [0,1]$, let $\CRC^n(R)$ denote an \emph{expected-rate $R$ random code (over the alphabet $\Sigma$)} $C\subseteq \Sigma^n$. Namely, for each $x\in \F_q^n$ we have $\PR{x\in C} = q^{-n(1-R)}$, and these events are independent over all $x$.
\end{definition}

We record a useful fact about random codes, which is the probability that any particular matrix $B$ is contained in one.
\begin{fact}[Probability that a random code contains a matrix]\label{fact:ProbabilityContainedInMatrix}
Let $B\in \Sigma^{n\times b}$. Then, $$\Pr\left[B\subseteq \CRC^n(R) \right] = q^{-n(1-R)t},$$
where $t$ is the number of distinct columns in $B$.
\end{fact}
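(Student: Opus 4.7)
The plan is to reduce the event $B \subseteq \CRC^n(R)$ to the intersection of the $t$ independent inclusion events for the distinct columns of $B$, and then multiply probabilities.

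First I would let $c_1, \ldots, c_t \in \Sigma^n$ denote the $t$ distinct columns of $B$ (ignoring repetitions). By the definition of $B \subseteq C$, the event that every column of $B$ lies in $C$ is logically the same as the event that each of the distinct vectors $c_1, \ldots, c_t$ lies in $C$, since repeated columns impose no additional condition. Hence
\[
    \Pr\!\left[B \subseteq \CRC^n(R)\right] \;=\; \Pr\!\left[c_1 \in C \text{ and } \cdots \text{ and } c_t \in C\right].
\]

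Next I would invoke the definition of $\CRC^n(R)$: the events $\{x \in C\}$ for distinct $x \in \Sigma^n$ are mutually independent, each with probability $q^{-n(1-R)}$. Since $c_1, \ldots, c_t$ are pairwise distinct by construction, the corresponding inclusion events are independent, so the joint probability factors as a product of $t$ identical terms, giving
\[
    \Pr\!\left[B \subseteq \CRC^n(R)\right] \;=\; \prod_{i=1}^{t} \Pr[c_i \in C] \;=\; \left(q^{-n(1-R)}\right)^{t} \;=\; q^{-n(1-R)t},
\]
as claimed. There is no real obstacle here; the only subtlety is the initial observation that one must collapse repeated columns before applying independence, since applying independence naively across all $b$ columns would be incorrect when some columns coincide. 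Collapsing to the $t$ distinct columns handles this cleanly.
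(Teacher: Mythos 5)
Your proof is correct, and since the paper states Fact~\ref{fact:ProbabilityContainedInMatrix} without proof, there is no alternative argument to compare against; what you give is the natural (and essentially unique) justification, correctly collapsing repeated columns to the $t$ distinct ones before invoking independence of the inclusion events in $\CRC^n(R)$.
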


We study (noisy) list-recovery, which generalizes both the list-decoding and perfect hashing examples mentioned in the introduction. 
We repeat the definition, so that we may formally define a ``bad'' matrix for list-recovery.

\begin{definition}[Noisy list-recovery]
Let $p\in [0,1]$, $1\le \ell\le q$, and $L\in N$. Say that a matrix $B\in \dis{\Sigma}{L\times n}$ is \emph{$(p,\ell,L)$-bad} for $(p,\ell,L)$-list-recovery if there exist sets $K_i\subseteq \Sigma$ ($1\le i\le n$), each of size $\ell$, such that for every $1\le j\le L$,
\begin{equation}\label{eq:badMatrix}
    \PROver{i\sim [n]}{B_{i,j}\notin K_i} \le p.
\end{equation}

A code $C\subseteq \Sigma^n$ is \emph{$(p,\ell,L)$-list-recoverable} if it does not contain a $(p,\ell,L)$-bad matrix.
\end{definition}

\subsection{Monotone-increasing properties and thresholds}

We study the threshold rate $R^*$ for random codes to satisfy certain properties.  This was discussed informally in the introduction and the definitions below formalize what ``threshold rate'' means.
\begin{definition}[Monotone-increasing property]
A code property $\cP$ is \emph{monotone-increasing} if given a code $C$ satisfying $\cP$, it holds that every code $C'$ such that $C\subseteq C'$ also satisfies $\cP$.
\end{definition}

For example, the property of being \em not \em $(p,\ell,L)$-list-recoverable (that is, the property of containing a $(p,\ell,L)$-bad matrix) is a monotone-increasing property.

\begin{definition}[Minimal-set]
Let $P_n$ be a monotone-increasing property of length-$n$ codes. A set $A\subseteq \Sigma^n$ is a \emph{minimal element} of $P_n$ if $A$ satisfies $P_n$ but no strict subset of $A$ satisfies $P_n$. The \emph{minimal set} for $P_n$ is the collection of matrices $$\bigcup_{A\text{ is a minimal element of }  P_n} \cB_{A}.$$
\end{definition}

For example, the minimal set for the property $P_n$ of being \em not \em $(p,\ell,L)$-list-recoverable is the set of $(p, \ell, L)$-bad matrices.

Note that a code satisfies $P_n$ if and only if it contains some matrix belonging to the minimal set of $P_n$. If $\cP$ is a monotone-increasing property of codes, we denote its associated \emph{threshold rate} by $\TRC^n(\cP)$. This is defined as
\begin{align*}
\sup\left\{R\in [0,1] : \Pr\left[\CRC^n(R)\text{ satisfies } \cP\right] \le \frac 12\right\}
\end{align*}

if there is such an $R$, and $0$ otherwise. 
\begin{remark}
If $\cP$ is monotone-increasing then the function $\Pr\left[\CRC^n(R) \text{satisfies }\cP\right]$ is monotone-increasing in $R$. This can be proved by a standard coupling argument, akin to \cite[Thm.\ 2.1]{Bollobas01}.
\end{remark}

\begin{definition}[Sharpness for random codes]
A monotone-increasing property $\cP$ is \emph{sharp for random codes} if 
$$\lim_{n\to \infty}\Pr\left[\CRC^n\left(\TRC^n(\cP)-\eps\right)\text{ satisfies } \cP\right] = 0$$
and
$$\lim_{n\to \infty}\Pr\left[\CRC^n\left(\TRC^n(\cP)+\eps\right)\text{ satisfies } \cP\right] = 1$$
for every $\eps > 0$.
\end{definition}

\subsection{Local and row-symmetric properties}

As discussed in the introduction, we study properties that can be written as a union of ``types,'' where each type corresponds to a row distribution $\tau$ of a matrix $M$.  The following definitions make this notion precise.

\begin{definition}[Row-permutation invariant collection of matrices]
A collection of matrices $\cB\subseteq \Sigma^{n\times b}$ is \emph{row-permutation invariant} if, given a matrix $B\in \cB$, every row permutation of $B$ (that is, a matrix resulting from applying the same coordinate permutation to each column of $B$) also belongs to $\cB$.
\end{definition}

\begin{definition}[Local and row-symmetric properties]
Let $\cP = \{P_n\}_{n\in \N}$ be a monotone-increasing property, and let $M_n$ denote the minimal set of $P_n$.
\begin{itemize}
    \item If there exists some $b\in \N$ such that $M_n\subseteq \Sigma^{n\times b}$ for every $n$, we say that $\cP$ is \emph{$b$-local}.
    \item If every $M_n$ is row-permutation invariant, we say that $\cP$ is \emph{row-symmetric}. 
\end{itemize}
\end{definition}

\begin{remark}
Every monotone-increasing property is trivially \emph{column-symmetric}, in the sense that permuting the columns of a matrix in $M_n$ results in another matrix in $M_n$. This naturally reflects the fact that containment of a matrix does not depend on the ordering of the columns, and follows immediately from the definition of a minimal set. 
\end{remark}

Let $B \in \Sigma^{n\times b}$, and consider the collection $\cB$ of all row-permutations of $B$. Let $\tau_B$ denote the \emph{row-distribution} of $B$. That is, $\tau$ is the probability distribution, over $\Sigma^b$, of the row $B_{i,\star}$, where $i$ is sampled uniformly from $[n]$. Observe that every matrix in $\cB$ has the same row-distribution as $B$. Moreover, $\cB$ can be characterized as the set of all matrices with the row distribution $\tau_B$. These observations motivate the following definitions.

\begin{definition}[Type of a matrix]
Let $B\in \Sigma^{n\times b}$. We define its \emph{type} $\tau_B$ as the distribution of a uniformly random row of $B$. That is, $\tau_B$ is the distribution over $\Sigma^b$, such that
$$\tau_B(x) = \frac{|\{i\in [n]\mid B_i=x\}|}n$$
for every $x\in \Sigma^b$. 
Let $$\cT_b^n = \{\tau_B\mid B\in \dis{\Sigma}{n\times b}\}$$ denote the set of all possible types of $n\times b$ matrices with distinct columns.
Given $\tau \in \cT_b^n$, we denote
$$M_\tau = \{B\in \Sigma^{n\times b}\mid \tau_B=\tau\}.$$
\end{definition}
\begin{remark}
The type of a matrix $B\in \Sigma^{n\times b}$ determines whether $B\in \dis{\Sigma}{n\times b}$. Therefore, for $\tau\in \cT_b^n$,
$$M_\tau = \{B\in \dis{\Sigma}{n\times b}\mid \tau_B=\tau\}.$$
\end{remark}
The following fact now follows from the above discussion.
\begin{fact}[Decomposition of a row-permutation invariant collection]\label{fact:TypeDecomposition}
Let $\cB\subseteq \Sigma^{n\times b}$ be a row-permutation invariant collection. Then, there exists a set of types $T\subseteq \cT_{n,b}$ such that $$\cB = \bigcup_{\tau\in T} M_\tau.$$
\end{fact}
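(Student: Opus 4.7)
The plan is a direct verification in two inclusions, using the definition $T := \{\tau_B : B \in \cB\}$ and establishing $\cB = \bigcup_{\tau \in T} M_\tau$.

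The easy inclusion is $\cB \subseteq \bigcup_{\tau \in T} M_\tau$. Given any $B \in \cB$, its type $\tau_B$ lies in $T$ by definition of $T$, and $B \in M_{\tau_B}$ by definition of $M_{\tau_B}$. This is immediate and requires no hypothesis on $\cB$.

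The key inclusion is $\bigcup_{\tau \in T} M_\tau \subseteq \cB$, and this is where row-permutation invariance is used. Fix $\tau \in T$ and pick some $B \in \cB$ with $\tau_B = \tau$ (which exists by definition of $T$). I would then prove the small auxiliary claim that any two matrices $B, B' \in \Sigma^{n \times b}$ with $\tau_B = \tau_{B'}$ are row-permutations of one another: indeed, the equality $\tau_B = \tau_{B'}$ means that every vector $x \in \Sigma^b$ appears as a row of $B$ the same number of times as it appears as a row of $B'$ (namely $n \cdot \tau_B(x)$ times). Thus the multisets of rows of $B$ and $B'$ coincide, and so there is a coordinate permutation $\pi : [n] \to [n]$ with $B'_{i,\star} = B_{\pi(i),\star}$ for every $i \in [n]$. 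Applying row-permutation invariance of $\cB$ to $B$ and $\pi$ yields $B' \in \cB$, as desired.

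Combining the two inclusions proves the fact. There is no genuine obstacle here — the only nontrivial content is the observation that the row-distribution $\tau_B$ determines the multiset of rows of $B$, hence $B$ up to row permutations, which is immediate from the definition of type. The hypothesis of row-permutation invariance of $\cB$ is used exactly once, in the final step of the key inclusion.
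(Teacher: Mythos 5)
Your proof is correct and takes essentially the same route as the paper, which leaves this Fact as an immediate consequence of the observation (made in the preceding discussion) that two matrices have the same row-distribution if and only if they are row-permutations of one another. Your write-up simply makes explicit the bookkeeping — choosing $T := \{\tau_B : B \in \cB\}$, checking both inclusions, and noting where row-permutation invariance is invoked — that the paper leaves implicit.
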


Note that a type in $\cT_b^n$ is defined by the number of occurrences of each of $|\Sigma^b|$ possible rows, in a matrix consisting of $n$ rows. In particular, each row occurs between $0$ and $n$ times. Thus,
\begin{equation}\label{eq:NumberOfTypesBound}
|\cT^n_{b}| \le (n+1)^{|\Sigma^b|} = (n+1)^{q^b}.
\end{equation}
Crucially for our purposes, this upper bound is polynomial in $n$. 

\subsection{Symmetric properties and convex approximations}

\begin{definition}
 Let $\cT_b$ denote the simplex of all probability distributions over $\Sigma^b$.
\end{definition}

It is generally more convenient to work in $\cT_b$ rather than $\cT_b^n$, since the former is continuous, while the latter is discrete and involves certain divisibility conditions. This motivates the following definition.

\begin{definition}[Permutation-closed type sets]
A set $T\subseteq \cT_b$ is called \emph{permutation-closed} if for every $\tau \in T$ and every permutation $\pi:[b]\to[b]$, the distribution of $\pi(u)$ (where $u\sim \tau$) also belongs to $T$. 
\end{definition}

\begin{definition}[Approximating sets of types]\label{def:Approximation}
Fix $b\in \N$. Let $\{T^n\}_{n\in \N}$ be a sequence of sets of types, such that $T^n \subseteq \cT_b^n$. A (topologically) closed and permutation-closed set $T\subseteq \cT_b$ is an \emph{approximation} for $\{T^n\}_{n\in \N}$ if $T^n\subseteq T$ for every $n$, and 
\begin{equation}\label{eq:GoodApproximation}
    \lim_{n\to \infty} \max_{\tau \in T} d_\infty(\tau, T^n) = 0.
\end{equation}
\end{definition}

\begin{definition}[Symmetric property and convex approximation]\label{def:SymmetricProperty}
Let $\cP = \{P_n\}_{n\in \N}$ be a $b$-local, row-symmetric, monotone-increasing property. Due to Fact~\ref{fact:TypeDecomposition}, for every $n$ there exists a set $T_n\subseteq \cT_{n,b}$ such that the minimal set of $P_n$ is $\bigcup_{\tau\in T_n} M_\tau$. If the sequence $\{T_n\}_{n\in N}$ has a convex approximation $T$, we say that $T$ is a \emph{convex approximation} for $\cP$. In this case, we say that $\cP$ is \emph{symmetric}.
\end{definition}

\subsection{Non-list-recoverability as a property}\label{sec:noisylistrecisgood}
Our motivating property is that of being \em not \em list-recoverable.  In this section, we show that non-$(p,\ell,L)$-list-recoverability is a symmetric property, and we define the convex set $T_{p,\ell,L}$ that is a convex approximation for it.

Fix $p\in [0,1]$, $1\le \ell\le q$ and $L\in \N$. Let $\cP = \left(P_n\right)_{n\in \N}$ denote the property of being \emph{not} $(p,\ell,L)$-list-recoverable. That is, a code $C\subseteq \Sigma^n$ satisfies $P_n$ if it contains a $(p,\ell,L)$-bad matrix. We now show that $\cP$ is a symmetric property.

Clearly, $\cP$ is monotone-increasing, and its minimal set is exactly the set of $(p,\ell,L)$-bad matrices, which we denote $M_n\subseteq \dis{\Sigma}{n\times L}$. It follows immediately that $\cP$ is $L$-local. Furthermore since the left-hand side of \eqref{eq:badMatrix} is invariant to row-permutations of $B$, the collection $M_n$ is row-permutation invariant, and so $\cP$ is row-symmetric. 

Fact \ref{fact:TypeDecomposition} says that we can write $M_n = \bigcup_{\tau\in T_{p,\ell,L}^n} M_\tau$ for some $T_{p,\ell,L}^n\subseteq \cT_L^n$. Indeed, \eqref{eq:badMatrix} yields the following description of $\cT_{p,\ell,L}^n$: A type $\tau\in \cT_{L}^n$ belongs to $T_{p,\ell,L}^n$ if and only if there exists a distribution $\rho$ over $\Sigma^{L} \times {\binom{\Sigma}\ell}$ such that, given $(u,K)\sim \rho$, the following holds:
\begin{enumerate}
    \item\label{enum:rhoMarginalIsTau} The distribution of $u$ is $\tau$.
    \item\label{enum:rhoSmallRadius} For every $1\le j\le L$, it holds that $\PR{u_j\notin K}\le p$.
    \item\label{enum:rhoDivisibility} $n\cdot \rho((u,K))\in\N$ for every $u$ and $K$. 
\end{enumerate}
To see this, let $\rho$ be the joint distribution $(B_i,K_i)$ for $i$ uniformly sampled from $[n]$, where $B$ and $K$ are as in \eqref{eq:badMatrix}. Note that $\rho$ must satisfy the three conditions above. In the other direction, it is not hard to see that any such distribution $\rho$ as above gives rise to a matrix of type $\tau$, satisfying \eqref{eq:badMatrix}.

We next construct a convex approximation for $\cP$. Let $T_{p,\ell,L}$ denote the set of all types $\tau\in \cT_L$ for which there exists a distribution $\rho$ satisfying Conditions \ref{enum:rhoMarginalIsTau} and \ref{enum:rhoSmallRadius}, but not necessarily Condition~\ref{enum:rhoDivisibility}: 
\begin{definition}
Let $1\le \ell\le 1$, $L\in \N$ and $0\le p\le 1$. Let $\tau$ be a distribution over $\Sigma^L$. We say that $\tau$ belongs to the set $T_{p,\ell,L}$ if there exists a distribution $\rho$ over $\Sigma^L\times \binom{\Sigma}L$ such that:
\begin{enumerate}\label{def:TpellL}
    \item \label{enum:rhoMarginalIsTauForT}
    If $(u,K)\sim \rho$ then the vector $u$ is $\tau$-distributed.
    \item \label{enum:rhoSmallRadiusForT}
    For every $1\le j\le L$ it holds that
    \begin{equation}\label{eq:rhoSmallRadiusInequality}
    \PROver{(u,K)\sim \rho}{u_j\notin K}\le p.
\end{equation}
\end{enumerate}
\end{definition}

Clearly, $T_{p,\ell,L}^n\subseteq T_{p,\ell,L}$ for all $n \in \mathbb{N}$. It is also immediate to verify that $T_{p,\ell,L}$ is permutation-closed. 

\begin{lemma}
The set $T_{p,\ell,L}$ is convex.
\end{lemma}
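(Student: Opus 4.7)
The plan is a direct convex-combination argument: given two distributions in $T_{p,\ell,L}$, mix their witnessing $\rho$-distributions with the same weights to produce a witness for the mixed $\tau$.

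More concretely, fix $\tau_0, \tau_1 \in T_{p,\ell,L}$ and $\lambda \in [0,1]$, and let $\tau := \lambda \tau_0 + (1-\lambda)\tau_1$. By Definition~\ref{def:TpellL}, we may pick distributions $\rho_0, \rho_1$ on $\Sigma^L \times \binom{\Sigma}{\ell}$ each satisfying Conditions~\ref{enum:rhoMarginalIsTauForT} and \ref{enum:rhoSmallRadiusForT} relative to $\tau_0, \tau_1$ respectively. Define $\rho := \lambda \rho_0 + (1-\lambda)\rho_1$, which is again a probability distribution on $\Sigma^L \times \binom{\Sigma}{\ell}$ since the simplex of distributions on a finite set is convex.

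I would then verify the two conditions: for Condition~\ref{enum:rhoMarginalIsTauForT}, the marginal on the first coordinate is linear in $\rho$, so the $u$-marginal of $\rho$ is $\lambda \tau_0 + (1-\lambda)\tau_1 = \tau$. For Condition~\ref{enum:rhoSmallRadiusForT}, probabilities of events are also linear in the underlying distribution, so for each $j$,
\[
    \Pr_{(u,K)\sim \rho}[u_j \notin K] = \lambda \Pr_{(u,K)\sim \rho_0}[u_j \notin K] + (1-\lambda)\Pr_{(u,K)\sim \rho_1}[u_j \notin K] \le \lambda p + (1-\lambda)p = p.
\]
Hence $\rho$ witnesses that $\tau \in T_{p,\ell,L}$, and $T_{p,\ell,L}$ is convex.

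There is essentially no obstacle here: the definition of $T_{p,\ell,L}$ is stated as an existential over an auxiliary distribution $\rho$, and both defining conditions are affine in $\rho$, so convexity is inherited for free. The only thing worth double-checking is that one is genuinely allowed to mix $\rho_0$ and $\rho_1$ (i.e.\ that no divisibility or integrality constraint has snuck in), but this is precisely the point of defining $T_{p,\ell,L}$ by dropping Condition~\ref{enum:rhoDivisibility} from the discrete version $T_{p,\ell,L}^n$.
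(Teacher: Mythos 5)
Your proof is correct and is essentially identical to the paper's: both mix the witnessing distributions $\rho_0,\rho_1$ with the same weights used to mix $\tau_0,\tau_1$, then observe that the marginal and probability constraints are affine in $\rho$. You simply spell out the final verification steps that the paper calls ``straightforward.''
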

\begin{proof}
Let $\tau_0, \tau_1\in T_{p,\ell,L}$. Let $t\in [0,1]$ and let $\tau_t$ denote the mixture distribution $(1-t)\tau_0 + t\tau_1$. Let $\rho_0$ and $\rho_1$ be distributions over $\Sigma^L\times \binom{\Sigma}\ell$, satisfying Conditions \ref{enum:rhoMarginalIsTauForT} and \ref{enum:rhoSmallRadiusForT} for $\tau_0$ and $\tau_1$, respectively. Let $\rho_t$ be the mixture distribution $(1-t)\rho_0 + t\rho_1$. It is straightforward to verify that $\rho_t$ satisfies Conditions \ref{enum:rhoMarginalIsTauForT} and \ref{enum:rhoSmallRadiusForT} with respect to $\tau_t$. Hence, $\tau_t\in T_{p,\ell,L}$.
\end{proof}

The following lemma, proven in Appendix \ref{app:ProofsForCharacterization} shows that $T_{p,\ell,L}$ satisfies \eqref{eq:GoodApproximation}. Namely, every type in $T_{p,\ell,L}$ can be realized with low error as a type from $T_{p,\ell,L}^n$, for large enough $n$.
\begin{restatable}{lemma}{GoodApproximationForNonRecoverability}\label{lem:GoodApproximationForNonRecoverability}
  $$\lim_{n\to \infty} \sup_{\tau \in T_{p,\ell,L}} d_\infty(\tau, T_{p,\ell,L}^n) = 0.$$
\end{restatable}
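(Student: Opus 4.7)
The plan is: for an arbitrary $\tau \in T_{p,\ell,L}$ with witness distribution $\rho$ on $\Sigma^L \times \binom{\Sigma}{\ell}$, I will construct a $\tau_n \in T_{p,\ell,L}^n$ with $d_\infty(\tau, \tau_n)$ tending to $0$ at a rate uniform in $\tau$. Concretely, I will perturb $\rho$ slightly, round it to a discrete witness $\rho_n$ with entries in $\tfrac1n\mathbb{N}$, and verify that its $\Sigma^L$-marginal $\tau_n$ is close to $\tau$, satisfies the bad-event condition with parameter $p$, and arises as the row distribution of some matrix in $\dis{\Sigma}{n\times L}$ (so that $\tau_n\in\cT_L^n$).

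First, I will fix once and for all a ``safe, distinguishing'' auxiliary witness $\rho_\sigma$: a distribution on $\Sigma^L\times\binom{\Sigma}{\ell}$ with $\Pr_{\rho_\sigma}[u_j\notin K]\le p$ for every $j$ whose $\Sigma^L$-marginal $\sigma$ has a distinguishing support, meaning that for each pair $j\ne j'$ there is some $u\in\supp(\sigma)$ with $u_j\ne u_{j'}$. When $\ell\ge 2$, take $\rho_\sigma$ uniform on $\{(u,K_0):u\in K_0^L\}$ for any fixed $K_0\in\binom{\Sigma}{\ell}$: the bad-event probability is identically zero, and $K_0^L$ is distinguishing since $|K_0|\ge 2$. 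When $\ell=1$ and $p>0$, place mass $1-\epsilon$ on $((a,\ldots,a),\{a\})$ and mass $\epsilon/L$ on each $(v_j,\{a\})$ with $v_j$ differing from $(a,\ldots,a)$ only in coordinate $j$, choosing $\epsilon\le pL$. I also fix a safe pair $(u^*,K^*)$ satisfying $u^*_j\in K^*$ for all $j$.

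Next, perturb and round. With $\delta_n=n^{-1/2}$, let $\tilde\rho=(1-\delta_n)\rho+\delta_n\rho_\sigma$; by convexity of the witness condition this is a witness for $\tilde\tau=(1-\delta_n)\tau+\delta_n\sigma$, and $d_\infty(\tilde\tau,\tau)\le\delta_n$. Then define $\rho_n(u,K):=\lfloor n\tilde\rho(u,K)\rfloor/n$ for every $(u,K)$, and finally add the deficit $1-\sum_{(u,K)}\lfloor n\tilde\rho(u,K)\rfloor/n$ (which is a non-negative multiple of $1/n$ of size $O(1/n)$) to $\rho_n(u^*,K^*)$. By construction, $n\rho_n$ is integer-valued everywhere and $\rho_n$ is a probability distribution; its $\Sigma^L$-marginal $\tau_n$ satisfies $d_\infty(\tau_n,\tilde\tau)=O(1/n)$, and hence $d_\infty(\tau_n,\tau)=O(n^{-1/2})$, with implicit constants depending only on $q,\ell,L$.

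Finally, I will verify $\tau_n\in T_{p,\ell,L}^n$. The bad-event condition holds because rounding down only decreases the contribution of each non-$(u^*,K^*)$ entry and $(u^*,K^*)$ contributes zero to the event $u_j\notin K$ (by safety):
\[
    \Pr_{\rho_n}[u_j\notin K] \;\le\; \Pr_{\tilde\rho}[u_j\notin K] \;\le\; p.
\]
For the distinct-columns condition on $\tau_n$, observe that for every $(u,K)\in\supp(\rho_\sigma)$ we have $n\tilde\rho(u,K)\ge\sqrt{n}\,\rho_\sigma(u,K)$, which exceeds $1$ for all sufficiently large $n$; hence $\supp(\tau_n)\supseteq\supp(\sigma)$, so a matrix with row distribution $\tau_n$ has pairwise distinct columns precisely because $\supp(\sigma)$ is distinguishing. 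The main obstacle in this plan is the construction of the auxiliary $\rho_\sigma$; it succeeds in every case except the degenerate regime $\ell=1,\,p=0,\,L\ge 2$, in which no $(p,\ell,L)$-bad matrix can exist at all (the bad-event condition would force all $L$ columns to equal $(K_1,\ldots,K_n)$, contradicting distinctness), making the lemma trivial there.
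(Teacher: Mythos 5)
Your proof is correct and takes a genuinely different route from the paper's. The paper perturbs $\rho$ by mixing in an atomic distribution concentrated on a constant vector, and then handles the integrality constraint \emph{probabilistically}: it draws $n$ i.i.d.\ samples from the perturbed distribution and applies Hoeffding to show the empirical distribution satisfies the bad-event condition with positive probability. You instead make the integrality step \emph{deterministic}: you round $n\tilde\rho$ down and dump the residual mass onto a ``safe'' pair that cannot contribute to any bad event. Both routes need the perturbation in order to create slack, but your construction buys something extra: by choosing $\rho_\sigma$ to have a \emph{distinguishing} support you explicitly guarantee that the resulting $\tau_n$ is the row distribution of a matrix with distinct columns, so $\tau_n \in \cT_L^n$. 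The paper's auxiliary atom (a constant vector) does not distinguish any pair of coordinates, and the paper's proof never verifies membership in $\cT_L^n$ at all; if $\tau_0$ itself happens to be concentrated on constant vectors (which is permitted by the definition of $T_{p,\ell,L}$), the paper's empirical $\hat\tau$ fails the distinct-columns condition with probability one. So your argument is in fact more careful on this point. Your error estimate ($O(n^{-1/2})$ with constants depending only on $q,\ell,L$) is also uniform in $p$, unlike the paper's $O\bigl(\sqrt{\log_q L/(p^2 n)}\bigr)$, though both suffice.

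One small correction: in the excluded regime $\ell=1$, $p=0$, $L\ge 2$, you say ``the lemma is trivial,'' but in fact the lemma \emph{as stated} is false there, not vacuously true. You are right that no $(0,1,L)$-bad matrix exists, so $T_{0,1,L}^n=\emptyset$; however $T_{0,1,L}$ is still nonempty (it contains the atom on any constant vector), so $\sup_{\tau\in T_{0,1,L}} d_\infty(\tau, T_{0,1,L}^n)=\infty$. This is an edge case the paper overlooks; it does not affect Theorem~\ref{thm:threshold} (there $R^*=1$ and the ``upper'' direction is vacuous), but it should be flagged as an exclusion rather than as a triviality.
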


We record the results of this section in the following corollary.
\begin{corollary}\label{cor:ListRecoveryIsAnAmazingProperty}
Being \emph{not} $(p,\ell,L)$-list-recoverable is a symmetric property. Furthermore, $T_{p,\ell,L}$ is a convex approximation for this property.
\end{corollary}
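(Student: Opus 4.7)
The plan is to assemble the corollary from the structural observations already made throughout Section~\ref{sec:noisylistrecisgood}: essentially everything needed has been set up, and the corollary is a wrap-up statement. I would verify the two definitions in turn---``symmetric property'' (Definition~\ref{def:SymmetricProperty}) and ``convex approximation'' (Definition~\ref{def:Approximation})---and invoke Lemma~\ref{lem:GoodApproximationForNonRecoverability} for the only nontrivial ingredient.

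First I would reiterate the three structural claims already noted in the preceding text: that non-$(p,\ell,L)$-list-recoverability is monotone-increasing (adding codewords can only produce more bad matrices), is $L$-local (minimal bad matrices have exactly $L$ columns), and is row-symmetric (since inequality~\eqref{eq:badMatrix} is invariant under simultaneously permuting the rows of $B$ and the sequence $(K_1,\ldots,K_n)$). Fact~\ref{fact:TypeDecomposition} then yields the decomposition of the minimal set as $\bigcup_{\tau \in T^n_{p,\ell,L}} M_\tau$ that is implicit in the statement.

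Next I would check that $T_{p,\ell,L}$ meets the four requirements of a convex approximation. Convexity was the content of the lemma immediately preceding the corollary, and permutation-closedness is immediate by pushing forward a witnessing $\rho$ under the map $(u,K)\mapsto (\pi(u),K)$ for any permutation $\pi\colon [L]\to [L]$. For topological closedness, the plan is to describe $T_{p,\ell,L}$ as the image, under the continuous marginal map, of the compact set of distributions $\rho$ on $\Sigma^L \times \binom{\Sigma}{\ell}$ satisfying the closed inequalities~\eqref{eq:rhoSmallRadiusInequality}; the continuous image of a compact set is compact, hence closed. The containment $T^n_{p,\ell,L}\subseteq T_{p,\ell,L}$ for every $n$ was already noted, since dropping the divisibility condition on $\rho$ only enlarges the admissible set. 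The final requirement---the limit~\eqref{eq:GoodApproximation}---is exactly Lemma~\ref{lem:GoodApproximationForNonRecoverability}.

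The only non-routine step is that approximation lemma. I expect its proof (deferred to the appendix) will rest on a rounding argument: given a witnessing distribution $\rho$ for some $\tau \in T_{p,\ell,L}$, round each probability $\rho(u,K)$ to a multiple of $1/n$, producing a new distribution whose marginal on $\Sigma^L$ lies in $T^n_{p,\ell,L}$ and differs from $\tau$ in $\ell_\infty$-distance by at most $O(q^L/n)\to 0$. The mildly delicate part is ensuring that the rounded $\rho$ still satisfies the non-strict inequality~\eqref{eq:rhoSmallRadiusInequality}; this can be handled either by a greedy rounding procedure that always rounds excess mass away from the constraint, or by first shrinking $p$ by $o(1)$, rounding freely, and then observing that the constraint degrades by at most $O(q^L/n)$. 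Modulo that lemma, the corollary itself is a short bookkeeping step combining all of the pieces above.
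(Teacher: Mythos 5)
Your proof follows the paper's own route essentially step for step: verify monotone-increasing, $L$-local, and row-symmetric; invoke Fact~\ref{fact:TypeDecomposition}; check the convex-approximation conditions one by one; and cite Lemma~\ref{lem:GoodApproximationForNonRecoverability} for the limit. One small point in your favor: you explicitly address topological closedness of $T_{p,\ell,L}$ (via the continuous-image-of-compact-set argument), which the paper's exposition never spells out even though Definition~\ref{def:Approximation} requires it; your argument for it is correct. Your side speculation about how Lemma~\ref{lem:GoodApproximationForNonRecoverability} is proved is not quite what the paper does (the appendix uses a mixture with an atomic distribution plus a Hoeffding bound rather than a direct rounding argument), but since you correctly treat the lemma as a black box, this has no bearing on the validity of the corollary's proof.
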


	\section{Characterization theorem}\label{sec:ThersholdCharacterization}
	

In this section, we prove our main characterization theorem, Theorem~\ref{thm:threshold-intro}, which is formally stated below as Theorem~\ref{thm:threshold}.  Before stating and proving the theorem, we record a few useful lemmas.  

\begin{lemma}[{\cite[Lemma~2.2]{csiszar2004information}}]\label{lem:MatricesInTypeBound}
Let $\tau \in \cT_b^n$. 
Then,
$$ q^{H(\tau)n}\cdot n^{-O_{q,b}(1)}\le |M_\tau|\le q^{H(\tau)n}.$$
\end{lemma}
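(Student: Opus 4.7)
The plan is to reduce to the standard method-of-types estimate by viewing a matrix $B \in \Sigma^{n \times b}$ as the length-$n$ sequence of its rows, each of which is an element of the alphabet $\Sigma^b$ of size $m := q^b$. Under this identification, $M_\tau$ is precisely the type class of sequences in $(\Sigma^b)^n$ whose empirical distribution is $\tau$, and $|M_\tau|$ equals the multinomial coefficient $\binom{n}{\{n\tau(x)\}_{x \in \Sigma^b}}$. The claim is exactly \cite[Lemma~2.2]{csiszar2004information}, but I outline the two-sided argument for completeness.

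For the upper bound, I would equip $(\Sigma^b)^n$ with the product distribution $\tau^{\otimes n}$, so each row is drawn independently from $\tau$. For any $B \in M_\tau$, the probability of sampling $B$ factors as $\prod_{x\in \Sigma^b} \tau(x)^{n\tau(x)}$, which by the definition of $H(\tau)$ equals $q^{-n H(\tau)}$; crucially, this value is identical for every $B\in M_\tau$. Summing over $M_\tau$ and using that the total mass is at most $1$ yields $|M_\tau| \le q^{n H(\tau)}$.

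For the lower bound I would combine two observations. First, by \eqref{eq:NumberOfTypesBound} the set $\cT_b^n$ has at most $(n+1)^{q^b}$ elements. Second, and this is the crux, among all $\sigma \in \cT_b^n$ the product measure $\tau^{\otimes n}$ assigns the largest mass to $M_\tau$ itself, i.e.\ to its own type class. This latter claim can be proved by an explicit comparison: moving a unit of mass in $\sigma$ from an over-represented coordinate toward the true value $n\tau(x)$ can only increase the product $|M_\sigma| \cdot \prod_x \tau(x)^{n\sigma(x)}$, as one checks by a single-ratio computation between adjacent multinomial coefficients. Combining the two observations, we conclude that $\tau^{\otimes n}(M_\tau) \ge (n+1)^{-q^b}$, and dividing by the common per-sequence probability $q^{-n H(\tau)}$ yields $|M_\tau| \ge q^{n H(\tau)} \cdot (n+1)^{-q^b}$. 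Since $q$ and $b$ are constants independent of $n$, the factor $(n+1)^{q^b}$ is $n^{O_{q,b}(1)}$, giving the stated lower bound.

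The main obstacle is the self-maximization property — that $\tau$ itself is the most likely type under $\tau^{\otimes n}$ — which, while standard, does require the neighboring-multinomial ratio computation sketched above. An alternative route that bypasses this step is to estimate $\binom{n}{\{n\tau(x)\}_x}$ directly using Stirling's approximation; the resulting bookkeeping over the $m = q^b$ factorials is routine and yields the same polynomial loss in $n$. Either route suffices for our purposes, since we only need a polynomial-in-$n$ gap between the two bounds.
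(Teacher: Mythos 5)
Your proof is correct, and it is the standard method-of-types argument. The paper itself does not supply a proof: this lemma is cited directly from Csisz\'ar's tutorial (\cite[Lemma~2.2]{csiszar2004information}), so there is no internal argument to compare against. Your reconstruction — identifying a matrix with its sequence of rows over the alphabet $\Sigma^b$, using the fact that every element of $M_\tau$ has probability exactly $q^{-nH(\tau)}$ under the product measure $\tau^{\otimes n}$, and invoking the self-maximization of a distribution's own type class together with the polynomial count of types — is precisely the classical argument from that reference (and from standard treatments such as Csisz\'ar–K\"orner or Cover–Thomas, Theorem 11.1.4). One small bookkeeping note: you cite \eqref{eq:NumberOfTypesBound}, which in the paper bounds only $|\cT_b^n|$ (types with distinct columns), whereas the pigeonhole step requires a bound on the number of \emph{all} row-distributions on $\Sigma^b$ with denominator $n$; but the same counting gives $(n+1)^{q^b}$ for that larger set too, so the argument stands. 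Either route you mention for the self-maximization step (the neighboring-multinomial ratio or direct Stirling estimates) is standard and suffices.
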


\begin{lemma}\label{lem:SetSizeUpperBound}
Let $M\subseteq \Sigma^{n\times b}$. Then, 
$$|M|\le (n+1)^{q^b}\cdot q^{n\cdot \max_{B\in M}H(\tau_B)}.$$
\end{lemma}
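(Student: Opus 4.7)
The plan is to decompose $M$ according to the types of its members and then apply the two-sided count $|M_\tau|\le q^{H(\tau)n}$ together with the bound on the total number of types.

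First I would write
\[
M=\bigsqcup_{\tau\in\cT_b^n:\, M\cap M_\tau\ne\emptyset}(M\cap M_\tau),
\]
since every matrix $B\in\Sigma^{n\times b}$ has a well-defined type $\tau_B\in\cT_b^n$, and matrices of distinct types lie in disjoint sets $M_\tau$. For each nonempty piece $M\cap M_\tau$ I would bound its size trivially by $|M_\tau|$ and then invoke the upper half of Lemma~\ref{lem:MatricesInTypeBound} to get $|M\cap M_\tau|\le q^{H(\tau)n}$. Letting $\tau^\star\in\argmax_{B\in M}H(\tau_B)$, each summand is then bounded by $q^{H(\tau^\star)n}=q^{n\cdot\max_{B\in M}H(\tau_B)}$.

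Next I would bound the number of nonempty summands by the total number of types, using \eqref{eq:NumberOfTypesBound}, namely $|\cT_b^n|\le(n+1)^{q^b}$. Combining these observations,
\[
|M|\le |\cT_b^n|\cdot q^{n\cdot\max_{B\in M}H(\tau_B)}\le(n+1)^{q^b}\cdot q^{n\cdot\max_{B\in M}H(\tau_B)},
\]
and I would absorb the factor $(1+1/n)^{q^b}$ into the leading $n^{q^b}$ (which costs only an absolute constant, and in any case the weaker form $(n+1)^{q^b}$ already suffices for all subsequent applications; for $n\ge1$ this is at most $n^{q^b}$ up to a constant that can be absorbed, or one simply interprets the $n^{q^b}$ factor as being $\le(n+1)^{q^b}$).

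There is no real obstacle here: both inputs, namely the per-type count from Lemma~\ref{lem:MatricesInTypeBound} and the type-count bound \eqref{eq:NumberOfTypesBound}, are already in hand, so the proof is a one-line computation after the decomposition step. The only minor subtlety is the book-keeping of the polynomial factor $n^{q^b}$ versus $(n+1)^{q^b}$, which is inconsequential for how the lemma will later be used (namely, to argue that the polynomial prefactor is dominated by the exponential $q^{H(\tau)n}$ when passing to the large-$n$ threshold regime).
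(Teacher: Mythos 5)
Your proof is essentially identical to the paper's: decompose $M$ by types, bound each $|M\cap M_\tau|$ via the upper half of Lemma~\ref{lem:MatricesInTypeBound}, and multiply by the type-count bound from \eqref{eq:NumberOfTypesBound}. The $(n+1)^{q^b}$ versus $n^{q^b}$ discrepancy you flag is real but is also present in the paper's own proof (its citation of \eqref{eq:NumberOfTypesBound} only gives $(n+1)^{q^b}$), and, as you note, it is immaterial wherever the lemma is used.
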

\begin{proof}
Let $T = \{\tau_B\mid B\in M\}$. Note that $$M\subseteq \bigcup_{\tau\in T} M_\tau.$$ Thus,
\begin{align*}|M| &\le  \sum_{\tau \in T} |M_\tau| \\& \le |T|\cdot \max_{\tau\in T}|M_\tau|\\& \le |\cT_{n,b}|\cdot \max_{\tau\in T}|M_\tau|.\end{align*}
The claim follows from \eqref{eq:NumberOfTypesBound} and Lemma \ref{lem:MatricesInTypeBound}.
\end{proof}

We say that a type is a \emph{histogram type} if it is indifferent to the ordering of a given vector's entries, and thus, only cares about the histogram of the vector. Formally, we make the following definition.

\begin{definition}[Histogram type]\label{def:histogramType}
A type $\tau\in T_b$ is called a \emph{histogram-type} if $\tau(u) = \tau(\pi(u))$ for every $u\in \Sigma^b$ and every permutation $\pi:[b]\to[b]$.
\end{definition}

\begin{lemma}\label{lem:SymmetricHasHighEntropy}
Let $T\subseteq \cT_b$ be a closed, permutation-closed, convex, set of types. Then there exists a histogram type $\tau\in T$ such that $H(\tau) = \max_{\tau'\in T}H(\tau')$.
\end{lemma}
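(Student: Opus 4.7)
The plan is to prove the lemma by a standard symmetrization argument, exploiting that entropy is invariant under relabeling of the alphabet and concave as a functional on the simplex.

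First I would use compactness to extract a maximizer: since $T$ is closed and bounded (it lies in the probability simplex $\cT_b$) and $H$ is continuous, there exists some $\tau^* \in T$ with $H(\tau^*) = \max_{\tau'\in T} H(\tau')$. The goal is to produce a histogram type achieving the same entropy.

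Next, for each permutation $\pi:[b]\to[b]$, define the pushforward $\pi\cdot\tau^*$ by $(\pi\cdot\tau^*)(u) := \tau^*(\pi^{-1}(u))$ for $u\in\Sigma^b$. By the assumption that $T$ is permutation-closed, each $\pi\cdot\tau^*$ lies in $T$. Moreover, since $\pi$ is a bijection on $\Sigma^b$, the multiset of probabilities defining $\pi\cdot\tau^*$ is the same as that of $\tau^*$, and hence $H(\pi\cdot\tau^*) = H(\tau^*)$.

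Now I would symmetrize: set
\[
    \hat\tau := \frac{1}{b!}\sum_{\pi\in S_b} \pi\cdot\tau^*,
\]
where $S_b$ denotes the symmetric group on $[b]$. By convexity of $T$, we have $\hat\tau \in T$. By concavity of the entropy functional on the simplex,
\[
    H(\hat\tau) \geq \frac{1}{b!}\sum_{\pi\in S_b} H(\pi\cdot\tau^*) = H(\tau^*),
\]
and since $\tau^*$ is a maximizer over $T$, this inequality must be an equality, so $H(\hat\tau) = \max_{\tau'\in T} H(\tau')$. Finally, $\hat\tau$ is a histogram type: for any permutation $\sigma\in S_b$ and any $u\in\Sigma^b$, applying $\sigma$ merely re-indexes the average over $S_b$, so $\hat\tau(\sigma(u)) = \hat\tau(u)$, matching Definition~\ref{def:histogramType}. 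I do not anticipate a substantive obstacle here; the only points to verify carefully are that permutation-closedness is what lets each $\pi\cdot\tau^*$ stay in $T$ (so that the convex combination is legitimate) and that entropy is truly invariant under pushforward by a coordinate permutation, both of which are immediate.
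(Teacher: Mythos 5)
Your proof is correct and follows essentially the same symmetrization argument as the paper: extract a maximizer by compactness, average over all coordinate permutations, and invoke permutation-closedness, convexity, and concavity of entropy to conclude. The only cosmetic difference is that you write out the pushforward explicitly as $(\pi\cdot\tau^*)(u)=\tau^*(\pi^{-1}(u))$ and note invariance of entropy under bijections, which the paper states more tersely.
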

\begin{proof}
Since $T$ is closed and bounded, it is compact. Thus, there is some $\tau'\in T$ such that $H(\tau')$ is maximal. Given a permutation $\pi:[b]\to [b]$, let $\pi(\tau')$ denote the distribution of the vector $\pi(u)$, where $u\sim \tau$. Let $$\tau = \frac{\sum_{\pi\in \Sym_b} \pi(\tau')}{b!}.$$ Since $T$ is permutation-closed and convex, $\tau\in T$. By concavity of entropy,
\begin{align*} H(\tau) &\ge \frac{\sum_{\pi\in \Sym_b}H(\pi(\tau'))}{b!} \\&= \frac{\sum_{\pi\in \Sym_k}H(\tau')}{b!} \\&= H(\tau').\end{align*}
Thus, $\tau$ has maximum entropy in $T$, and is clearly a histogram-type.
\end{proof}

The following technical lemma facilitates our use of an approximation for a set of types. 
\begin{lemma}\label{lem:entropyDifference}
Let $\tau,\tau' \in T_b$ such that $d_\infty(\tau,\tau') \le \eps.$ Then,
$$\left|H_{u\sim \tau}(u \mid u_I)-H_{u\sim\tau'}(u \mid u_I)\right|\le O_{b,q}\left(\eps\cdot\log \frac 1\eps\right)$$
for any $I\subseteq [b]$.
\end{lemma}

\begin{proof}
Given $x\in \Sigma^I$ and a distribution $\tau$ over $\Sigma^b$, write $\tau_I(x)=\PRover{u\sim \tau}{u_I=x}$. By our assumption, 
\begin{align}\label{eq:taubartauDiffBound}
    |\tau_I(x) - \tau_I'(x)| &= \left|\sum_{\substack{y\in \Sigma^b\\y_I=x}}(\tau_I(y)-\tau_I'(y))\right| \\&\le \sum_{\substack{y\in \Sigma^b\\y_I=x}}\left|\tau_I(y)-\tau_I'(y)\right| \\&\le|\Sigma^{[b]\setminus I}| \cdot\eps = q^{b-|I|}\cdot \eps
\end{align}
for all $I\subseteq [b]$, $x\in \Sigma^I$. 

We also need the following fact \cite[Eq. (4)]{Zhang07}: Let $\theta$ and $\theta'$ be two probability distributions on a set of $N$ elements, such that their $\ell_\infty$ distance is at most $\delta$. Then,
\begin{equation}\label{eq:dLinftyEntropyBound}
H(\theta)-H(\theta') \le \log_q2\cdot h_2(2N\delta) + 2N\delta\log_q N.
\end{equation}

Now,
\begin {align*}
&\left|H_{u\sim \tau}(u \mid u_I)-H_{u\sim \tau'}(u \mid u_I)\right| \\&= \left|\sum_{x\in \Sigma^I} \tau_I(x) H_{u\sim \tau}(u\mid u_I=x) \right. \\&\quad - \biggl. \tau_I'(x) H_{u\sim \tau'}(u\mid u_I=x) \biggr| \\
&\le\sum_{x\in \Sigma^I} |\tau_I(x) H_{u\sim \tau}(u\mid u_I=x) \\& \quad - \tau_I'(x) H_{u\sim \tau'}(u\mid u_I=x) | \numberthis\label{eq:Hdiff}
\end{align*}
We turn to bounding each term of this sum. Let $x \in \Sigma^I$ and assume without loss of generality that
$\tau_I(x)\le \tau_I'(x)$. By the triangle inequality,
\begin{align*}
&|\tau_I(x) H_{u\sim \tau}(u\mid u_I=x)  - \tau_I'(x) H_{u\sim \tau'}(u\mid u_I=x)|\\ &\le
|\tau_I(x)-\tau_I'(x)|\cdot H_{u\sim \tau'}(u\mid u_I=x) \\& \quad + \tau_I(x)\cdot|H_{u\sim \tau}(u\mid u_I=x) \\& \quad- H_{u\sim \tau'}(u\mid u_I=x)|.\numberthis\label{eq:HDiffTermSplit}
\end{align*}

Due to the fact that entropy can only decrease on conditioning, we can say that $H_{u\sim \tau}(u\mid u_I=x)\le H_{u\sim \tau}(u)\le b$. So, 
\begin{align}\label{eq:HDiffTerm1}
    &\left|\left(\tau_I(x)-\tau_I'(x)\right) H_{u\sim \tau'}(u\mid u_I=x)\right| \\& \le b\cdot q^{b-|I|}\cdot \eps
\end{align}
due to \eqref{eq:taubartauDiffBound}.

Let $\theta$ (resp. $\theta'$) denote the distribution of $u\sim \tau$ (resp. $u\sim \tau'$) conditioned on $u_I=x$. We bound the $L_\infty$ distance of $\theta$ and $\theta'$. For $u\in \Sigma^b$ such that $u_I=x$, we have
\begin{align*} &\left|\theta(u)-\theta'(u)\right| 
= \left|\frac{\tau(u)}{\tau_I(x)}-\frac{\tau'(u)}{\tau_I'(x)}\right| \\&\le \left|\frac{\tau(u)}{\tau_I(x)}-\frac{\tau(u)}{\tau_I'(x)}\right|+\left|\frac{\tau(u)}{\tau_I'(x)}-\frac{\tau'(u)}{\tau_I'(x)}\right| 
\\
&= \tau(u)\cdot\left|\frac{\tau_I'(x)-\tau_I(x)}{\tau_I(x)\cdot\tau_I'(x)}\right|+\frac{1}{\tau_I'(x)}\left|\tau(u)-\tau'(u)\right|\\
&\le \tau(u)\cdot\frac{q^{b-|I|}\cdot \eps}{\tau_I(x)\cdot\tau_I'(x)}+\frac{\eps}{\tau_I'(x)}\\ 
&\le \frac{q^{b-|I|}+1}{\tau_I'(x)}\cdot \eps,
\end{align*}
where the inequalities follow respectively from \eqref{eq:taubartauDiffBound}, and from the fact that $\tau(u)\le \tau_I(x)$. 
Hence, 
\begin{align*}
&\tau_I(x)\left|H_{u\sim \tau}(u\mid u_I=x) - H_{u\sim \tau'}(u\mid u_I=x)\right| \\&= \tau_I(x)\left|H(\theta) - H(\theta')\right|  \\
&\le \tau_I(x)  \left( \log_q 2\cdot h\left(2\frac{q^b+q^{|I|}}{\tau_I'(x)}\cdot \eps\right) + 2q^{|I|} |I| \eps\right)\\
&\le \tau_I'(x)  \left( \log_q 2\cdot h\left(2\frac{q^b+q^{|I|}}{\tau_I'(x)}\cdot \eps\right) + 2q^{|I|} |I| \eps\right)\\
&\le \tau_I'(x)  \left( \log_q 2\cdot h\left(2\frac{2q^b\cdot \eps}{\tau_I'(x)}\right) + 2q^b b \eps\right)\\
&\le O_{b,q}\left(\eps \log_q \frac 1\eps\right) \numberthis\label{eq:HDiffTerm2}
\end{align*}
due to \eqref{eq:dLinftyEntropyBound}. The lemma follows from \eqref{eq:Hdiff}, \eqref{eq:HDiffTermSplit}, \eqref{eq:HDiffTerm1} and \eqref{eq:HDiffTerm2}.
\end{proof}

We now prove that every monotone-increasing, local and row-symmetric property with a convex approximation is sharp for random codes. Furthermore, we identify the threshold rate as the maximal entropy in the approximating set.

\begin{theorem}\label{thm:threshold}
Fix $b\in \N$. Let $\cP = \{P_n\}_{n\in \N}$ be a symmetric property with locality parameter $b$, and let $T$ be a convex approximation for $\cP$. Denote $R^* = 1-\frac{\max_{\tau\in T}H(\tau)}{b}$.
Fix $\eps>0$ and let $R\in [0,1]$. The following now holds.
\begin{enumerate}
    \item If $R\le  R^*-\eps$ then $$\lim_{n\to \infty} \PR{\CRC^n(R) \text{ satisfies } \cP} = 0.$$
    \item If $R\geq  R^*+\eps$ then $$\lim_{n\to \infty} \PR{\CRC^n(R) \text{ satisfies } \cP} = 1.$$
\end{enumerate}
\end{theorem}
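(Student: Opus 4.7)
\medskip

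\noindent\textbf{Proof proposal.}
The plan is to prove the two parts of the theorem separately; Part~1 is a short first-moment argument, and Part~2 is a careful second-moment argument whose heart is an entropy inequality available only for histogram types.

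For Part~1, let $X$ be the number of matrices in the minimal set $\cM_n=\bigcup_{\tau\in T^n}M_\tau$ that are contained in $\CRC^n(R)$. Each such matrix has $b$ distinct columns, so Fact~\ref{fact:ProbabilityContainedInMatrix} gives $\Pr[B\subseteq C]=q^{-nb(1-R)}$. Combining Lemma~\ref{lem:MatricesInTypeBound} (which gives $|M_\tau|\le q^{nH(\tau)}$), the inclusion $T^n\subseteq T$, and the polynomial bound $|T^n|\le(n+1)^{q^b}$ from \eqref{eq:NumberOfTypesBound}, a union bound yields
\[
\E[X]\ \le\ \mathrm{poly}(n)\cdot q^{n(\max_{\tau\in T}H(\tau)\,-\,b(1-R))}.
\]
For $R\le R^*-\eps$ the exponent is at most $-nb\eps$, so Markov's inequality finishes.

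Part~2 is the second-moment direction. By Lemma~\ref{lem:SymmetricHasHighEntropy} (applied to the convex, closed, permutation-closed set $T$) there is a \emph{histogram} type $\tau^\ast\in T$ achieving the maximum entropy. Using the approximation property (Definition~\ref{def:Approximation}) I pick $\tau_n\in T^n$ with $d_\infty(\tau_n,\tau^\ast)\to0$; Lemma~\ref{lem:entropyDifference} then guarantees that $H(\tau_n)$ and each marginal entropy $H((\tau_n)_I)$ is within $o(1)$ of the corresponding quantity for $\tau^\ast$. Let $X=\sum_{B\in M_{\tau_n}}\mathbf{1}[B\subseteq C]$. Then $\E[X]=|M_{\tau_n}|\cdot q^{-nb(1-R)}\ge q^{nb\eps/2}$ for $R=R^\ast+\eps$ and $n$ large. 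For the second moment I split by the number $k$ of shared columns,
\[
\E[X^2]\ =\ \sum_{k=0}^{b}N_k\cdot q^{-n(1-R)(2b-k)},\qquad N_k=|\{(B,B')\in M_{\tau_n}^2:|\mathrm{cols}(B)\cap\mathrm{cols}(B')|=k\}|.
\]
To bound $N_k$ for $k\ge1$, I fix $B\in M_{\tau_n}$, pick which $k$ columns of $B$ are reused and in which positions of $B'$ (an $O_b(1)$ factor), and then count extensions to a full type-$\tau_n$ matrix. A standard conditional-type count (using the same argument as Lemma~\ref{lem:MatricesInTypeBound}) gives at most $\mathrm{poly}(n)\cdot q^{n(H(\tau_n)-H((\tau_n)_I))}$ such extensions, whence
\[
\frac{\E[X^2]}{\E[X]^2}\ \le\ 1+\sum_{k=1}^{b}\mathrm{poly}(n)\cdot q^{\,n\left(k(1-R)\,-\,H((\tau_n)_k)\right)}.
\]
Here the histogram property of $\tau^\ast$ is decisive: by subadditivity combined with symmetry (equivalently, Han's inequality), $H(\tau^\ast_k)\ge(k/b)H(\tau^\ast)=k(1-R^\ast)$. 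Transferring this to $\tau_n$ via Lemma~\ref{lem:entropyDifference} gives $k(1-R)-H((\tau_n)_k)\le -k\eps+o(1)$, so each summand is at most $\mathrm{poly}(n)\cdot q^{-nk\eps/2}$. The ratio is therefore $1+o(1)$, Paley--Zygmund gives $\Pr[X>0]\to1$, and since $X>0$ forces $C$ to contain a matrix in $M_{\tau_n}\subseteq\cM_n$ the conclusion follows.

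The main obstacle is the second-moment step. Without symmetry, the maximum-entropy type in $T$ could have small-marginal entropies well below what Han's inequality predicts for a symmetric distribution, and the interaction terms with $1\le k<b$ would then dominate $\E[X]^2$ and cause the variance to blow up. Lemma~\ref{lem:SymmetricHasHighEntropy} circumvents this obstacle by symmetrizing (using convexity of $T$), and the remaining technical friction---that the histogram type $\tau^\ast$ lives in the continuous set $T$ rather than in any finite $T^n$---is absorbed by the approximation setup and the continuity estimate of Lemma~\ref{lem:entropyDifference}.
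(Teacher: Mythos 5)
Your proof is correct and follows the same basic strategy as the paper's: a Markov/union bound over types for Part~1, and for Part~2 a second-moment (Chebyshev/Paley--Zygmund) argument applied to the count $X_{M_{\tau_n}}$ where $\tau_n$ approximates a maximum-entropy histogram type $\tau^*\in T$ obtained from Lemma~\ref{lem:SymmetricHasHighEntropy}. The one place where you take a genuinely cleaner route is the variance bound. The paper bounds $|W_r|$ by viewing each pair $(B,B')$ as an $n\times 2b$ matrix and applying Lemma~\ref{lem:SetSizeUpperBound} to the concatenated type $\nu=\tau_{B\|B'}$; it then controls $H(\nu)$ by the chain rule, introduces $f(r)=\sum_{i=r+1}^b H_{v\sim\tau}(v_b\mid v_{[i-1]})$, proves this sequence is convex, and checks the two endpoints $r=1,b$ (using the histogram property at $r=1$ to show $f(1)\le H(\tau)(b-1)/b$). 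You instead bound $N_k$ directly by a conditional-type count and then invoke Han's inequality, $H(\tau^*_{[k]})\ge(k/b)H(\tau^*)$, which handles all $k$ at once. These are really the same estimate in disguise: $f(r)=H(\tau)-H(\tau_{[r]})$, so the paper's endpoint-plus-convexity argument is precisely a hand-rolled proof of the Han bound for a histogram type. Your version is a bit shorter, at the cost of relying on a conditional-type counting lemma that the paper avoids stating explicitly (the paper only states the unconditional bound, Lemma~\ref{lem:MatricesInTypeBound}, and sidesteps the issue by applying Lemma~\ref{lem:SetSizeUpperBound} to $\Sigma^{n\times 2b}$); that lemma is standard and in the cited reference, but if you want the argument to be self-contained you should state it. The remaining ingredients (using Lemma~\ref{lem:entropyDifference} to transfer the marginal-entropy bounds from $\tau^*$ to $\tau_n$, the resulting $-k\eps n+o(n)$ exponent, Paley--Zygmund) match the paper.
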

\begin{proof}
For $b\in \N$ and a matrix $B\in \dis{\Sigma}{b\times n}$, let $X_B$ be an indicator variable for the event that $B\in \CRC^n(R)$. For a set $M\subseteq \dis{\Sigma}{b\times n}$, let $X_M=\sum_{B\in M} X_B$.
By Fact \ref{fact:ProbabilityContainedInMatrix},
\begin{equation}\label{eq:E(X)}
\E{X_M} = |M|\cdot q^{-n(1-R)b}.
\end{equation}

Let $M_n$ denote the minimal set for $P_n$ and let $T_n = \{\tau_B\mid B\in M_n\}$.

The first statement now follows from Markov's inequality, \eqref{eq:E(X)}, and Lemma \ref{lem:SetSizeUpperBound}:
\begin{align*}
&\PR{C\text{ satisfies }\cP} \\&=  \PR{\exists B\in M_n~~B\subseteq \CRC^n(R)} \\
&\le \PR{X_M\ge 1}\\
&\le \E{X_M}\\
&= |M|\cdot q^{-n(1-R)b} \\
&\le (n+1)^{q^b}\cdot q^{n\cdot \max_{\tau\in T_n}H(\tau)}\cdot q^{-n(1-R)b}\\
&\le  (n+1)^{q^b}\cdot q^{n\cdot \max_{\tau\in T}H(\tau)}\cdot q^{-n(1-R)b} \\
&\le (n+1)^{q^b}\cdot q^{-nb\eps} \le e^{-\Omega(n)}.
\end{align*}
Above, we used the fact that $T_n\subseteq T$.

For the second statement, let $\tau\in T$ have maximum entropy. By definition~\ref{def:Approximation}, $T$ is closed and permutation-closed, in addition to being convex. Consequently, due to Lemma~\ref{lem:SymmetricHasHighEntropy}, we may assume that $\tau$ is a histogram-type. Let $\tau_n \in T_n$ such that $d_\infty (\tau,\tau_n) = o_{n\to \infty}(1)$. Our plan is to use a second-moment argument to show that $\CRC^n(R)$ likely contains a matrix of type $\tau_n$. 

By \eqref{eq:E(X)} and Lemma \ref{lem:MatricesInTypeBound},
\begin{align*}\E{X_{M_{\tau_n}}} &= |M_{\tau_n}|q^{-n(1-R)b} \\&\ge q^{(H({\tau_n})-(1-R)b)n+o(n)} \\&\ge q^{(H(\tau)-(1-R)b)n+o(n)}\end{align*}

We turn to bounding the variance of $X_{M_{\tau_n}}$. Fact~\ref{fact:ProbabilityContainedInMatrix} yields
\begin{align*}
&\VAR{X_{M_{\tau_n}}} \\&= \sum_{B,B'\in M_{\tau_n}}(\PR{X_B=X_{B'}=1} \\& \quad - \PR{X_B=1}\PR{X_{B'}=1}) \\
&=  \sum_{B,B'\in M_{\tau_n}} \left(q^{-n(1-R)(2b-\alpha(B,B'))}- q^{-2n(1-R)b}\right)\\
&\le \sum_{\substack{B,B'\in M_{\tau_n}\\ \alpha(B,B')\ge 1}} q^{-n(1-R)(2b-\alpha(B,B'))}
\end{align*}
where $\alpha(B,B')$ is the number of columns in $B'$ that also appear in $B$.

In order to bound this sum, we need an estimate on the number of pairs $B,B'$ with a given $\alpha(B,B')$. For $0\le r\le b$, let $$W_r = \{(B, B')\mid B,B'\in M_{\tau_n}\text{ and } \alpha(B,B')=r\}$$ 
and denote $S_r = \{\tau_{B\|B'}\mid (B,B')\in W_r\}$.
Here, $B \| B'$ is the $n\times 2b$ matrix whose first (resp. last) $b$ columns are $B$ (resp. $B'$).
By Lemma \ref{lem:SetSizeUpperBound}, 
$$|W_r|\le (n+1)^{2q^b}\cdot q^{n\max_{\nu\in S_r}H(\nu)}.$$
Let $(B,B')\in W_r$ and let $\nu = \tau_{B\|B'}$. Assume without loss of generality that the first $r$ columns of $B$ are identical to the first $r$ columns of $B'$. Let $u\sim \nu$. Note that, since $B,B'\in M_{\tau_n}$, the random variables $u_{[b]}$ and $u_{[2b]\setminus[b]}$ are both ${\tau_n}$-distributed. Hence,
\begin{align*}
H(\nu) &= H(u) = H(u_{[2b]\setminus[b]}) + H(u_{[b]}\mid u_{[2b]\setminus[b]}) \\&=  H({\tau_n}) + H(u_{[b]}\mid u_{[2b]\setminus[b]}) \\
&\le H({\tau_n}) + H(u_{[b]}\mid u_{[r]}) \\&= H({\tau_n}) + H(u_{[b]\setminus[r]}\mid u_{[r]}).
\end{align*}
Note that the inequality follows from the observation that $H(u_{[b]}\mid u_{[2b]\setminus[b]}) \le H(u_{[b]}\mid u_{[r]})$ since entropy can only decrease on conditioning. 

Lemma \ref{lem:entropyDifference} yields
\begin{align*}
    H(u_{[b]\setminus[r]}\mid u_{[r]}) &\le H_{v\sim \tau}(v_{[b]\setminus[r]}\mid v_{[r]}) + o(1)\\
    &=\sum_{i=r+1}^b H_{v\sim \tau}(v_i\mid v_{[i-1]}) + o(1)\\
    &= \sum_{i=r+1}^b H_{v\sim \tau}(v_b\mid v_{[i-1]}) + o(1),
\end{align*}
where the last equality is due to $\tau$ being a histogram-type. Writing 
\begin{equation}\label{eq:f(r)def}
f(r) = \sum_{i=r+1}^b H_{v\sim \tau}(v_b\mid v_{[i-1]}),
\end{equation}
we conclude that
$$H(\nu) \le f(r)+H(\tau)+o(1),$$
so that 
$$|W_r| \le q^{\left(f(r)+H(\tau)\right)n+o(n)},$$
and
\begin{align*}
&\VAR{X_{M_{\tau_n}}} \\&\le \sum_{r=1}^{b}|W_r| \cdot q^{-n(1-R)(2b-r)} \\&\le \sum_{r=1}^b q^{\left(f(r)+H(\tau)-(1-R)(2b-r)\right)n+o(n)}\\
&\le \max_{1\le r\le b} q^{\left(f(r)+H(\tau)-(1-R)(2b-r)\right)n+o(n)}\\
\end{align*}

By Chebyshev's inequality, 
\begin{align}\label{eq:Var(X)/E(X)2Bound}
    &\PR{X_{M_{\tau_n}} = 0} \\&\le \frac{\VAR{X_{M_{\tau_n}}}}{\E{X_{M_{\tau_n}}}^2} \\&\le \max_{1\le r\le b} q^{(f(r)-H(\tau)+r(1-R))n + o_{b,q}(n)}.
\end{align}

We claim that $(f(r))_{r=0}^b$ is a convex sequence. Indeed, 
\begin{align*}&f(r-1) + f(r+1) - 2f(r) \\&= H_{v\sim \tau}(v_b\mid v_{[r-1]}) - H_{v\sim \tau}(v_b\mid v_{[r]}) \\&\ge 0.\end{align*}
Therefore, the maximum in the right-hand side of \eqref{eq:Var(X)/E(X)2Bound} is achieved either by $r=1$ or $r=b$. In the former case, note that
\begin{align*}
f(1) &= \sum_{i=2}^b H_{v\sim \tau}(v_b\mid v_{[i-1]}) \\&= \sum_{i=2}^b H_{v\sim \tau}(v_i\mid v_{[i-1]}) = H_{v\sim \tau}(v\mid v_1) \
\\&\le H(\tau) - H_{v\sim \tau}(v_1) \le H(\tau)\cdot \frac{b-1}b.
\end{align*}
In the last inequality above, we used the fact that $H_{v\sim \tau}{v_1} = H_{v\sim \tau}{v_i}$ for all $i\in[b]$, due to $\tau$ being a histogram-type. Thus, for $r=1$, the corresponding exponent in \eqref{eq:Var(X)/E(X)2Bound} is
\begin{align*}&(f(1) -H(\tau)+ (1-R))n \\&\le \left((1-R)-\frac{H(\tau)}b\right)n \\&\le -\eps n.\end{align*}
In the latter case, since $f(b)=0$, the exponent is
$$(-H(\tau)+(1-R)b)n \le -\eps b n.$$
We conclude that \begin{align*}\PR{\CRC^n(R)\text{ does not satisfy }\cP}&\le \Pr(X_{M_\rho}=0) \\&\le q^{-\eps n+o(n)}.\qedhere \end{align*}
\end{proof}

\paragraph{Applying the framework to list-recovery.} In the rest of the paper, we use Theorem \ref{thm:threshold} to compute the threshold rate for $(p,\ell,L)$ list-recovery in several different settings. In order to do that, we set up a few useful definitions.

\begin{definition}[$\beta(p,\ell,L)$ and $\histypes_{p,\ell,L}$]\label{def:betaT}
Given $L \in \N$, $\ell \le L$ and $p \in [0,1)$, let $\histypes_{p,\ell,L}$ denote the set of all histogram-types in $T_{p,\ell,L}$. Let
$$\beta(p,\ell,L) = \max_{\tau \in \histypes_{p,\ell,L}}H(\tau).$$
\end{definition}

Theorem~\ref{thm:threshold} allows us to characterize the threshold rate for $(p,\ell,L)$-list recovery in terms of $\beta(p,\ell,L)$:
\begin{corollary}\label{cor:beta}
 Fix $L \in \N$, $\ell \le L$ and $p \in [0,1)$. The threshold rate for $(p,\ell,L)$ list-recovery is 
\[R^* = 1-\frac{\beta(p,\ell,L)}{L}.\]
 \end{corollary}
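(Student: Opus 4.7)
The strategy is to directly invoke Theorem~\ref{thm:threshold} on the property $\cP$ of being \emph{not} $(p,\ell,L)$-list-recoverable, and then reduce the maximum-entropy computation over the full set $T_{p,\ell,L}$ to a maximum over histogram-types, which is exactly $\beta(p,\ell,L)$.

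First, I would observe that Corollary~\ref{cor:ListRecoveryIsAnAmazingProperty} already supplies all hypotheses needed to apply the characterization theorem: the property $\cP$ is symmetric, it has locality parameter $b = L$, and $T_{p,\ell,L}$ is a convex approximation for it. Theorem~\ref{thm:threshold} then immediately yields that the transition rate for random codes between containing and avoiding a $(p,\ell,L)$-bad matrix is
\[
    R^* \;=\; 1 - \frac{\max_{\tau \in T_{p,\ell,L}} H(\tau)}{L}.
\]
Since a code is $(p,\ell,L)$-list-recoverable if and only if it does not satisfy $\cP$, this is exactly the threshold rate for $(p,\ell,L)$-list-recovery.

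Next, I would identify $\max_{\tau \in T_{p,\ell,L}} H(\tau)$ with $\beta(p,\ell,L)$. One direction is trivial from the inclusion $\histypes_{p,\ell,L} \subseteq T_{p,\ell,L}$. For the other direction, I apply Lemma~\ref{lem:SymmetricHasHighEntropy}: by Definition~\ref{def:Approximation}, the convex approximation $T_{p,\ell,L}$ is topologically closed and permutation-closed, and it is convex by the lemma proved in Section~\ref{sec:noisylistrecisgood}. Hence Lemma~\ref{lem:SymmetricHasHighEntropy} produces a histogram-type $\tau^{\star} \in T_{p,\ell,L}$ at which $H$ attains its maximum over $T_{p,\ell,L}$. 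Such a $\tau^{\star}$ lies in $\histypes_{p,\ell,L}$, so $\max_{\tau \in T_{p,\ell,L}} H(\tau) = H(\tau^{\star}) \le \beta(p,\ell,L)$, establishing equality.

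Combining the two paragraphs above gives $R^* = 1 - \beta(p,\ell,L)/L$, as claimed. There is no genuine technical obstacle here: all the real work was done in proving Theorem~\ref{thm:threshold}, Lemma~\ref{lem:SymmetricHasHighEntropy}, and Corollary~\ref{cor:ListRecoveryIsAnAmazingProperty}. The only point requiring a moment's care is checking that the hypotheses of Lemma~\ref{lem:SymmetricHasHighEntropy} are met by $T_{p,\ell,L}$, and in particular that ``convex approximation'' in the sense of Definition~\ref{def:Approximation} automatically carries closedness and permutation-closedness.
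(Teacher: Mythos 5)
Your proof is correct and follows the same route as the paper: apply Theorem~\ref{thm:threshold} via Corollary~\ref{cor:ListRecoveryIsAnAmazingProperty}, then use Lemma~\ref{lem:SymmetricHasHighEntropy} to identify $\max_{\tau \in T_{p,\ell,L}} H(\tau)$ with $\beta(p,\ell,L)$. Your version simply spells out the intermediate steps more explicitly than the paper's terse two-line argument.
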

 \begin{proof}
By Corollary \ref{cor:ListRecoveryIsAnAmazingProperty} and Lemma \ref{lem:SymmetricHasHighEntropy}, $$\beta(p,\ell,L) = \max_{\tau\in T_{p,\ell,L}} H(\tau).$$
The claim now follows from Corollary \ref{cor:ListRecoveryIsAnAmazingProperty} and Theorem \ref{thm:threshold}.
 \end{proof}

Finally, we introduce the following notation, which will be used for the rest of the paper.
\begin{definition}[$P_\ell(\cdot)$ and $D_{d,\ell,L}$]\label{def:Pell}
Fix $\ell\le L$. Given a vector $v\in \Sigma^L$ let 
$$P_\ell(v) = \min_{A\in \binom{\Sigma}\ell}\left|\left\{i\in[L]\mid v_i\notin A\right\}\right|$$  We use the notation $D_{d,\ell,L} = \{v\in \Sigma^L\mid P_\ell(v) = d\}$.
\end{definition}

	
	\section{Bounds on the threshold rate for noisy list-recovery}\label{sec:ThresholdComputation}

The main result in this section is an estimate of $\beta(p,\ell,L)$ (Proposition~\ref{prop:betaComputation} below), which leads to an estimate on the threshold rate for list-recovery (Corollary~\ref{cor:thresh}).  This estimate is very sharp when $\frac{q \log L}{L}$ is small; in subsequent sections we will derive estimates which are more precise for certain parameter regimes.

Before coming to these bounds, we begin with a few useful lemmas that bound $|D_{d,\ell,L}|$ and characterize $\histypes_{p,\ell,L}$.

\begin{lemma}\label{lem:DBound}
Let $r =  1-\frac \ell q$ and $s = \frac dL$.  Suppose that $s < r$. Then,

\begin{align*}&\binom{q}{rq}\binom{L}{sL}\binom{(1-s)L}{\;\underbrace{\frac{(1-s)L}{(1-r)q},\ldots\frac{(1-s)L}{(1-r)q}}_\ell\;}\\&\cdot\binom{sL}{\;\underbrace{\frac{sL}{rq},\ldots\frac{sL}{rq}}_{q-\ell}\;}\le 
|D_{d,\ell,L}| \\&\le 
\binom{q}{rq} \left(\sum_{i=0}^{sL}\binom{L}{i}  \left((1-r)q\right)^{L-i} \left(rq\right)^{i}\right).\end{align*}
\end{lemma}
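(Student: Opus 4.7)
The plan is to prove both bounds by direct counting. For the upper bound I will union-bound over the choice of the (optimal) $\ell$-subset $A\subseteq\Sigma$. For the lower bound I will exhibit an explicit family of vectors in $D_{d,\ell,L}$ parametrized by $A$ together with an equal-frequency symbol assignment, and then verify that no vector is double-counted. Throughout I will use the identifications $(1-r)q=\ell$, $rq=q-\ell$, $(1-s)L=L-d$, $sL=d$, and I will implicitly assume the divisibility conditions $\ell\mid(L-d)$ and $(q-\ell)\mid d$ needed for the multinomial coefficients in the lower bound to be integers.

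For the upper bound, any $v\in D_{d,\ell,L}$ satisfies $P_\ell(v)\le d$, so there is some $A\in\binom{\Sigma}{\ell}$ for which $|\{i:v_i\notin A\}|\le d$. For a fixed $A$, the number of $v\in\Sigma^L$ with exactly $i$ coordinates outside $A$ equals $\binom{L}{i}(q-\ell)^i\ell^{L-i}$. Summing over $0\le i\le d$ and union-bounding over the $\binom{q}{\ell}=\binom{q}{rq}$ choices of $A$ bounds the (larger) quantity $|\{v : P_\ell(v)\le d\}|$, and hence also $|D_{d,\ell,L}|$, by the stated expression.

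For the lower bound, fix $A\in\binom{\Sigma}{\ell}$ and consider the vectors $v$ built as follows: choose a set $S\subseteq[L]$ of size $d$ (the ``outside'' coordinates); fill $S$ using each symbol of $\Sigma\setminus A$ exactly $d/(q-\ell)$ times; fill $[L]\setminus S$ using each symbol of $A$ exactly $(L-d)/\ell$ times. For a fixed $A$ the number of such $v$ is the product of the three combinatorial factors $\binom{L}{d}$, $\binom{L-d}{(L-d)/\ell,\ldots,(L-d)/\ell}$, $\binom{d}{d/(q-\ell),\ldots,d/(q-\ell)}$ appearing in the stated lower bound.

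The key step, where the hypothesis $s<r$ enters, is to show that every such $v$ actually lies in $D_{d,\ell,L}$ (not merely in $\{v:P_\ell(v)\le d\}$) and that different choices of $A$ produce disjoint sets of vectors. The inequality $s<r$ is equivalent to $(L-d)/\ell > d/(q-\ell)$, so in $v$ each symbol of $A$ appears strictly more often than each symbol of $\Sigma\setminus A$. Consequently, for any $A'\in\binom{\Sigma}{\ell}$, the count $|\{i:v_i\in A'\}|=\sum_{a\in A'}(\text{freq of }a\text{ in }v)$ is uniquely maximized at $A'=A$ with value $L-d$; equivalently, $|\{i:v_i\notin A'\}|\ge d$ for every $A'$, with equality iff $A'=A$. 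This simultaneously shows $P_\ell(v)=d$ and that $A$ is recoverable from $v$ (ruling out over-counting), so multiplying the per-$A$ count by $\binom{q}{\ell}=\binom{q}{rq}$ yields the claimed lower bound. The main obstacle is precisely this strict-gap verification; the remainder is routine bookkeeping.
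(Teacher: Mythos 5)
Your proof is correct and follows essentially the same approach as the paper: bound $|D_{d,\ell,L}|$ from above by a union bound over the choice of the optimal $\ell$-set $A$, and from below by counting a balanced family of vectors indexed by $A$. You are in fact slightly more careful than the paper's write-up, which asserts $M\subseteq D_{d,\ell,L}$ and that $|M|$ is at least the stated product without spelling out that the strict frequency gap $(L-d)/\ell > d/(q-\ell)$ (equivalently $s<r$) is what makes $A$ uniquely recoverable from $v$ and thus guarantees both $P_\ell(v)=d$ and the absence of double-counting; your explicit verification of exactly this point is the right thing to supply.
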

\begin{proof}
For the upper bound, note that a vector $v\in D_{d,\ell,L}$ is uniquely determined by the combination of the following:
\begin{itemize}
    \item The set $A\in \binom{\Sigma}{\ell}$, consisting of the $\ell$ most common entries in $v$ (with some arbitrary tie-breaking rule). There are $\binom{q}{rq}$ ways to choose this set.
    \item An assignment of elements from $\Sigma$ to $v_i$ ($i\in [L]$) so that at most $sL$ of the entries do not belong to $A$. There are $    \sum_{i=0}^{sL}\binom{L}{i}\cdot  \left((1-r)q\right)^{L-i}\cdot \left(rq\right)^{i}$ such assignments.
\end{itemize}
The upper bound is obtained by multiplying these two counts.

For the lower bound, let $M$ denote the collection of all vectors $v\in \Sigma^L$ for which there exists a set $A_v\in \binom{\Sigma}\ell$ such that:
\begin{itemize}
    \item Each element of $A_v$ appears in $v$ exactly $\frac{L-d}\ell$ times. 
    \item Each element of $\Sigma\setminus A_v$ appears in $v$ exactly $\frac d{q-\ell}$ times.
\end{itemize}
Since we know that $r > s$, we can say that $L > \frac{dq}{q-\ell} = \frac{dl}{q - \ell} + \frac{d(q-\ell)}{q - \ell}$ and therefore $\frac{L-d}{\ell} > \frac{d}{q - \ell}$. From this we may conclude that $M\subseteq D_{d,\ell,L}$. So, 
\begin{align*}&|D_{d,\ell,L}| \ge |M| \\&\ge \binom{q}{rq}\binom{L}{sL} \binom{(1-s)L}{\;\underbrace{\frac{(1-s)L}{(1-r)q},\ldots,\frac{(1-s)L}{(1-r)q}}_\ell\;} \\&\cdot\binom{sL}{\;\underbrace{\frac{sL}{rq},\ldots,\frac{sL}{rq}}_{q-\ell}\;}. \qedhere \end{align*}

\end{proof}

Using Stirling's approximation, Lemma \ref{lem:DBound} immediately yields the following.
\begin{corollary}\label{cor:logD}
In the setting of Lemma \ref{lem:DBound}, suppose that $s< r$. Then,
$$\log_q |D_{d,\ell,L}| = L(1-\DKL srq) \pm O(q\log L),$$
where the underlying constant is universal.
\end{corollary}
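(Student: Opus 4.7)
The plan is to derive matching upper and lower bounds on $\log_q|D_{d,\ell,L}|$ from Lemma \ref{lem:DBound}, showing that both collapse to $L(1-\DKL srq)$ up to an additive $O(q\log L)$ slack. The upper bound proceeds via a Chernoff tail estimate on a binomial, while the lower bound comes from applying Stirling's formula term-by-term to the explicit product of multinomial coefficients.

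For the upper bound, I would first rewrite
\[
\sum_{i=0}^{sL}\binom{L}{i}\cdot \bigl((1-r)q\bigr)^{L-i}\cdot \bigl(rq\bigr)^{i} = q^L \cdot \PR{X\le sL},
\]
where $X\sim \mathrm{Binomial}(L,r)$. Since $s<r=\mathbb{E}[X]/L$, this is a lower-tail deviation, and the standard Chernoff bound (in base $q$) yields $\PR{X\le sL} \le q^{-L\cdot \DKL srq}$. Combined with $\binom{q}{rq}\le 2^q$, the upper bound in Lemma \ref{lem:DBound} therefore gives $\log_q |D_{d,\ell,L}| \le L(1-\DKL srq) + O(q)$.

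For the matching lower bound, I would apply Stirling to each factor on the left-hand side of Lemma \ref{lem:DBound}. The standard estimates give $\log_q\binom{L}{sL} = L\bigl[-s\log_q s -(1-s)\log_q(1-s)\bigr]\pm O(\log L)$, and $\log_q\binom{N}{N/k,\ldots,N/k} = N\log_q k \pm O(k\log N)$ for a multinomial with $k$ equal parts of size $N/k$. Applying the latter to the two multinomials (with $(k,N)=(\ell,(1-s)L)$ and $(q-\ell,sL)$), substituting $\ell = (1-r)q$ and $q-\ell = rq$, and collecting terms, the sum simplifies to
\[
L\bigl[\log_q q + s\log_q(r/s) + (1-s)\log_q((1-r)/(1-s))\bigr] = L(1-\DKL srq),
\]
with total Stirling error $O(q\log L)$ (the $\binom{q}{rq}$ factor only helps).

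Combining both directions yields $\log_q |D_{d,\ell,L}| = L(1-\DKL srq) \pm O(q\log L)$, as claimed. The only non-mechanical step is the algebraic identity in the last display, which follows by expanding $\log_q((1-r)q) = \log_q q + \log_q(1-r)$ and similarly for $\log_q(rq)$, then using $s+(1-s)=1$ to combine the two $\log_q q$ contributions into the $+1$ that sits outside the KL divergence. I do not foresee any real obstacle here: this is a standard type-class calculation in which the first-moment (Chernoff) upper bound and the explicit product construction from Lemma \ref{lem:DBound} agree up to lower-order Stirling slack, and the only thing one must verify carefully is that the various $O(\log L)$, $O(\ell \log L)$, $O((q-\ell)\log L)$, and $O(q)$ error terms all sum to at most $O(q \log L)$, which they do for $q,L\ge 2$.
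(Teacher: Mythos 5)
Your proof is correct and takes essentially the approach the paper intends: the paper's proof of this corollary is a single sentence invoking Stirling's approximation applied to Lemma~\ref{lem:DBound}, and your argument fills in exactly those details (the Chernoff/KL bound you use for the binomial lower-tail sum is the standard type-class estimate, equivalent to bounding the sum by $L$ times its largest term and applying Stirling). The Stirling bookkeeping, the algebraic simplification to $1-\DKL srq$, and the accounting of error terms to $O(q\log L)$ all check out.
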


In order to compute $\beta(p,\ell,L)$, we will make use of the following characterization of $\histypes_{p,\ell,L}$ (Definition~\ref{def:betaT}).
Intuitively, this lemma says that a histogram-type $\tau$ is bad for $(p,\ell,L)$-list-recovery if and only if it has many symbols inside the most frequent $\ell$ symbols in expectation.

\begin{lemma}\label{lem:HistTypesCharacterization}
Let $1\le \ell\le q$, $L\in \N$ and $0\le p\le 1$. Let $\tau$ be a distribution over $\Sigma^L$ and suppose that $\tau$ is a histogram-type. Then, $\tau \in \histypes_{p,\ell,L}$ if and only if
\begin{equation}\label{eq:HistTypesCharacterizationInequality}
\Eover{u\sim \tau}{P_\ell(u)} \le pL.    
\end{equation}
\end{lemma}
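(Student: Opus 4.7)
The plan is to prove the two directions separately. The forward direction ($\tau \in \histypes_{p,\ell,L} \implies \EE_{u\sim\tau}[P_\ell(u)] \le pL$) does not actually require $\tau$ to be a histogram-type; it follows from a short averaging argument. The backward direction is where the histogram-type hypothesis is used, and will be established by a symmetrization argument.

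For the forward direction, suppose $\tau \in T_{p,\ell,L}$, so there exists a distribution $\rho$ over $\Sigma^L \times \binom{\Sigma}{\ell}$ witnessing Conditions~\ref{enum:rhoMarginalIsTauForT} and~\ref{enum:rhoSmallRadiusForT} of Definition~\ref{def:TpellL}. For any fixed $K \in \binom{\Sigma}{\ell}$ and any $u\in \Sigma^L$ we have $P_\ell(u) \le |\{i : u_i \notin K\}|$ by definition of $P_\ell$. Taking expectations under $\rho$ (using that the marginal of $u$ is $\tau$) and then applying linearity of expectation and the bound~\eqref{eq:rhoSmallRadiusInequality},
\[
\EE_{u\sim\tau}[P_\ell(u)] \;=\; \EE_{(u,K)\sim\rho}[P_\ell(u)] \;\le\; \EE_{(u,K)\sim\rho}\bigl[|\{j : u_j \notin K\}|\bigr] \;=\; \sum_{j=1}^{L}\PROver{(u,K)\sim\rho}{u_j \notin K} \;\le\; pL.
\]

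For the backward direction, assume $\tau$ is a histogram-type and $\EE_{u\sim\tau}[P_\ell(u)] \le pL$. For each $u \in \Sigma^L$, fix a (deterministic, e.g.\ lexicographically first) minimizer $K^*(u) \in \binom{\Sigma}{\ell}$ achieving $|\{i : u_i \notin K^*(u)\}| = P_\ell(u)$. Define $\rho$ as the distribution of $(V, K)$ obtained by the following procedure: sample $u \sim \tau$ and an independent uniformly random permutation $\pi:[L]\to[L]$; output $V := \pi(u)$ and $K := K^*(u)$. Since $\tau$ is a histogram-type, the distribution of $\pi(u)$ is again $\tau$, so the marginal of $V$ is $\tau$, verifying Condition~\ref{enum:rhoMarginalIsTauForT}. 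For Condition~\ref{enum:rhoSmallRadiusForT}, fix $j\in[L]$; since $\pi(j)$ is uniform on $[L]$ conditioned on $u$,
\[
\PROver{\rho}{V_j \notin K} \;=\; \EE_{u}\left[\frac{1}{L}\sum_{i=1}^{L}\mathbf{1}[u_i \notin K^*(u)]\right] \;=\; \frac{\EE_{u\sim\tau}[P_\ell(u)]}{L} \;\le\; p,
\]
and so $\tau \in T_{p,\ell,L}$.

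The main subtlety—and the only place the histogram-type hypothesis is used—is the symmetrization by a random $\pi$ in the backward direction: without it, a deterministic choice of $K^*(u)$ could leave some specific coordinate $j$ with $\Pr[u_j\notin K^*(u)] > p$ even though the average over $j$ is at most $p$. The histogram-type property ensures that the permuted vector $\pi(u)$ is still distributed as $\tau$, so the symmetrization is ``free'' and lets us convert a bound on the average (namely $\EE[P_\ell(u)]/L$) into a uniform per-coordinate bound.
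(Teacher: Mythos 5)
Your proof is correct and takes essentially the same approach as the paper: the forward direction is the identical averaging argument, and for the backward direction the paper also picks the lexicographically minimal $K^*(u)$ and exploits the histogram-type symmetry to turn the average bound $\EE[P_\ell(u)]/L \le p$ into a per-coordinate bound. Your randomized symmetrization by a uniform $\pi$ produces exactly the same joint distribution $\rho$ as the paper's deterministic construction (since $K^*$ is permutation-invariant and $\pi(u)\sim\tau$), so the two arguments coincide up to presentation.
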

\begin{proof}
Suppose that $\tau \in \histypes_{p,\ell,L}$. A fortiori, $\tau \in T_{p,\ell,L}$. Let $\rho$ be as in Definition \ref{def:TpellL}. Then, 
$$\Eover{v\sim \tau}{P_\ell(u)} \le \Eover{(u,K)\sim \rho}{\left|\{i\in [L]\mid u_i\notin K \}\right|} \le pL,$$
where the first inequality is due to Condition \ref{enum:rhoMarginalIsTauForT} of Definition \ref{def:TpellL}, and the second inequality follows from \eqref{eq:rhoSmallRadiusInequality}. 

Conversely, suppose that \eqref{eq:HistTypesCharacterizationInequality} holds. Let $\rho$ be the distribution of the pair $(u,K)$, where $u$ is first sampled from $\tau$, and let $K=K(u)\in \binom{\Sigma}\ell$ be the lexicographically minimal set for which $\left|\{i\in [L]\mid u_i\notin K\}\right| = P_\ell(v)$. Clearly, $\rho$ satisfies Condition \ref{enum:rhoMarginalIsTauForT} of Definition \ref{def:TpellL}. 

We next show that it also satisfies Condition \ref{enum:rhoSmallRadiusForT}. Observe that $K(u)=K(\pi(u))$ for every permutation $\pi$ over $[L]$. Hence, since $\tau$ is a histogram property, $$\PRover{(u,K)\sim \rho}{u_i\notin K} = \PRover{(u,K)\sim \rho}{u_1\notin K}$$ for every $i\in [L]$. Consequently, for any $i \in [L]$,
\begin{align*}\PRover{(u,K)\sim \rho}{u_i\notin K} &= \frac{1}{L} \Eover{(u,K)\sim \rho}{|\{i\in [L]\mid u_i\notin K\}|} \\&= \frac 1L  \Eover{v\sim \tau}{P_\ell(v)} \le p.\end{align*}
It follows that $\tau\in T_{p,\ell,L}$. Since $\tau$ is a histogram-type, it also belongs to $\histypes_{p,\ell,L}$.
\end{proof}


Now, we come to our estimate on the threshold rate for $(p,\ell,L)$ list-recovery in the regime where $L \to \infty$ and $q\le o(\frac {\log L}L)$.  We begin with the following proposition, which bounds the quantity $\beta(p,\ell,L)$.

\begin{proposition}\label{prop:betaComputation}
Let $r = 1-\frac \ell q$ and suppose that $p\le r$. Then,
$$\beta(p,\ell,L) = L(1-\DKL prq) \pm O(q\log L).$$
\end{proposition}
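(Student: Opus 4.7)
My plan is to match a lower bound coming from an explicit histogram type with an upper bound coming from an entropy decomposition, using the size estimates for $D_{d,\ell,L}$ provided by Corollary~\ref{cor:logD}. Throughout, recall from Lemma~\ref{lem:HistTypesCharacterization} that $\histypes_{p,\ell,L}$ consists precisely of the histogram-types $\tau$ on $\Sigma^L$ satisfying $\E_{u\sim\tau}[P_\ell(u)] \le pL$.

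For the lower bound, I would take $d := \lfloor pL\rfloor$ and let $\tau$ be the uniform distribution on $D_{d,\ell,L}$. Since $P_\ell$ is permutation-invariant, $D_{d,\ell,L}$ is closed under coordinate permutations, so $\tau$ is a histogram-type; and $\E_{u\sim\tau}[P_\ell(u)] = d \le pL$, so Lemma~\ref{lem:HistTypesCharacterization} gives $\tau \in \histypes_{p,\ell,L}$. Hence $\beta(p,\ell,L) \ge H(\tau) = \log_q|D_{d,\ell,L}|$, and Corollary~\ref{cor:logD} (together with local Lipschitz continuity of $\DKL{\cdot}{r}{q}$, which absorbs the $O(1/L)$ replacement of $d/L$ by $p$ into the $O(q\log L)$ slack) yields $\beta(p,\ell,L) \ge L\bigl(1-\DKL{p}{r}{q}\bigr) - O(q\log L)$.

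For the upper bound, fix an arbitrary $\tau \in \histypes_{p,\ell,L}$ and let $u\sim\tau$. Because $P_\ell(u)$ is a deterministic function of $u$, the chain rule gives $H(\tau) = H(P_\ell(u)) + H(u\mid P_\ell(u)) \le \log_q(L+1) + \sum_d q_d \log_q|D_{d,\ell,L}|$, where $q_d := \Pr[P_\ell(u)=d]$; the first term is absorbed into $O(q\log L)$. I would bound $\log_q|D_{d,\ell,L}|$ by $L\bigl(1-\DKL{d/L}{r}{q}\bigr) + O(q\log L)$ via Corollary~\ref{cor:logD} when $d/L < r$, and by the trivial bound $L + O(q)$ when $d/L \ge r$ (which follows from Lemma~\ref{lem:DBound} by summing the binomial over all $i$). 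Writing $\alpha := \sum_{d > rL} q_d$ and $s_1 := (1-\alpha)^{-1}\sum_{d\le rL} q_d\cdot d/L$, the constraint $\sum_d q_d d \le pL$ combined with $d \ge rL$ on the second sum yields $(1-\alpha)s_1 + \alpha r \le p$. Applying Jensen's inequality to the convex function $\DKL{\cdot}{r}{q}$ twice---first with weights $\{q_d/(1-\alpha)\}_{d\le rL}$ to get $\sum_{d\le rL} q_d \DKL{d/L}{r}{q} \ge (1-\alpha)\DKL{s_1}{r}{q}$, and then with weights $(1-\alpha,\alpha)$ on the pair $(s_1,r)$ (using $\DKL{r}{r}{q}=0$) to get $(1-\alpha)\DKL{s_1}{r}{q} \ge \DKL{(1-\alpha)s_1 + \alpha r}{r}{q} \ge \DKL{p}{r}{q}$ (the last step by monotonicity of $\DKL{\cdot}{r}{q}$ on $[0,r]$)---yields
\[
\sum_d q_d \log_q|D_{d,\ell,L}| \le L - L\DKL{p}{r}{q} + O(q\log L),
\]
as required.

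The main technical care is concentrated in the upper bound, specifically in the two-stage convexity argument used to accommodate whatever mass $\tau$ places on $d > rL$, where Corollary~\ref{cor:logD} does not apply and only the trivial entropy bound is available. Modulo this, the remaining work (the lower bound construction, verifying permutation-closedness of $D_{d,\ell,L}$, and absorbing $O(1)$-scale errors such as $|d/L - p|$ into the $O(q\log L)$ slack) is routine.
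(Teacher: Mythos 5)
Your proposal is correct and follows essentially the same route as the paper: the same lower bound via the uniform distribution on $D_{d,\ell,L}$ with $d\approx pL$, and the same upper bound via the chain rule $H(\tau)=H(P_\ell(u))+H(u\mid P_\ell(u))$ together with Corollary~\ref{cor:logD} and convexity of $\DKL{\cdot}{r}{q}$. The only cosmetic difference is in the Jensen step: the paper builds a single globally concave, non-decreasing envelope $f(d)$ (equal to $L(1-\DKL{d/L}{r}{q})$ for $d/L<r$ and to $L$ for $d/L\ge r$) and applies Jensen once followed by monotonicity, whereas you split the sum at $d=rL$ and apply Jensen twice — the two arguments are equivalent.
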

\begin{proof}
We first bound $\beta(p,\ell,L)$ from below. Let $\tau$ denote the uniform distribution on $D_{pL,\ell,L}$. This distribution clearly belongs to $\histypes_{p,\ell,L}$, so Corollary \ref{cor:logD} yields
\begin{align*}\beta(p,\ell,L) &\ge H(\tau) \\&\ge L(1-\DKL prq) \pm O(q\log L).\end{align*}

We turn to proving a matching upper bound. Let $\tau\in \histypes_{p,\ell,L}$ such that $\beta(p,\ell,L) = H(\tau)$. Let $v\sim \tau$ and denote the distribution of $P_\ell(v)$ by $\mu$. Note that
\begin{align*}H(\tau) &= H(\mu) + H(v\mid P_\ell(v))\\&\le H(v\mid P_\ell(v)) + \log_q L,\end{align*}
since $P_\ell(v)$ has at most $L$ possible values. Now,
\begin{align}\label{eq:betaBound-BoundOnH}
    H(v\mid P_\ell(v)) &= \EE_{d\sim \mu} \left[H(v\mid P(v)=d)\right]\\&\le \EE_{d\sim\mu} \left[\log_q |D_{d,\ell,L}|\right].
\end{align}
Let $$f(d) = \begin{cases}
L\left(1-\DKL {\frac dL}rq\right) &\text{if } \frac dL < r\\
L &\text {if } \frac dL \ge r.
\end{cases}$$
By \eqref{eq:betaBound-BoundOnH}, Corollary \ref{cor:logD}, and the concavity of $f$,
\begin{align*}
H(v\mid P_\ell(v)) &\le \EE_{d\sim \mu}\left[f(d)\right]\ +O(q\log L)\\&\le f\left(\EE_{d\sim \mu}[d]\right) +O(q\log L).
\end{align*}
By \eqref{eq:rhoSmallRadiusInequality}, we have 
$$\Eover{d\sim \mu}{d} \le p.$$
Since $f$ is non-decreasing, it follows that
\begin{align*}&H(v\mid P_\ell(v))\le f(p) +O(q\log L)\\&=  L(1-\DKL prq) + O(q\log L),\end{align*}
establishing the upper bound.
\end{proof}

\begin{corollary}\label{cor:thresh} The threshold rate for $(p,\ell,L)$ list-recovery of a random code is
$$R^* = \begin{cases}
\DKL prq \pm O\left(\frac{q\log L}{L}\right) & \text{if }p<r\\
0 & \text{if }p\ge r,
\end{cases}$$
where $r = 1-\frac \ell q$.
\end{corollary}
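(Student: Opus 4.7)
The plan is to combine the two immediately preceding results and handle the two regimes $p < r$ and $p \geq r$ separately. By Corollary~\ref{cor:beta} we have $R^* = 1 - \beta(p,\ell,L)/L$, so everything reduces to computing (or bounding) $\beta(p,\ell,L)$.

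For $p < r$, Proposition~\ref{prop:betaComputation} directly yields
\[
    \beta(p,\ell,L) = L\bigl(1 - \DKL{p}{r}{q}\bigr) \pm O(q \log L),
\]
and dividing by $L$ and subtracting from $1$ gives $R^* = \DKL{p}{r}{q} \pm O(q \log L / L)$, matching the claim.

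For $p \geq r$, the goal is to show $\beta(p,\ell,L) = L$, the largest possible entropy of any distribution on $\Sigma^L$; this immediately forces $R^* = 0$ via Corollary~\ref{cor:beta}. The upper bound $\beta(p,\ell,L) \leq L$ is trivial. For the matching lower bound, I would exhibit the uniform distribution $\tau$ on $\Sigma^L$ as a witness. This distribution is manifestly a histogram-type, and $H_q(\tau) = L$. To confirm that $\tau \in \histypes_{p,\ell,L}$, I would invoke the characterization of Lemma~\ref{lem:HistTypesCharacterization}: it suffices to check $\EE_{u \sim \tau}[P_\ell(u)] \leq pL$. Fixing any $A_0 \in \binom{\Sigma}{\ell}$, the definition of $P_\ell$ gives the coordinate-wise bound $P_\ell(u) \leq |\{i \in [L] : u_i \notin A_0\}|$, whose expectation under the uniform distribution on $\Sigma^L$ is exactly $(1-\ell/q)L = rL \leq pL$.

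No genuine obstacle remains: the substantive entropy estimates have already been carried out in Proposition~\ref{prop:betaComputation} (and Corollary~\ref{cor:logD}), and the present corollary is essentially the bookkeeping that assembles them with Corollary~\ref{cor:beta}. The only mild observation required is that in the high-noise regime $p \geq r$ the uniform distribution trivially satisfies the bad-matrix condition, which correctly recovers the known list-recovery capacity threshold.
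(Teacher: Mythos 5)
Your proposal is correct and follows essentially the same route as the paper: the $p<r$ case is immediate from Corollary~\ref{cor:beta} and Proposition~\ref{prop:betaComputation}, and for $p \geq r$ the paper likewise takes $\tau$ to be the uniform distribution on $\Sigma^L$, verifies via Lemma~\ref{lem:HistTypesCharacterization} that $\EE_{u\sim\tau}[P_\ell(u)] \leq rL \leq pL$ by bounding $P_\ell(u)$ against a fixed $A \in \binom{\Sigma}{\ell}$, and concludes $\beta(p,\ell,L)=L$, hence $R^*=0$.
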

\begin{remark}\label{rem:listreccap}
In order to better illustrate the threshold rate computed in Corollary~\ref{cor:thresh}, one can verify the identity
\begin{align*}
    &\DKL{p}{1-\ell/q}{q} \\&= 1 - p\log_q\inparen{\frac{q-\ell}{p}} - (1-p)\log_q\inparen{\frac{\ell}{1-p}}.
\end{align*}
Substituting $\ell=1$, we find $\DKL{p}{1-1/q}{q} = 1-h_q(p)$, agreeing with the list decoding capacity theorem. For larger $\ell$, this expression agrees with the \emph{list-recovery capacity theorem}, as stated in e.g.~\cite{resch2020list}. 
\end{remark}
\begin{proof}
The case $p<r$ is immediate from Corollary \ref{cor:beta} and Proposition \ref{prop:betaComputation}.

For $p\ge r$, let $\tau$ denote the uniform distribution on $\Sigma^L$. We claim that $\tau\in \histypes_{p,\ell,L}$. Since $\tau$ is clearly a histogram-type, it suffices to show that it satisfies \eqref{eq:HistTypesCharacterizationInequality}. Fix some arbitrary set $A\in \binom \Sigma\ell$. Then,
\begin{align*}\Eover{u\sim \tau}{P_\ell(u)} &\le \Eover{u\sim \tau}{|\{i\in[L]\mid u_i\notin A\}|}\\&\le L\cdot \left(1-\frac \ell q\right) = Lr \le Lp,\end{align*}
and the claim follows. 

Corollary \ref{cor:beta} now yields
$$R^* = 1-\frac{\beta(p,\ell,L)}L \le 1-\frac{H(\tau)}{L} = 0.\qedhere$$
\end{proof}

	\section{Zero-error list-recovery and perfect hashing codes}\label{sec:hash}
	In this section we analyze the threshold rate for zero-error list-recovery (that is, when $p=0$), 
and give a more precise version of Corollary \ref{cor:thresh} in this setting. 
We use this to compute the threshold rate for a random code be to a perfect hash code, which is the same as being $(0,q-1,q)$ list-recoverable. 

\begin{lemma}\label{lem:zero} Let $p^* = |D_{0,\ell,L}|/q^L$. The threshold rate for $(0,\ell,L)$ list-recovery of a random code is
$$R^* =
\frac{-\log_q(p^*)}{L}
$$
\end{lemma}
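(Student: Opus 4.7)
The plan is to specialize the general framework in this paper to the $p=0$ setting, where the characterization of bad histogram types becomes a simple support constraint. By Corollary~\ref{cor:beta}, it suffices to show $\beta(0,\ell,L) = \log_q |D_{0,\ell,L}| = L + \log_q(p^*)$, after which the formula $R^* = 1 - \beta(0,\ell,L)/L = -\log_q(p^*)/L$ drops out immediately.

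First I would apply Lemma~\ref{lem:HistTypesCharacterization} with $p=0$. Since $P_\ell$ is nonnegative integer-valued, the inequality $\mathbb{E}_{u \sim \tau}[P_\ell(u)] \le 0$ forces $P_\ell(u) = 0$ for every $u$ in the support of $\tau$. By Definition~\ref{def:Pell}, this means exactly that $\supp(\tau) \subseteq D_{0,\ell,L}$. Hence
\[
    \histypes_{0,\ell,L} = \{\tau : \tau \text{ is a histogram-type supported on } D_{0,\ell,L}\}.
\]

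Next I would produce the maximizer. Observe that $D_{0,\ell,L}$ is itself closed under coordinate permutations: if $v_i \in A$ for all $i$ and some $A \in \binom{\Sigma}{\ell}$, then the same witness $A$ works for any permutation of $v$. Therefore the uniform distribution $\tau^*$ on $D_{0,\ell,L}$ is invariant under coordinate permutations, i.e.\ a histogram-type in the sense of Definition~\ref{def:histogramType}, and so $\tau^* \in \histypes_{0,\ell,L}$. Since the uniform distribution attains the maximum entropy among all distributions on a fixed finite support, and any other $\tau \in \histypes_{0,\ell,L}$ is supported inside $D_{0,\ell,L}$, we get
\[
    \beta(0,\ell,L) = H(\tau^*) = \log_q |D_{0,\ell,L}| = \log_q\inparen{q^L \cdot p^*} = L + \log_q(p^*).
\]

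Finally, I would invoke Corollary~\ref{cor:beta} to conclude
\[
    R^* = 1 - \frac{\beta(0,\ell,L)}{L} = 1 - \frac{L + \log_q(p^*)}{L} = -\frac{\log_q(p^*)}{L}.
\]
There is no real obstacle here: the only subtlety is recognizing that at $p=0$ the expected-radius condition collapses to a support condition, at which point the max-entropy problem is trivial and all the machinery of Section~\ref{sec:ThresholdComputation} (the entropy estimates on $|D_{d,\ell,L}|$ and the concavity argument) is unnecessary.
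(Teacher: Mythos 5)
Your proof is correct and takes essentially the same approach as the paper's: both reduce to showing that $\beta(0,\ell,L)=\log_q|D_{0,\ell,L}|$ by observing that at $p=0$ the defining constraint collapses to a support condition $\supp(\tau)\subseteq D_{0,\ell,L}$, and that the uniform distribution on $D_{0,\ell,L}$ achieves the maximum. The only cosmetic difference is that you route the support argument through Lemma~\ref{lem:HistTypesCharacterization} and verify directly that the uniform distribution on $D_{0,\ell,L}$ is a histogram-type, whereas the paper works with $T_{0,\ell,L}$ directly (reading the support restriction off \eqref{eq:rhoSmallRadiusInequality}) and relies on the equality $\beta(p,\ell,L)=\max_{\tau\in T_{p,\ell,L}}H(\tau)$ already established in the proof of Corollary~\ref{cor:beta}.
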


\begin{proof}
 Let $\mu$ be the uniform distribution on $D_{0,\ell,L}$. Note that $\mu\in T_{0,\ell,L}$, so \begin{equation}\label{eq:betap=0UpperBound}
     \beta(0,\ell,L)\ge H(\mu)=\log_q(|D_{0,\ell,L}|) = \log_q(p^*) + L.
 \end{equation}
  On the other hand, due to \eqref{eq:rhoSmallRadiusInequality}, every distribution in $T_{0,\ell,L}$ is supported in $D_{0,\ell,L}$. Consequently, \eqref{eq:betap=0UpperBound} is in fact an equality (since $\log_q(|D_{0,\ell,L}|)$ is simply the entropy of the uniform distribution, which is the maximal entropy distribution on $D_{0,\ell,L}$). It now follows from Corollary \ref{cor:beta} that
 $$R^* = 1 - \frac{\beta(0,\ell,L)}{L}.\qedhere$$
\end{proof}

\begin{corollary}\label{cor:hash}
The threshold rate for $(0,q-1,q)$ list-recovery of a random code is \[R^* = \frac{1}{q}\log_q\left(\frac{1}{1 - q!/q^q}\right) \]
\end{corollary}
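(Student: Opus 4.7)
The plan is to directly apply Lemma~\ref{lem:zero} with the parameters $\ell = q-1$ and $L = q$, and then explicitly compute the quantity $p^* = |D_{0,q-1,q}|/q^q$.

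First, I would unpack the definition of $D_{0,q-1,q}$. Recall from Definition~\ref{def:Pell} that $D_{0,\ell,L} = \{v \in \Sigma^L : P_\ell(v) = 0\}$, where $P_\ell(v) = \min_{A \in \binom{\Sigma}{\ell}} |\{i : v_i \notin A\}|$. So a vector $v \in \Sigma^q$ lies in $D_{0,q-1,q}$ precisely when there is some subset $A \subseteq \Sigma$ of size $q-1$ containing every entry of $v$, which is equivalent to saying that the multiset of entries of $v$ misses at least one element of $\Sigma$.

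Next, I would count $|D_{0,q-1,q}|$ by taking the complement within $\Sigma^q$. A vector $v \in \Sigma^q$ fails to lie in $D_{0,q-1,q}$ exactly when every element of $\Sigma$ appears among its $q$ entries. Since $v$ has exactly $q$ coordinates and $|\Sigma| = q$, this is equivalent to $v$ being a permutation of $\Sigma$, and there are exactly $q!$ such vectors. Hence
\[
|D_{0,q-1,q}| = q^q - q!, \qquad p^* = \frac{q^q - q!}{q^q} = 1 - \frac{q!}{q^q}.
\]

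Finally, I would plug this into Lemma~\ref{lem:zero} with $L = q$ to obtain
\[
R^* = \frac{-\log_q(p^*)}{q} = \frac{1}{q}\log_q\!\left(\frac{1}{1 - q!/q^q}\right),
\]
as claimed. There is essentially no obstacle here: the entire content of the corollary is already packed into Lemma~\ref{lem:zero}, and what remains is the elementary combinatorial identification of $D_{0,q-1,q}$ with the non-surjective functions $[q] \to \Sigma$. The only thing worth double-checking is that the definition of $(0,q-1,q)$-list-recoverability truly coincides with the $q$-hash code property, which follows since requiring no coordinate $i$ to have $\{c^{(1)}_i, \ldots, c^{(q)}_i\} = \Sigma$ is the same as requiring that at every coordinate the symbols lie in some set of size $q-1$.
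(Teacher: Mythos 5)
Your proof is correct and takes essentially the same route as the paper: apply Lemma~\ref{lem:zero} with $\ell = q-1$, $L=q$, and identify $|D_{0,q-1,q}| = q^q - q!$ by observing that the complement consists exactly of the $q!$ permutations of $\Sigma$. The extra detail you include about unpacking $P_\ell$ and the $q$-hash equivalence is accurate but not needed beyond what the paper states.
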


\begin{proof}
Due to Lemma \ref{lem:zero} it suffices to show that
    \[|D_{0,q-1,q}| = q^q - q!.\]
Indeed, $|D_{0,q-1,q}|$ counts the number of vectors of length $q$, over an alphabet of size $q$ such that not all $q$ distinct letters show up in the vector. This number is exactly all possible vectors after removing the permutations of $(1,2,\dots,q)$, namely, $q^q-q!$.
\end{proof}
	
	\section{List of two decoding of random and random linear codes} \label{sec:list-of-2}
	
In this section, we study the list-of-2 decodability of two random ensembles of codes. In detail, we precisely compute the threshold rate for $(p,3)$-list-decoding for random codes and for random \emph{linear} codes. Denote by $\cP$ the monotone increasing property of \emph{not} being $(p,3)$-list-decodable. Note that we cannot immediately apply Corollary \ref{cor:thresh}, as the error term of $O\inparen{\frac{q\log L}{L}}$ is not negligible in this regime. We specialize to the case of $q=2$, and recall our convention that $\log$ denotes the base-2 logarithm. Recall from the introduction that whenever $p < 1/4$ there exist $(p,3)$-list-decodable codes with positive rate, but whenever $p > 1/4$ the only $(p,3)$-list-decodable codes are of bounded size, independent of $n$. 

Our main result of this section is a demonstration that the list-of-2 decoding threshold rate for random \emph{linear} codes is in fact greater than the corresponding threshold rate for random codes. This result demonstrates that our techniques are precise enough to allow us to sharply delineate between different natural ensembles of codes.

In the following, $\CRLC^n(R)$ denotes a random linear code of block length $n$ and rate $R$. We define the threshold rate for random linear codes, denoted $ \TRLC^n(\cP)$, in a manner analogous to the definition for random codes. It is,
\[
  \sup\inset{R \in [0,1]\suchthat \PR{\CRLC^n(R) \text{ satisfies }\cP} \leq \tfrac{1}{2}}
\]

if such an $R$ exists, and $0$ otherwise.

\begin{theorem} \label{thm:list-of-2}
    Let $p \in (0,1/4)$. 
    \begin{enumerate}
        \item \label{item:list-of-2-I} The threshold rate for $(p,3)$-list-decoding for random codes satisfies 
        \[
            \lim_{n \to \infty}\TRC^n(\cP) = 1 - \frac{1+h(3p)+3p\log3}{3} .
        \]
        \item \label{item:list-of-2-II} The threshold rate for $(p,3)$-list-decoding for random \emph{linear} codes satisfies
        \[
            \lim_{n \to \infty}\TRLC^n(\cP) = 1-\frac{h(3p) + 3p \log3}{2} .
        \]
    \end{enumerate}
\end{theorem}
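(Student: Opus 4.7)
The plan splits along the two parts. Part~\ref{item:list-of-2-I} follows by instantiating Corollary~\ref{cor:beta} at $q=2$, $\ell=1$, $L=3$ and solving the resulting entropy-maximization problem explicitly. Part~\ref{item:list-of-2-II} concerns random linear codes, which lie outside the scope of Theorem~\ref{thm:threshold}; for this we adapt the framework of \cite{MRRSW, GLMRSW20} and defer the details to Appendix~\ref{app:list-of-2-rlc}.

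\textbf{Part I (random codes).} By Corollary~\ref{cor:beta} it suffices to compute $\beta(p,1,3)$, i.e., the largest entropy of a histogram-type $\tau$ over $\{0,1\}^3$ which, by Lemma~\ref{lem:HistTypesCharacterization}, satisfies $\EE_{u\sim\tau}[P_1(u)] \le 3p$. The values of $P_1$ on $\{0,1\}^3$ are immediate: $P_1(\mathbf{0}) = P_1(\mathbf{1}) = 0$, and $P_1(v) = 1$ for each of the six $v \in D_{1,1,3}$. The histogram-type condition and the $0 \leftrightarrow 1$ symmetry of both the feasible set and of entropy (combined with concavity) allow us, after symmetrizing, to assume $\tau(\mathbf{0}) = \tau(\mathbf{1}) = a$ and $\tau(v) = b$ for each $v \in D_{1,1,3}$. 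The constraints reduce to $2a + 6b = 1$ and $6b \le 3p$. A short derivative check shows that $H(\tau) = -2a\log a - 6b\log b$ is strictly increasing in $b$ whenever $a > b$, which holds throughout the feasible region as long as $p < 1/4$. Hence the inequality is tight: $b = p/2$, $a = (1-3p)/2$, and a routine simplification gives
\[
    \beta(p,1,3) \;=\; -(1-3p)\log(1-3p) - 3p\log p + 1 \;=\; 1 + h(3p) + 3p\log 3,
\]
so that Corollary~\ref{cor:beta} yields the claimed threshold $1 - \beta(p,1,3)/3$.

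\textbf{Part II (random linear codes).} For $\CRLC^n(R)$ the probability that a fixed $B \in \F_2^{n \times 3}$ is contained in the code equals $2^{-n(1-R)\dim(\spn B)}$, so the two families of ``dimension-$2$'' bad triples in $\F_2^3$ --- namely those containing $\mathbf{0}$, and those whose three nonzero columns sum to $\mathbf{0}$ --- are much more likely to be included than generic dimension-$3$ triples. A first-moment calculation restricted to each of these dimension-$2$ families produces, in both cases, the same maximum entropy $h(3p) + 3p \log 3$: the maximizer is supported on the four-element affine subspace $\inset{u \in \F_2^3 : u_1 + u_2 + u_3 = 0}$ (respectively, on the four vectors with a fixed coordinate equal to $0$), with mass $1-3p$ on $\mathbf{0}$ and mass $p$ on each of the other three vectors. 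Since the effective dimension is $2$, the predicted threshold becomes $1 - (h(3p) + 3p\log 3)/2$; a matching second-moment argument, modeled on \cite{MRRSW, GLMRSW20}, then promotes this prediction to a sharp threshold.

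\textbf{Main obstacle.} Part~I is a concrete finite optimization and poses no real difficulty beyond verifying that the list-decoding constraint binds at the optimum (which is exactly what restricts us to $p < 1/4$). The genuine technical issue lies in Part~II: the naive first moment over \emph{all} bad matrices (including the dimension-$3$ ones) already diverges at rates far below $1 - (h(3p)+3p\log 3)/2$, so one cannot simply apply a Markov-type argument to the total count; rather one must isolate the dimension-$2$ contributions and control the variance inflation coming from pairs $(B,B')$ with small joint linear span. This is precisely the machinery of \cite{MRRSW, GLMRSW20}, and adapting it to the list-of-two setting is where the appendix does its work.
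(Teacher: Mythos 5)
Part~\ref{item:list-of-2-I} of your proposal is correct and is essentially the same argument as the paper's: reduce to computing $\beta(p,1,3)$ via Corollary~\ref{cor:beta} and the characterization of Lemma~\ref{lem:HistTypesCharacterization}, then solve the resulting optimization. You symmetrize over the additional $0\leftrightarrow 1$ involution and optimize a two-parameter family, whereas the paper sets $x=\tau(B)$ for $B=\F_2^3\setminus\{\mathbf{0},\mathbf{1}\}$ and uses concavity of $y\mapsto y\log(1/y)$ directly to get $H(\tau)\le 1+h(x)+x\log 3$, then checks that this bound is increasing in $x$ on $[0,3/4)$. Both routes are valid and lead to the same computation, so this part is fine.

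For Part~\ref{item:list-of-2-II} you (like the paper) defer the details to the appendix, so what can be compared is your sketch of what that appendix should do. Here your plan diverges from the paper's in two places, one cosmetic and one substantive. Cosmetically: the paper does not re-derive a second-moment argument with "variance inflation from pairs with small joint span." Instead it directly invokes the characterization $\TRLC^n = 1-\max_{\tau\in T_n}\min_{\tau'\in\cI_\tau}H(\tau')/\dim(\tau')\pm o(1)$ from \cite{MRRSW} (Theorem~\ref{thm:mrrsw-rlc}), so the whole task collapses to a finite combinatorial optimization over implied types. Substantively: your proposed lower-bound witness is the \emph{dimension-$2$} type $\tau$ supported on $\{u\in\F_2^3 : u_1+u_2+u_3=0\}$ with masses $1-3p, p, p, p$. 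This is a genuine concern once you apply the implied-type formula, because the rank-$1$ map $u\mapsto u_1+u_2+u_3$ sends your $\tau$ to a point mass, giving an implied type with $H(\tau')=\dim(\tau')=0$ and an ill-defined ratio. The paper deliberately avoids this by choosing as its witness the \emph{full-support, dimension-$3$} type $\tau_{n,0}$ (approximating mass $(1-3p)/2$ on each of $\mathbf{0},\mathbf{1}$ and $p/2$ on each of the six remaining vectors), so that every full-rank projection has a well-defined, positive-dimensional image; the paper then simply enumerates the five families of implied types $\tau_{n,0},\ldots,\tau_{n,4}$ and checks each ratio against $(h(3p)+3p\log 3)/2$. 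Your dimension-$2$ construction does give the right entropy value and could likely be salvaged by a careful convention or a small full-support perturbation, but as written it is not a safe witness for the $\max\min$ formula, and you should be aware that the paper's choice of $\tau_{n,0}$ is made precisely to sidestep this degeneracy.
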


Note that the threshold rate for random linear codes is greater than the threshold rate for random codes, uniformly over $p \in (0,1/4)$. See Figure~\ref{fig:rate-thresh}. 

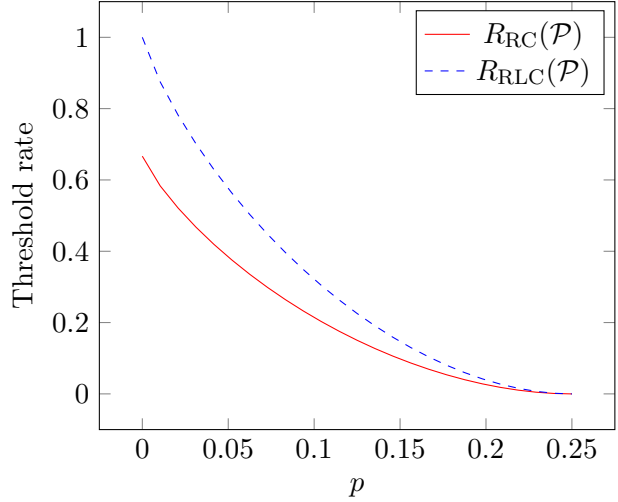
\begin{figure} 
	\centering
	\begin{tikzpicture}
	\begin{axis} [
	    title=Plots of the threshold rates,
		xlabel={$p$},
		ylabel={Threshold rate},
		xticklabels={-0.05, 0, 0.05, 0.1, 0.15, 0.2, 0.25}
		]
		
		\addplot[red,domain=0:0.25] {1-(1/3)*( (3*x)*ln(1/(3*x))/ln(2) + (1-3*x)*ln(1/(1-3*x))/ln(2) + 1 + 3*x*ln(3)/ln(2) )};
		\addlegendentry{$\TRC(\cP)$}
		
		\addplot[blue,domain=0:0.25,dashed]  {1 - 0.5*( ((3*x)*ln(1/(3*x))/ln(2) + (1-3*x)*ln(1/(1-3*x))/ln(2)) + 3*x*ln(3)/ln(2) )};
		\addlegendentry{$\TRLC(\cP)$}
		
		\end{axis}
		\end{tikzpicture}
	\caption{The threshold rate $R_{RC}$ (red) for $(p,3)$-list-decodability of random codes, and the threshold rate $R_{RLC}$ (blue, dashed) for $(p,3)$-list-decodability of random \emph{linear} codes. Note that, uniformly over $p$, random linear codes have the greater threshold rate. } 
	\label{fig:rate-thresh}
\end{figure}

The proof of Theorem~\ref{thm:list-of-2} is split into two parts. Part~\ref{item:list-of-2-I}, which computes the threshold rate for random codes, is proved in the remainder of this subsection. The proof of Part~\ref{item:list-of-2-II} concerning the threshold rate for random linear codes is deferred to Appendix~\ref{app:list-of-2-rlc}, as it requires certain tools and techniques developed in other works (\cite{MRRSW,resch2020list}) that we have not yet introduced in this paper.

\begin{proof} [Proof of Theorem~\ref{thm:list-of-2}, Part~\ref{item:list-of-2-I}]
By Corollary~\ref{cor:beta}, to compute the threshold rate, it suffices to show
\[
    \beta(p,1,3) = \max_{\tau \in \histypes_{p,1,3}}H(\mu) = 1+h_2(3p)+3p\log3.
\]
Once we have done this, Corollary~\ref{cor:beta} will tell us that
\[
    1 - \frac{\beta(p,1,3)}{3} = 1 - \frac{1+h(3p)+3p\log3}{3} ,
\]
as required. We will also use the characterization of $\histypes_{p,1,3}$ provided by Lemma~\ref{lem:HistTypesCharacterization}: namely, $\tau \in \histypes_{p,1,3}$ if and only if $\Eover{u \sim \tau}{P_1(u)}\leq 3p$. We first prove an upper bound on $\beta(p,1,3)$, followed by a matching lower bound. 

\paragraph{Upper bound on $\beta(p,1,3)$} Let $\tau \in \histypes_{p,1,3}$. Let $A = \{(0,0,0),(1,1,1)\}$ and let $B = \F_2^3\setminus A$. Define $x := \mu(B) = \PRover{v \sim \tau}{v \in B}$, and note that $\Eover{v \sim \tau}{P_1(v)} = x$. Indeed, $P_1(v) = 1$ if $v \in B$ and $P_1(v)=0$ if $v \in A$. Thus, we deduce $x \leq 3p$. Since $|A|=2$ and $|B|=6$, we may upper bound 
\begin{align*}
    H(\tau) &\leq x\log\frac{6}{x} + (1-x)\log\frac{2}{1-x} \\&= 1+h(x)+x\log3 . 
\end{align*}
As $x \leq 3p < \frac{3}{4}$, one can see that the quantity $1+h(x)+x\log3$ is increasing in $x$: indeed, defining $g:[0,1/4] \to \R$ by $x \mapsto 1+h(x)+x\log3$, note that $g'(x) = \log\inparen{\frac{3(1-x)}{x}}$ is positive if and only if $3(1-x)<x$, which is satisfied precisely when $x \in (0,\tfrac{3}{4})$. Therefore, $H(\tau) \leq 1+h(3p)+3p\log3$. Thus,  
\begin{align*}
    \beta(p,1,3) &= \max_{\tau \in \histypes_{p,1,3}}H(\tau) \\&\leq 1+h(3p)+3p\log3 .
\end{align*}

\paragraph{Lower bound on $\beta(p,1,3)$} Now, define the distribution $\tau^*$ by $\tau^*(v) = \frac{1-3p}{2}$ for $v \in A$ and $\tau^*(v) = \frac{3p}{6} = \frac{p}{2}$ for $v \in B$. Reasoning as above, $\Eover{v \sim \tau^*}{P_1(v)} = |B| \cdot \frac{3p}{6} = 3p$, and moreover it is clear that $\tau(\pi(v))=\tau^*(v)$ for all permutations $\pi:[3]\to[3]$ and $v \in \F_2^3$. Thus, $\tau^* \in \histypes_{p,1,3}$. Furthermore 
\begin{align*}
    H(\tau^*) &= |B|\cdot \frac{p}{2}\log\frac{2}{p} + |A|\cdot\frac{1-3p}{2}\log\frac{2}{1-3p} \\&= 1+h(3p)+3p\log3 ,
\end{align*}
so
\begin{align*}
    \beta(p,1,3) &= \max_{\tau \in \histypes_{p,1,3}}H(\tau) \geq H(\tau^*) \\&= 1+h(3p)+3p\log3 . \qedhere 
\end{align*}
\end{proof}

	\section{Computing the threshold rate for list-recovery efficiently}\label{sec:efficient}

In the previous sections, we gave precise analytical expressions for the threshold rate for list-recovery in certain parameter regimes.  However, there are some regimes where these bounds aren't precise.  In this section, we consider the question of computing the threshold rate $R^*$ algorithmically, given $p,\ell$ and $L$.  We use tools from the study of entropy-maximizing distributions to develop a simple binary-search-based procedure to pinpoint $R^*$ up to arbitrarily small additive error.

We begin with a lemma that shows that we can compute the cardinality $|D_{d,\ell,L}|$ efficiently; we will use this as a subroutine in our final algorithm.

\begin{lemma}\label{lem:DComputation}
Given $0\le d\le L$ and $1\le \ell\le q$, the cardinality $\left|D_{d,\ell,L}\right|$ can be computed in time $$O\left((L+1)^q + \poly(q,L)\right).$$
\end{lemma}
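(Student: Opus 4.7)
}
The plan is to reduce the counting problem to an enumeration over histograms of symbol frequencies. The first observation is that $P_\ell(v)$ depends only on the multiset of entries of $v$: if, for a vector $v \in \Sigma^L$, we let $n_\sigma(v) = |\{i \in [L] \suchthat v_i = \sigma\}|$ for each $\sigma \in \Sigma$, then the optimal set $A \in \binom{\Sigma}{\ell}$ achieving the minimum in the definition of $P_\ell(v)$ consists of the $\ell$ symbols with largest $n_\sigma(v)$ values (breaking ties arbitrarily). Consequently, writing $n_{(1)} \ge n_{(2)} \ge \cdots \ge n_{(q)}$ for the sorted counts, we obtain the identity $P_\ell(v) = \sum_{i=\ell+1}^{q} n_{(i)} = L - \sum_{i=1}^{\ell} n_{(i)}$.

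This identity reduces the problem to summing multinomial coefficients over histograms. Specifically, let $\cH_L = \{(n_1,\ldots,n_q) \in \Z_{\ge 0}^q \suchthat \sum_{\sigma=1}^q n_\sigma = L\}$ be the set of histograms of vectors in $\Sigma^L$. For $\mathbf{n} \in \cH_L$, let $P_\ell(\mathbf{n})$ denote the common value of $P_\ell(v)$ for all $v$ with histogram $\mathbf{n}$ (computed via the identity above). Then
\[
    |D_{d,\ell,L}| = \sum_{\substack{\mathbf{n} \in \cH_L\\ P_\ell(\mathbf{n})=d}} \binom{L}{n_1,n_2,\ldots,n_q} .
\]
The algorithm computes this sum directly: enumerate all $\mathbf{n} \in \cH_L$, and accumulate those satisfying $P_\ell(\mathbf{n}) = d$.

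For the runtime analysis, I would first precompute the factorials $0!,1!,\ldots,L!$ in $\poly(L)$ time using standard arbitrary-precision arithmetic; this gives $O(\poly(q,L))$ preprocessing. The number of histograms $|\cH_L|$ is the number of weak compositions of $L$ into $q$ parts, which is $\binom{L+q-1}{q-1} \le (L+1)^q$; the enumeration can be done by a simple nested loop or a recursive routine that produces each histogram in $O(q)$ time. For each histogram, sorting the $q$ counts to evaluate $P_\ell(\mathbf{n})$ costs $O(q \log q)$ and looking up the multinomial coefficient from the precomputed factorials costs $\poly(q,L)$. Multiplying these bounds shows the algorithm runs in time $(L+1)^q \cdot \poly(q,L)$, which is subsumed by the claimed bound $O((L+1)^q + \poly(q,L))$ once we absorb the polynomial factors in $q$ and $L$.

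There is no serious obstacle here; the only subtlety is the argument that $P_\ell$ is a function of the histogram alone, which is the observation that makes the enumeration feasible. If one wanted a truly $O(1)$-per-histogram amortized cost (to match the stated bound literally rather than up to polynomial factors), one could enumerate $\cH_L$ in an order where consecutive histograms differ by a single pair of unit increment/decrement operations and maintain the sorted counts and multinomial coefficient incrementally; this sharpening is not conceptually difficult but also not needed for any of the downstream applications, where the $(L+1)^q$ term dominates.
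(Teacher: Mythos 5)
Your proof takes essentially the same approach as the paper's: reduce to enumerating histograms $\eta:\Sigma\to\{0,\ldots,L\}$ (since $P_\ell$ depends only on the multiset of entries), sum the multinomial coefficients over those satisfying the criterion, and bound the enumeration by $(L+1)^q$. Your version is slightly more explicit about the per-histogram work and honestly flags that the naive count is $(L+1)^q\cdot\poly(q,L)$ rather than $(L+1)^q+\poly(q,L)$; the paper's proof is terser and glosses over this, so your caveat and the proposed incremental enumeration are a fair remark but not a substantive deviation.
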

\begin{proof}
Given a vector $v\in \Sigma^L$, define its \emph{histogram} $\eta_v : \Sigma\to \Z_{\ge 0}$ by $$\eta_v(\sigma) =  |\{i\in [L]\mid v_i = \sigma\}|.$$
Note that the number of vectors in $\Sigma^L$ with a given histogram $\eta$ is given by the multinomial $$\binom{L}{\left(\eta(\sigma)\right)_{\sigma \in \Sigma}}.$$
Also, a vector $v$ belongs to $D_{d,\ell,L}$ if and only if 
\begin{equation}\label{eq:DCriterionByEta}
L - \max_{A\in \binom\Sigma\ell} \sum_{\sigma\in A}\eta(\sigma) = d.
\end{equation}
Thus, 
$$\left|D_{d,\ell,L}\right| = \sum_{\eta} \binom{L}{\left(\eta(\sigma)\right)_{\sigma \in \Sigma}},$$
where the sum is over all distributions $\eta$ satisfying \eqref{eq:DCriterionByEta}. In particular, $\left|D_{d,\ell,L}\right|$ can be computed by going over all $(L+1)^q$ functions $\eta:\Sigma\to \{0,\ldots, L\}$ and summing the terms corresponding to those functions that satisfy \eqref{eq:DCriterionByEta}. 
\end{proof}

We recall the following standard facts from the theory of entropy-maximizing distributions. 
\begin{lemma}\cite[Sec.\ 3]{WJ08}\label{lem:ExponentialFamilies}
Let $\Omega$ be a finite nonempty set, $f:\Omega\to \R$ and $t\in \R$. Let $S_t$ denote the set of all distributions $\tau$ over $\Omega$ such that $\Eover{\omega\sim \tau}{f(\omega)} = t$. Let $$F(t) = \max_{\tau \in S_t}H(\tau).$$
Then
$$ F(t) = \inf_{\alpha\in \R} \left[\log_q \left(\sum_{\omega \in \Omega} q^{\alpha\cdot f(\omega)}\right) - \alpha t\right].$$
Furthermore:
\begin{enumerate}
    \item\label{enum:ExponentialFamiliesSymmetry} If $\tau$ is the entropy maximizing distribution, then $\tau(\omega) = \tau(\omega')$ for every $\omega,\omega'\in \Omega$ such that $f(\omega)=f(\omega')$.
    \item\label{enum:ExponentialFamiliesMonotone} Let $t^* = \Eover{\omega\sim \Uni(\Omega)}{f(\omega)}$. Then, $F(t^*) = \log |\Omega|$, and $F(t)$ is nondecreasing (resp.\ nonincreasing) in the range $t < t^*$ (resp. $t > t^*$).
    \item\label{enum:ExponentialFamiliesConvex} The function $$\log_q \left(\sum_{\omega \in \Omega} q^{\alpha\cdot f(\omega)}\right) - \alpha t$$ is convex in $\alpha$. 
\end{enumerate}
\end{lemma}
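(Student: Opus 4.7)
The plan is to establish the main identity via convex duality (Gibbs' inequality) and then to read off the three consequences from the form of the entropy-maximizing distribution. First I would introduce the exponential-family parametrization: for each $\alpha \in \R$, set
$$\tau_\alpha(\omega) := \frac{q^{\alpha f(\omega)}}{Z(\alpha)}, \qquad Z(\alpha) := \sum_{\omega \in \Omega} q^{\alpha f(\omega)}.$$
For one direction, the input is nonnegativity of the base-$q$ KL divergence $D_{\mathrm{KL}}(\tau \,\|\, \tau_\alpha) \ge 0$ for any $\tau \in S_t$: expanding this and using the constraint $\Eover{\omega \sim \tau}{f(\omega)} = t$, it rearranges to $H(\tau) \le \log_q Z(\alpha) - \alpha t$, which immediately yields $F(t) \le \inf_{\alpha \in \R}\bigl[\log_q Z(\alpha) - \alpha t\bigr]$.

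For the matching lower bound I would locate a minimizing $\alpha^*$ of the convex function $g(\alpha) := \log_q Z(\alpha) - \alpha t$. A short calculation gives $g'(\alpha) = \Eover{\omega \sim \tau_\alpha}{f(\omega)} - t$, so the critical point $\alpha^*$ is exactly the one for which $\tau_{\alpha^*} \in S_t$; plugging $\tau_{\alpha^*}$ into the KL inequality above converts it into an equality. Existence of such an $\alpha^*$ whenever $t$ lies in the interior of $[\min_\omega f(\omega),\max_\omega f(\omega)]$ follows from continuity of $\alpha \mapsto \Eover{\omega \sim \tau_\alpha}{f(\omega)}$ together with its limits at $\alpha \to \pm\infty$.

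Properties~\ref{enum:ExponentialFamiliesSymmetry}--\ref{enum:ExponentialFamiliesConvex} then fall out quickly. Property~\ref{enum:ExponentialFamiliesSymmetry} is immediate, since $\tau_{\alpha^*}(\omega)$ depends on $\omega$ only through $f(\omega)$. Property~\ref{enum:ExponentialFamiliesConvex} reduces to the standard convexity of the log-moment-generating function; one can also verify directly that $g''(\alpha)$ equals $\ln q$ times the variance of $f$ under $\tau_\alpha$, hence nonnegative. For Property~\ref{enum:ExponentialFamiliesMonotone}, note that $\Uni(\Omega) \in S_{t^*}$ realizes $H = \log_q |\Omega|$, which is already the largest entropy on $\Omega$, so $F(t^*) = \log_q |\Omega|$. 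Monotonicity on each side of $t^*$ then follows by a convex-combination trick: for $t_1 < t_2 < t^*$, choose $\lambda \in (0,1)$ with $\lambda t_1 + (1-\lambda)t^* = t_2$ and observe that $\lambda \tau_1 + (1-\lambda)\Uni(\Omega) \in S_{t_2}$, where $\tau_1$ achieves $F(t_1)$; concavity of entropy then gives $F(t_2) \ge \lambda F(t_1) + (1-\lambda) F(t^*) \ge F(t_1)$, and the reflected argument handles $t > t^*$.

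The main obstacle I anticipate is the treatment of boundary values of $t$: when $t$ equals $\min_\omega f(\omega)$ or $\max_\omega f(\omega)$, the minimizer $\alpha^*$ blows up to $\pm\infty$ and $\tau_{\alpha^*}$ degenerates to a distribution supported on $\argmin f$ or $\argmax f$, and when $t$ lies strictly outside $[\min_\omega f, \max_\omega f]$ the set $S_t$ is empty. These cases should be handled by interpreting the identity in a limiting sense (with $F(t) = -\infty$ matching $\inf_\alpha g(\alpha) = -\infty$ when $S_t = \emptyset$), and require some routine bookkeeping. Away from the boundary the argument is a textbook application of the method of Lagrange multipliers, and I do not expect further trouble.
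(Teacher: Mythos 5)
Your argument is correct, and in fact the paper does not prove this lemma at all---it cites it directly to Wainwright--Jordan~\cite{WJ08}, so there is no in-paper proof to compare against. Your derivation is the standard one: the Gibbs/KL inequality $D_{\mathrm{KL}}(\tau\,\|\,\tau_\alpha)\ge 0$ gives $H(\tau)\le \log_q Z(\alpha)-\alpha t$ for every $\tau\in S_t$ and every $\alpha$, and equality is achieved at the $\alpha^*$ where $g'(\alpha^*)=\Eover{\omega\sim\tau_{\alpha^*}}{f(\omega)}-t=0$, which exists by the intermediate value theorem for $t$ strictly between $\min_\omega f$ and $\max_\omega f$. The derivations of Properties~\ref{enum:ExponentialFamiliesSymmetry} and \ref{enum:ExponentialFamiliesConvex} from the explicit form of $\tau_{\alpha^*}$ and from $g''(\alpha)=\ln q\cdot\Var_{\tau_\alpha}(f)\ge 0$ are both right, and your convex-combination argument for Property~\ref{enum:ExponentialFamiliesMonotone} is a clean way to get unimodality without invoking concavity of $F$ itself. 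Two small remarks: (i) to justify the uniqueness implicit in the phrasing ``the entropy maximizing distribution,'' note that the KL bound is tight only when $\tau=\tau_{\alpha^*}$, so the maximizer over $S_t$ is indeed unique; (ii) the lemma as printed says $F(t^*)=\log|\Omega|$, which by the paper's conventions would mean $\log_2$---this is a typo for $\log_q|\Omega|$, consistent with the base-$q$ entropy used throughout, and you silently wrote the correct version.
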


\begin{lemma}\label{lem:betaExplicit}
Let $\ell\le q$, $L\in \N$ and $0< p\le 1$, and let $t^* = q^{-L}\cdot\sum_{d=0}^L{d\cdot\left|D_{d,\ell,L}\right|}$. Then, in the case when $p< t^*$,

\[\beta(p,\ell,L) = \inf_{\alpha \in \R} \left[\log_q \left(\sum_{d=0}^L \left|D_{d,\ell,L}\right|\cdot q^{\alpha d}\right) - \alpha p L\right]\]
and when $p \ge t^*$,
\[\beta(p,\ell,L) = L.\]

\end{lemma}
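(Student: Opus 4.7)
The plan is to apply the standard maximum-entropy-under-expectation-constraint machinery of Lemma~\ref{lem:ExponentialFamilies} to the function $f = P_\ell$ on $\Omega = \Sigma^L$. By Definition~\ref{def:betaT} combined with Lemma~\ref{lem:HistTypesCharacterization}, $\beta(p,\ell,L)$ is exactly the maximum of $H(\tau)$ over those distributions $\tau$ on $\Sigma^L$ that (i) are histogram types and (ii) satisfy $\Eover{u\sim\tau}{P_\ell(u)} \le pL$.

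The first key observation is that the histogram-type constraint can be dropped for free. Since $P_\ell(v)$ depends on $v$ only through its histogram, it is coordinate-permutation invariant, i.e.\ $P_\ell(\pi(v))=P_\ell(v)$ for every permutation $\pi:[L]\to[L]$. By item~\ref{enum:ExponentialFamiliesSymmetry} of Lemma~\ref{lem:ExponentialFamilies}, the entropy-maximizer among \emph{all} distributions on $\Sigma^L$ with a fixed expectation $\Eover{u\sim\tau}{P_\ell(u)}=t$ assigns equal mass to any two vectors having the same value of $P_\ell$, and in particular to any two coordinate-permuted vectors. Hence such a maximizer is automatically a histogram type, and one gets
\[
    \beta(p,\ell,L) = \max_{0\le t \le pL} F(t), \qquad F(t) := \sup\left\{H(\tau) \suchthat \Eover{u\sim\tau}{P_\ell(u)}=t\right\}.
\]

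Next, I would apply Lemma~\ref{lem:ExponentialFamilies} with $f=P_\ell$ to get a closed form for $F$. Grouping terms by the value of $P_\ell$, and using $P_\ell^{-1}(d) = D_{d,\ell,L}$, we have $\sum_{v\in\Sigma^L} q^{\alpha P_\ell(v)} = \sum_{d=0}^L |D_{d,\ell,L}| \cdot q^{\alpha d}$, so
\[
    F(t) = \inf_{\alpha\in\R}\left[\log_q\left(\sum_{d=0}^L |D_{d,\ell,L}|\cdot q^{\alpha d}\right) - \alpha t\right].
\]
Item~\ref{enum:ExponentialFamiliesMonotone} then tells us that $F$ is nondecreasing on $[0,t^*]$ and nonincreasing on $[t^*,L]$, where $t^* = \Eover{v\sim \Uni(\Sigma^L)}{P_\ell(v)} = q^{-L}\sum_d d\,|D_{d,\ell,L}|$, and that $F(t^*) = \log_q|\Sigma^L| = L$.

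Finally, I would split into the two cases of the lemma statement. If $pL < t^*$, then $F$ is nondecreasing throughout $[0,pL]$, so $\max_{0\le t\le pL} F(t) = F(pL)$, which matches the first branch. If $pL \ge t^*$, then $t^* \in [0,pL]$ is feasible and attains $F(t^*)=L$, the global maximum of $F$, which matches the second branch. The only subtle step is the permutation-invariance observation that lets us discard the histogram-type constraint; beyond that, everything is a direct invocation of Lemma~\ref{lem:ExponentialFamilies}.
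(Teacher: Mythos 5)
Your proposal is correct and follows essentially the same route as the paper: both reduce to Lemma~\ref{lem:HistTypesCharacterization}, invoke item~\ref{enum:ExponentialFamiliesSymmetry} of Lemma~\ref{lem:ExponentialFamilies} to deduce that the entropy maximizer at a fixed expectation is automatically a histogram type (so the histogram constraint can be dropped), use item~\ref{enum:ExponentialFamiliesMonotone} for monotonicity, and then apply the main part of the lemma for the closed form. The only cosmetic difference is that the paper disposes of the $pL \ge t^*$ case up front by directly exhibiting the uniform distribution as a member of $\histypes_{p,\ell,L}$, whereas you fold both cases into the single monotonicity-of-$F$ argument via $F(t^*)=L$; both are equally valid.
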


\begin{remark} \label{rem:zero-rate-regime}
In general, $\frac{t^*}L$ is slightly smaller than $1-\frac \ell q$. Thus, Lemma \ref{lem:betaExplicit} lemma extends the range in which the threshold is $0$ from $\left[1-\frac \ell q,1\right]$ (Corollary \ref{cor:thresh}) to $[t^*,1]$.
\end{remark}

\begin{proof}
Suppose that $p \ge t^*$. Let $\tau$ be the uniform distribution over $\Sigma^L$. Recalling the Definition of $P_\ell(u)$ from Definition~\ref{def:Pell}, we have $\Eover{u\sim\tau}{P_\ell(u)} = t^*L$. Consequently,
Lemma \ref{lem:HistTypesCharacterization} implies that $\tau\in \histypes_{p,\ell,L}$. Hence, $\beta(p,\ell,L)\ge H(\tau) = L$, and this bound is clearly tight since no distribution over $\Sigma^L$ has entropy larger than $L$.

We proceed, assuming that $p < t^*$. Let $S_{t,\ell,L}$ be the set of all distributions $\tau$ over $\Sigma^L$ such that $\Eover{u\sim \tau}{P_\ell(u)} = tL$. By Lemma \ref{lem:ExponentialFamilies}(\ref{enum:ExponentialFamiliesSymmetry}), the distribution $\tau$ that has maximal entropy in $S_{t,\ell,L}$, satisfies $\tau(u)=\tau(v)$ whenever $P_\ell(u)=P_\ell(v)$. In particular, $\tau$ is invariant to coordinate permutations, so it is a histogram-type. Therefore, using Lemma~\ref{lem:HistTypesCharacterization} and the definition of $\beta(p,\ell,L)$,
$$\beta(p,\ell,L) = \max_{t\le p} \max_{\tau\in S_t}H(\tau).$$
By Lemma  \ref{lem:ExponentialFamilies}(\ref{enum:ExponentialFamiliesMonotone}), this expression is maximized by $t=p$, namely
$$\beta(p,\ell,L) = \max_{\tau\in S_p}H(\tau).$$
Consequently, by the main part of Lemma \ref{lem:ExponentialFamilies},
\begin{align*}&\beta(p,\ell,L) \\&= \min_{\alpha\in \R}\left[\log_{q}\left(\sum_{u\in \Sigma^L}q^{\alpha P_\ell(u)}\right) - \alpha pL\right] \\&= \min_{\alpha\in \R}\left[\log_{q}\left(\sum_{d=0}^L \left|D_{d,\ell,L}\right|\cdot q^{\alpha d}\right) - \alpha pL\right].\qedhere\end{align*}
\end{proof}

\begin{theorem}\label{cor:algorithm}
There is an algorithm, that, given $p$, $\ell$, $L$ and $\eps > 0$, computes the threshold-rate for $(p,\ell,L)$-list-recovery, within an additive error of $\eps$, in time $O\left((L+1)^q + \poly(q,L,\log \frac 1\eps,\beta(p)\right)$,
where
$$\beta(p) = \begin{cases}
\log\frac 1p&\text{ if } p>0\\
1&\text{ if } p=0.
\end{cases}
$$
\end{theorem}
\begin{proof}Consider the following algorithm:
\begin{enumerate}
    \item Compute the coefficients $|D_{d,\ell,L}|$ for $0\le d\le L$.
    \item If $p=0$, return the expression for $R^*$ given in Lemma \ref{lem:zero}.
    \item If $p\ge t^*$ (where $t^*$ is as in Lemma \ref{lem:betaExplicit}.), return $0$.
    \item If $p < t^*$, use the bisection method (i.e., binary search) to approximate the minimum $M$ of the function $$g(\alpha) := \log_q \left(\sum_{d=0}^L \left|D_{d,\ell,L}\right|\cdot q^{\alpha d}\right) - \alpha p L$$ to within an additive error of $\eps L$. Return $1-\frac ML$.
\end{enumerate}

The correctness of the algorithm follows from Lemma \ref{lem:betaExplicit}. The first step can be completed in time $O\left((L+1)^q + \poly(q,L)\right)$ due to Lemma \ref{lem:DComputation}. 

To analyze the last step, we first note that $g(\alpha)$ is convex, due to Lemma \ref{lem:ExponentialFamilies}(\ref{enum:ExponentialFamiliesConvex}), and thus has a unique minimum. In addition, it is $L$-Lipshitz. Finally, as we show in Claim~\ref{claim:straightforward} below,  the minimizing $\alpha$ lies in the range $\left[-\left(L+\log_q \frac 1p\right),0\right]$. It follows that this step requires at most $O(\log L + \log \log \frac 1p+\log \frac 1\eps)$ bisection iterations.

It remains to prove the following claim.
\begin{claim}\label{claim:straightforward}
\[\argmin_\alpha g(\alpha) \in \left[ - \left(L + \log_q\frac{1}{p}\right), 0\right].\]
\end{claim}
\begin{proof}
We first compute
$$\frac{dg}{d\alpha} = \frac{\left(\sum_{d=0}^L \left|D_{d,\ell,L}\right|\cdot d\cdot q^{\alpha d}\right)}{\left(\sum_{d=0}^L \left|D_{d,\ell,L}\right|\cdot q^{\alpha d}\right)} - pL.$$

The derivative at $0$ is positive due to our assumption that $p < t^*$, which, along with the fact that $g$ is convex, implies that the minimizer is less than zero. It is left to prove that the derivative at $\alpha_0 := -(L+\log_q \frac 1p)$ is negative, which will imply that the minimizer is at least $-(L+\log_q \frac 1p)$. 

To see this, let $Z = \sum_{d=0}^L \left|D_{d,\ell,L}\right|\cdot q^{\alpha_0 d}$. Then,
\begin{align*}
\frac{dg}{d\alpha}_{|\alpha=\alpha_0} &= 
 \frac{\left(\sum_{d=0}^L \left|D_{d,\ell,L}\right|\cdot d\cdot q^{\alpha_0 d}\right)}{Z} - pL \\&\le \frac{L\cdot \left(\sum_{d=1}^L \left|D_{d,\ell,L}\right|\cdot \cdot q^{\alpha_0 d}\right)}{Z} - pL
 \end{align*}
so it suffices to show that 
$$\left(\sum_{d=1}^L \left|D_{d,\ell,L}\right|\cdot \cdot q^{\alpha_0 d}\right) \le pZ.$$
The left-hand side is  $Z - |D_{0,\ell,L}|$, so the above is equivalent to 
$$\frac{|D_{0,\ell,L}|}Z \ge 1-p.$$
Now,
$$
Z \le |D_{0,\ell,L}| + q^L \cdot q^{\alpha_0} = |D_{0,\ell,L}| + p
$$
so indeed,
$$
\frac{|D_{0,\ell,L}|}Z \ge \frac{|D_{0,\ell,L}|}{|D_{0,\ell,L}|+p} 
$$
which is at least $1-p$ since $|D_{0,\ell,L}| \ge 1$.
\end{proof}
This completes the proof of the claim, and thus of Corollary~\ref{cor:algorithm}.
\end{proof}

	\section{Acknowledgements}
	We would like to thank Ray Li for helpful conversations.
	\bibliographystyle{alpha}
	\bibliography{refs}
	
	\appendix
	
	\section{Example of a non-symmetric property}\label{app:example}
	
In this section, we give an example of a property that is row-symmetric (aka, is defined by a collection of bad types), but is not symmetric. Then we will see that Theorem~\ref{thm:threshold} is \em not \em satisfied for this property.

Here is one such property $\mathcal{P} = \{P_n\}_{n \in \mathbb{N}}$.  Let $p \in (0,1/2)$. Say that $C \subset \{0,1\}^n$ satisfies $\mathcal{P}$ if there are three codewords $c^{(1)}, c^{(2)}, c^{(3)} \in C$ so that for some point $z \in \{0,1\}^n$, 
$d(c^{(1)}, z), d(c^{(2)},z) \leq \frac{pn}{2}$, and
 $d(c^{(3)},z) \leq pn$.

\def\toy{\text{toy}}

That is, this property is similar to list-of-two decoding, except that two of the codewords must be closer to the central point than the third must be.  This property is not symmetric, and Theorem~\ref{thm:threshold} does not hold for it.  However, in order to see an example that is as clean as possible, let's instead consider a ``toy'' version of this property.

Define the property $\mathcal{P}^{\toy}$ as follows.  Let $T = \{\tau_1, \tau_2, \tau_3\}$, where
\begin{align*} &\tau_1((1,0,0)) = \tau_1((0,1,0)) = \frac{p}{2} \\& \tau_1((0,0,1)) = p \\& \tau_1((0,0,0)) = 1 - 2p \end{align*}
\begin{align*} &\tau_2((1,0,0)) = \tau_2((0,0,1)) = \frac{p}{2} \\& \tau_2((0,1,0)) = p \\& \tau_2((0,0,0)) = 1 - 2p \end{align*}
\begin{align*} &\tau_3((0,0,1)) = \tau_3((0,1,0)) = \frac{p}{2} \\& \tau_3((1,0,0)) = p \\& \tau_3((0,0,0)) = 1 - 2p \end{align*}
(and $\tau_i((1,1,1))=0$ for all $i$).
That is, $\tau_1$, $\tau_2$, and $\tau_3$ correspond to the matrices $B$ so that two columns have exactly a $p/2$ fraction of $1$'s, one column has exactly a $p$ fraction of $1$'s, and the support of the columns are disjoint.  (For simplicity, assume that $pn/2$ is an integer.)  Then $\mathcal{P}^{\toy}$ is defined by the inclusion of at least one type in $T$.  

The toy version $\mathcal{P}^{\toy}$ is in spirit the same as the property $\mathcal{P}$ above. The differences are that (a) we are fixing the vector $z$ to be the all-zero vector, (b) we demand that the distance be exactly $p$ or $p/2$, rather than at most $p$ or $p/2$, and (c) we are not considering configurations where the three vectors have overlapping ones.  The computations are similar for both $\mathcal{P}^{\toy}$ and $\mathcal{P}$.  We mention both, but analyze $\mathcal{P}^{\toy}$, since $\mathcal{P}$ is perhaps more natural, while $\mathcal{P}^{\toy}$ is simpler to analyze.

First, observe that $\mathcal{P}^{\toy}$ is not symmetric.  In particular, the distribution $\frac{1}{3}(\tau_1 + \tau_2 + \tau_3)$, which corresponds to a matrix with $2p/3$ fraction of $1$'s in each column, is not contained in $T$.

Theorem~\ref{thm:threshold} would predict that the threshold rate for $\mathcal{P}^{toy}$ would be 
\begin{align*} R^{\text{theorem}} &= 1 - \max_{\tau \in T}H(\tau)/3 \\&= 1 - \frac{1}{3}\biggl( p \log\left( \frac{2}{p^2}\right) \\& \biggr. \quad+ \biggl. (1 - 2p) \log\left( \frac{1}{1 - 2p} \right) \biggr). \end{align*}
However, the actual threshold rate is larger than this.  To see this, observe that the probability that $C$ satisfies $\mathcal{P}^{\toy}$ is at most the probability that $C$ contains the type $\tau^\dagger$ on $\{0,1\}^2$ given by
\begin{align*} &\tau^\dagger((1,0)) = \tau^\dagger((0,1)) = \frac{p}{2} \\& \tau^\dagger((0,0)) = 1 - p \\& \tau^\dagger((1,1)) = 0. \end{align*}
Indeed, suppose that $C$ contains $\tau_i$ for some $i=1,2,3$.  Then in particular it contains two codewords of weight $p/2$, meaning that it contains $\tau^\dagger$.  Therefore,
\[ R^* \geq R^\dagger, \]
where $R^\dagger$ is the rate threshold for the property corresponding to $\{\tau^\dagger\}$.  But the rate threshold for $\tau^\dagger$ is 
\begin{align*} R^\dagger &= 1 - \frac{H(\tau^\dagger)}{2} \\&= 1 - \frac{1}{2} \left( p \log\left( \frac{2}{p} \right) + (1 - p) \log\left( \frac{1}{1-p}\right) \right). \end{align*}
Figure~\ref{fig:example} plots these two values, and it is clear that $R^\dagger$ (and hence $R^*$) is larger than $R^{\text{theorem}}$ for $p \leq 0.3$.

\begin{figure}
	\centering
	\begin{tikzpicture}
	\begin{axis} [
	    title={Theorem~\ref{thm:threshold} does not hold for $\mathcal{P}^{\toy}$},
		xlabel={$p$},
		xticklabels={-0.05, 0,  0.1,  0.2, 0.3, 0.4, 0.5}
		]
		
		\addplot[red,domain=0:0.5] {1-(1/3)*( (x)*ln(2/(x^2))/ln(2) + (1-2*x)*ln(1/(1-2*x))/ln(2)  )};
		\addlegendentry{$R^{\text{theorem}}$}
		
		\addplot[blue,domain=0:0.5,dashed]  {1 - 0.5*( ((x)*ln(2/(x))/ln(2) + (1-x)*ln(1/(1-x))/ln(2)) )};
		\addlegendentry{$R^\dagger$}
		
		\end{axis}
		\end{tikzpicture}
	\caption{The value $R^{\text{theorem}}$ that Theorem~\ref{thm:threshold} would predict for $\mathcal{P}^{\toy}$, and the value $R^\dagger$ that is a lower bound on the actual value of $R^*$, for $p \in (0,0.5)$. } 
	\label{fig:example}
\end{figure}
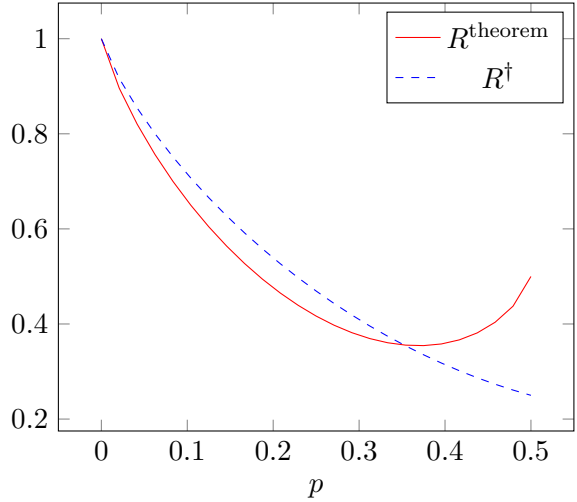
	
	\section{Proofs of claims from Sections \ref{sec:Preliminaries} and \ref{sec:ThersholdCharacterization}}\label{app:ProofsForCharacterization}
	
We restate and prove the deferred lemma.

\GoodApproximationForNonRecoverability*
\begin{proof}
Let $\tau_0\in T_{p,\ell,L}$, and let $\rho_0$ be a distribution over $\Sigma^L\times \binom\Sigma\ell$ that satisfies Conditions \ref{enum:rhoMarginalIsTauForT} and \ref{enum:rhoSmallRadius} with regard to $\tau_0$. Let $\tau_1$ denote the atomic distribution over $\Sigma^L$ that always returns the all-$\sigma$ vector, for some arbitrary $\sigma\in \Sigma$. Let $\rho_1$ be the atomic distribution over $\Sigma^L\times \binom\Sigma\ell$ that always returns $(\sigma,W)$, where $W\in \binom\Sigma\ell$ is some arbitrary set containing $\sigma$. 

Fix some $\eps>0$ and let $\tau_\eps$ and $\rho_\eps$ denote the mixture distributions $(1-\eps)\tau_0+\eps\tau_1$ and $(1-\eps)\rho_0+\eps\rho_1$, respectively. Let $\{(u^i,K^i)\}_{i=1}^n$ denote a sequence of $n$ independent samples from $\rho_\eps$. 

Sample $i$ uniformly from $[n]$. Let $\hat \tau$ denote the distribution of $(u^i)$ and let $\hat \rho$ be the distribution of $(u^i,K^i)$. It is straightforward to verify that $\hat \rho$ satisfies Conditions \ref{enum:rhoMarginalIsTauForT} and \ref{enum:rhoDivisibility} with regard to $\hat \tau$. If $p=0$ then Condition \ref{enum:rhoSmallRadiusForT} also clearly holds. 

Suppose that $p > 0$. We claim that, with some positive probability, Condition \ref{enum:rhoSmallRadiusForT} is satisfied as well, and consequently, $\hat\tau \in T_{p,\ell,L}^n$. 

Let $X_j$ denote the number of coordinates $i$ for which $u^i_j\notin K^i$. Note that the CDF of $X_j$ is bounded from below by that of a $\mathrm{Binomial}(n,(1-\eps)p)$ variable. Consequently, by Hoeffding's bound, \begin{align*}
&\PRover{\hat{\rho}}{\exists j\in L \text{ s.t. } \PRover{(u,K)\sim \hat \rho}{u_j\in K} > p} \\&\le
L\cdot\PRover{\hat{\rho}}{\PRover{(u,K)\sim \hat \rho}{u_j\in K} > p} \\ &= L\cdot\PR{X_j > pn} \\
&\le L\cdot e^{-2(p\eps)^2n}.
\end{align*}
Hence, taking $\eps > \sqrt{\frac{\log_q L}{2p^2n}}$, Condition \ref{enum:rhoSmallRadiusForT} is satisfied with positive probability. Therefore, there exists $\hat \tau\in T_{p,\ell,L}$ with $d_\infty (\tau,\hat\tau) \le \eps = O\left(\sqrt{\frac{\log_q L}{p^2n}}\right)$.
\end{proof}
	
	\section{Threshold rate for list-of-two decoding of random linear codes} \label{app:list-of-2-rlc}
	In this section, we prove Part~\ref{item:list-of-2-II} of Theorem~\ref{thm:list-of-2}. For convenience, the statement is provided below. 

\begin{theorem} [Theorem~\ref{thm:list-of-2}, Part~\ref{item:list-of-2-II}, Restated] \label{thm:list-of-2-restated}
    Let $p \in (0,1/4)$. The threshold rate for $(p,3)$-list-decoding for random linear codes satisfies 
    \[
        \lim_{n \to \infty} \TRLC^n(\cP) = 1 - \frac{h(3p)+3p\log 3}{2} \ .
    \]
\end{theorem}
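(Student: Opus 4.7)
[Proof sketch of Theorem~\ref{thm:list-of-2-restated}]
The plan is to adapt the framework developed in the present paper to the random-linear-codes setting, drawing on the techniques of \cite{MRRSW,resch2020list}. The crucial observation is that a random linear code $\CRLC^n(R)$ contains a matrix $B \in \mathbb{F}_2^{n \times b}$ with probability $2^{-n(1-R)\cdot \rank(B)}$, rather than $2^{-n(1-R)\cdot t}$ where $t$ is the number of distinct columns, as in Fact~\ref{fact:ProbabilityContainedInMatrix}. Consequently, the first-moment approach yields a threshold formula of the form $R^* = 1 - \max_\tau H(\tau)/r(\tau)$, where the maximum ranges over bad types and $r(\tau)$ denotes the $\mathbb{F}_2$-rank of a typical matrix with row-distribution $\tau$. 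In particular, low-rank bad matrices are exponentially more likely to appear in a random linear code than high-rank ones, so the dominant contribution determines the threshold.

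For $(p,3)$-list-decoding with distinct columns over $\mathbb{F}_2$, a bad matrix $B = [c_1 \mid c_2 \mid c_3]$ has rank $2$ iff $c_1 \oplus c_2 \oplus c_3 = \zerovector$, and rank $3$ otherwise. The rank-$3$ analysis recovers the random-codes threshold from Part~\ref{item:list-of-2-I}, namely $1 - (1+h(3p)+3p\log 3)/3$, which one can verify is strictly smaller than the target $1 - (h(3p)+3p\log 3)/2$ for all $p \in (0,1/4)$. It thus suffices to analyze rank-$2$ bad types, whose rows necessarily lie in $K := \{(0,0,0), (1,1,0), (1,0,1), (0,1,1)\} \subseteq \mathbb{F}_2^3$. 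Since the three nonzero elements of $K$ form an orbit under coordinate permutations of $[3]$, an analog of Lemma~\ref{lem:SymmetricHasHighEntropy} shows that the entropy-maximizing bad type is a histogram-type with $\alpha_{110} = \alpha_{101} = \alpha_{011} =: \alpha$ and $\alpha_{000} = 1 - 3\alpha$.

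Next, I would characterize when such a histogram-type is bad. For each row $v \in K \setminus \{\zerovector\}$, setting $z_i = 1$ makes exactly one column of $B$ differ from $z$, namely the index $j$ where $v_j = 0$. Distributing these ``single-difference'' contributions symmetrically across the three column indices yields $|c_j \oplus z|/n = \alpha$ for each $j \in [3]$, so the bad condition reduces to $\alpha \leq p$ (a matching lower bound follows since the sum $\sum_j |c_j \oplus z|/n$ is always at least the mass outside row-type $\zerovector$). The entropy
\[
    H(\tau) \;=\; -(1-3\alpha)\log(1-3\alpha) - 3\alpha\log\alpha
\]
has derivative $3\log\frac{1-3\alpha}{\alpha}$, which is positive for $\alpha < 1/4$; since $p < 1/4$ the maximum is attained at $\alpha = p$, yielding $H^* = h(3p) + 3p\log 3$ (exactly as in the random-codes analysis, but now over a $4$-element support rather than the full $\mathbb{F}_2^3$). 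The rank-normalized framework then predicts
\[
    R^* \;=\; 1 - \frac{H^*}{2} \;=\; 1 - \frac{h(3p) + 3p\log 3}{2},
\]
as claimed.

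The main obstacle is rigorously importing the random-linear-codes framework from \cite{MRRSW,resch2020list}, in particular the second-moment argument. Unlike the proof of Theorem~\ref{thm:threshold}, containments $\{B \subseteq C\}$ and $\{B' \subseteq C\}$ are correlated through the linear structure: for two rank-$2$ bad matrices $B, B'$ whose concatenation $B \| B'$ has rank $r$, the joint containment probability is $2^{-n(1-R)r}$, so one must verify that the dominant variance contribution comes from the ``generic'' case $r = 4$ (where the column spans of $B$ and $B'$ intersect only in $\zerovector$) so that $\Var \le O(\E[X]^2)$ and Chebyshev's inequality closes the second-moment argument. Granted this framework, the entropy computation above is routine.
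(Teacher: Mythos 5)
Your final entropy calculation is correct, and the rank-$2$ type you isolate is essentially the same object as the key implied type in the paper's proof. However, the conceptual framework you use to get there has a genuine gap that would make the argument circular or wrong if taken at face value.

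The formula $R^* = 1 - \max_\tau H(\tau)/r(\tau)$ that you claim falls out of the first-moment calculation is not the correct threshold formula for random linear codes. The correct formula, which is Theorem~2.8 of \cite{MRRSW} and is quoted in the paper as Theorem~\ref{thm:mrrsw-rlc}, reads
\[
\TRLC^n(P_n) = 1 - \max_{\tau \in T_n} \min_{\tau' \in \cI_\tau} \frac{H(\tau')}{\dim(\tau')} \pm o_{n\to\infty}(1),
\]
where $\cI_\tau$ is the set of types implied by $\tau$ under full-rank linear maps. The inner minimum over implied types is essential, and it is precisely what your proposal omits. To see that this omission matters: under your formula, the rank-$3$ bad types would achieve the maximum, since $(1+h(3p)+3p\log 3)/3 > (h(3p)+3p\log 3)/2$ (equivalently $h(3p)+3p\log 3 < 2$, which always holds for $p<1/4$). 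So your formula would output the random-codes threshold $1 - (1+h(3p)+3p\log 3)/3$, not the claimed value. You do notice that the rank-$3$ threshold is strictly smaller than the target and conclude ``it thus suffices to analyze rank-$2$ bad types,'' but that conclusion does not follow from $\max_\tau H(\tau)/r(\tau)$; if anything, a smaller-threshold type would dominate in that formula. What actually justifies dismissing the rank-$3$ first-moment bound is the implied-type phenomenon: for any rank-$3$ bad type $\tau$, the linear map $(a,b,c)\mapsto(a+b,a+c)$ produces a $\tau$-implied distribution of dimension $2$ with entropy at most $h(3p)+3p\log 3$, and a linear code containing a matrix of type $\tau$ must also contain the implied matrix. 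This forces $\min_{\tau' \in \cI_\tau} H(\tau')/\dim(\tau') \le (h(3p)+3p\log3)/2$ for \emph{every} bad type, which is the real source of the $1 - (h(3p)+3p\log3)/2$ lower bound on the threshold. The naive first-moment bound on rank-$3$ bad matrices is simply not tight for linear codes, and your framework as written cannot detect this.

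Two further remarks. First, your rank-$2$ characterization ``rank $2$ iff $c_1 \oplus c_2 \oplus c_3 = \zerovector$'' overlooks the case where one column equals $\zerovector$ (rank drops without the XOR vanishing); this does not change the entropy maximizer but should be flagged if you want to claim a clean dichotomy. Second, the paper's proof does establish, for the matching lower bound on $\bar\gamma$, that an explicit full-support rank-$3$ type $\tau_{n,0}$ has \emph{all} its implied types satisfying $H(\tau')/\dim(\tau') \ge (h(3p)+3p\log3)/2 - o(1)$; this is a finite but necessary case-check (five cases of implied maps) which your sketch entirely elides by restricting attention to a rank-$2$ type. Finally, the second-moment concern you flag as ``the main obstacle'' is in fact fully delegated to \cite{MRRSW}'s Theorem~2.8 once the framework is in place; the genuinely new content is the implied-type computation, which is precisely the piece missing from your proposal.
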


Following \cite{MRRSW}, we can again characterize the threshold rate for random linear codes in terms of types.\footnote{Again, we warn the reader that in that work monotone\emph{-decreasing} properties were considered, whereas here we consider monotone-increasing properties.} However, the characterization is now more complicated; in particular, we need to account for so-called \emph{implied} distributions. We now introduce the necessary machinery, specialized to the $q=2$ case. 

For a distribution $\tau \sim \F_2^b$, consider the span of the support of $\tau$. The dimension of this subspace is denoted by $\dim(\tau)$.

\begin{definition} [Implied Distribution] \label{def:implied-dist}
    Let $b \in \N$, let $\tau$ be a distribution over $\F_2^b$ and let $A\in \F_2^{m \times \ell}$ be a rank $m$ matrix for some $m \leq \ell$. The distribution of $Au \in \F_2^m$, where $u$ is sampled according to $\tau$, is said to be $\tau$\emph{-implied}. We denote the set of $\tau$-implied types by $\cI_\tau$. 
\end{definition}

We briefly motivate the above definition; for further details, the reader is referred to \cite{MRRSW}, specifically Section~2. Suppose a linear code $C$ contains a matrix $M \in \F_2^{n \times b}$ of type $\tau$. Then, it also contains any matrix $M' \in \F_2^{n \times m}$ whose columns lie in the column span of $M$. That is, for any full-rank matrix $A \in \F_2^{m \times b}$, it contains the matrix $MA^T$. Moreover, note that if $M$ has type $\tau$, then the type of $M' = MA^T$ is $\tau'$ as in Definition~\ref{def:implied-dist}. 

We now quote the result from \cite{MRRSW} that we require. 

\begin{theorem} [Theorem~2.8 of \cite{MRRSW}] \label{thm:mrrsw-rlc}
    Fix $b \in \N$. Let $(P_n)_{n \in \N}$ be a sequence of monotone-increasing properties such that $T_n \subseteq \cT_b^n$ is a minimal set for $P_n$ for all $n \in \N$. Then
    \[
        \TRLC^n(P_n) = 1-\max_{\tau \in T_n}\min_{\tau' \in \cI_\tau}\frac{H(\tau')}{\dim(\tau')} \pm o_{n \to \infty}(1).
    \]
\end{theorem}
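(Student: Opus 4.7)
The plan is to prove the characterization via matching first- and second-moment arguments, mirroring the scheme used for uniform random codes in Theorem~\ref{thm:threshold}, but exploiting two structural facts about random linear codes: (i) for any fixed $M \in \F_q^{n \times b}$, $\Pr[M \subseteq \CRLC^n(R)] = q^{-\dim(\tau_M)n(1-R)}$, since $\Pr[V \subseteq C] = q^{-\dim(V)n(1-R)}$ for any fixed subspace $V$ (applied to $V = \spn(\text{cols of }M)$, whose dimension equals the rank of $M$ and hence $\dim(\tau_M)$); and (ii) if $M \subseteq C$ then $MA^\top \subseteq C$ for every full-rank $A$, with $MA^\top$ having type equal to the implied type $A(\tau_M)$ from Definition~\ref{def:implied-dist}. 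Fact~(i) replaces Fact~\ref{fact:ProbabilityContainedInMatrix}; fact~(ii) is where linearity actually bites, making implied distributions relevant.

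For the first-moment direction (lower bound on $\TRLC^n$), fix $R \leq 1 - \max_{\tau \in T_n} \min_{\tau' \in \cI_\tau} H(\tau')/\dim(\tau') - \eps$ and any $\tau \in T_n$. Choose $\tau^\circ \in \cI_\tau$ minimizing $H/\dim$; by~(ii), $\{\exists M \in M_\tau : M \subseteq C\} \subseteq \{\exists M' \in M_{\tau^\circ} : M' \subseteq C\}$. Markov combined with Lemma~\ref{lem:MatricesInTypeBound} bounds the latter probability by $q^{(H(\tau^\circ) - \dim(\tau^\circ)(1-R))n + o(n)} \leq q^{-\eps\dim(\tau^\circ)n + o(n)}$. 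Union-bounding over the polynomially-many $\tau \in T_n$ (via~\eqref{eq:NumberOfTypesBound}) yields $\Pr[C \text{ satisfies } P_n] = o(1)$.

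For the second-moment direction (upper bound on $\TRLC^n$), pick $\tau^* \in T_n$ achieving the max-min with value $\rho^* := \min_{\tau' \in \cI_{\tau^*}} H(\tau')/\dim(\tau')$ and set $X := |\{M \in M_{\tau^*} : M \subseteq C\}|$. Since $A = I$ is full-rank, $\tau^* \in \cI_{\tau^*}$, so $H(\tau^*)/\dim(\tau^*) \geq \rho^*$ and $E[X]$ grows exponentially whenever $R > 1 - \rho^*$. Chebyshev reduces the task to $\Var(X) = o(E[X]^2)$, with
\[
\Var(X) \leq \sum_\mu \left|\{(M,M') \in M_{\tau^*}^2 : \tau_{[M\|M']} = \mu\}\right|\cdot q^{-\dim(\mu)n(1-R)} ,
\]
the sum being over joint types $\mu \in \cT_{2b}$ whose two $b$-coordinate marginals both equal $\tau^*$. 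The key structural observation is that any ``mixing'' projection of $\mu$ of the form $A_1 u_{[b]} + A_2 u_{[b+1:2b]}$ (with a suitable full-rank combination $[A_1 \| A_2]$) lies in $\cI_{\tau^*}$, and hence has $H/\dim \geq \rho^*$. Combining this with the count $|\{(M,M')\}| \leq q^{H(\mu)n + o(n)}$ from Lemma~\ref{lem:SetSizeUpperBound} is designed to force enough cancellation against $E[X]^2$.

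The main obstacle is the variance bound in the second direction, which is the technical heart of~\cite{MRRSW}. One must classify the joint types $\mu$, identify the projections of $\mu$ that land in $\cI_{\tau^*}$, and use $\rho^* \leq H(\tau')/\dim(\tau')$ for every implied $\tau'$ to show that the exponent $(H(\mu) - \dim(\mu)(1-R))n$ is always dominated by $2(H(\tau^*) - \dim(\tau^*)(1-R))n$, so that $\Var(X) \ll E[X]^2$. The first-moment direction is by contrast a comparatively routine union bound where linearity enters only through the substitution $M \mapsto MA^\top$.
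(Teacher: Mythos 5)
The statement you were asked to prove is imported from the companion paper~\cite{MRRSW} (it is their Theorem~2.8), and this paper cites it as a black box rather than proving it: the discussion in Appendix~\ref{app:list-of-2-rlc} invokes Theorem~\ref{thm:mrrsw-rlc} directly to derive the list-of-two threshold for random linear codes, with no internal proof. So there is no proof within this paper against which to compare your proposal.

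Evaluating your sketch on its own merits, the first-moment direction looks sound: the identity $\Pr[M \subseteq \CRLC^n(R)] \approx q^{-\dim(\tau_M)\,n(1-R)}$, the substitution $M \mapsto MA^\top$, and the union bound over the polynomially many types are all standard moves and give the ``easy'' half. However, the second-moment plan hinges on a claim that is false in general: you assert that any mixing projection of a joint type $\mu$ over $\F_2^{2b}$, of the form $A_1 u_{[b]} + A_2 u_{[b+1:2b]}$, lies in $\cI_{\tau^*}$ because both $b$-coordinate marginals of $\mu$ equal $\tau^*$. Such a projection depends on the full joint distribution $\mu$, not merely on its marginals. For instance, if $u_{[b]}$ and $u_{[b+1:2b]}$ are independent and each $\tau^*$-distributed, the projection $u_{[b]} + u_{[b+1:2b]}$ is the additive self-convolution of $\tau^*$ over $\F_2^b$, which need not be a linear image of $\tau^*$ and hence need not be a $\tau^*$-implied type. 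The variance estimate you correctly identify as the crux therefore requires a different argument, one that controls $H(\mu)$ using only the marginal constraints and the minimality of $\rho^*$ over $\cI_{\tau^*}$, rather than by locating $\mu$-implied types inside $\cI_{\tau^*}$. For the actual argument, you would need to consult the proof of Theorem~2.8 in~\cite{MRRSW}.
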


With these tools, we may prove Theorem~\ref{thm:list-of-2-restated}.

\begin{proof} [Proof of Theorem~\ref{thm:list-of-2-restated}]
    Specializing to the specific property $\cP = (P_n)_{n \in \N}$ of $(p,3)$-list-decodability, recall that $T_n$ should consist of all types $\tau \in \cT_3^n$ for which there exists a distribution $\rho$ over $\F_2^3 \times \F_2$\footnote{Technically, to be completely consistent with Section~\ref{sec:Preliminaries}, we should consider distributions over $\F_2^3 \times \binom{\F_2}{1}$, but we just identify elements with the corresponding singleton sets in the natural way.} such that, given $(u,z) \sim \rho$, the following holds:
    \begin{enumerate}
        \item\label{cond:u-is-tau} The distribution of $u$ is $\tau$.
        \item\label{cond:p-bounded} $\PR{u_j \neq z}\leq p$ for every $1 \leq j \leq 3$.
        \item\label{cond:integrality} $n\rho((u,z)) \in \N$ for every $u \in \F_2^3$ and $z \in \F_2$.
        \item\label{cond:distinct-cols} Any matrix $M \in M_\tau$ has distinct columns, i.e., $M \in \dis{(\F_2)}{n \times 3}$. Stated differently, $\PR{u_i \neq u_j}>0$ for any $1\leq i < j \leq 3$.
    \end{enumerate}
    Define 
    \[
        \bar\gamma(p,1,3) := \lim_{n \to \infty}\max_{\tau \in T_n}\min_{\tau' \in \cI_\tau}\frac{H(\tau')}{\dim(\tau')};
    \]
    Theorem~\ref{thm:mrrsw-rlc} says that $\lim_{n \to \infty}\TRLC^n(P_n) = 1-\bar\gamma(p,1,3)$. We will show $\bar\gamma(p,1,3) = \frac{h(3p)+3p\log 3}{2}$. We begin by proving an upper bound on $\bar\gamma(p,1,3)$.
    
    \paragraph{Upper bound on $\bar\gamma(p,1,3)$.} Let $\tau \in T_n$ and let $\rho \sim \F_2^{3}\times \F_2$ be the promised distribution satisfying Conditions~\ref{cond:u-is-tau}--\ref{cond:distinct-cols}. Condition~\ref{cond:p-bounded} implies that 
    \begin{align} \label{eq:list-of-two-avg-rad-first}
        \frac{1}{3}\sum_{j=1}^3 \PRover{(u,z) \sim \rho}{u_j\neq z} \leq p .
    \end{align}
    Let $\mathrm{MAJ}(u)$ denote the majority element of the vector $u$,\footnote{In symbols, $\mathrm{MAJ}(u) = \argmax\{b \in \F_2:|\{j \in [3]:u_j=b\}|\}$.} and note that if $z = \mathrm{MAJ}(u)$ the left-hand side of \eqref{eq:list-of-two-avg-rad-first} can only decrease. Hence,
	\begin{align} \label{eq:list-of-two-maj}
		\frac{1}{3} \sum_{i=1}^3 \PRover{u \sim \tau}{u_i \neq \mathrm{MAJ}(u)} \leq p .
	\end{align}
	Now, as in Section~\ref{sec:list-of-2} let\footnote{In this proof, we denote a vector by the corresponding string for readability.} $A = \{000,111\}$ and $B = \F_2^3 \setminus A$. Then, defining $x = \tau(B)$, Condition~\eqref{eq:list-of-two-maj} becomes
	\begin{align} \label{eq:list-of-two-x-ub}
		x \leq 3p .
	\end{align}
	Now, consider the implied type $\tau^* \in \cI_\tau$ defined by the linear map $(a,b,c) \mapsto (a+b,a+c)$. Note that the kernel of this map is $\{000,111\}$, and so $\tau^*(00) = 1-x$. Hence, $\tau^*(10) + \tau^*(01) + \tau^*(11) = x$. We may therefore bound the entropy from above as follows:
	\begin{align*}
		H(\tau^*) &= \tau^*(00)\cdot \log\tfrac{1}{\tau^*(00)} + \tau^*(01)\cdot \log\tfrac{1}{\tau^*(01)} \\& \quad + \tau^*(10)\cdot \log\tfrac{1}{\tau^*(10)} + \tau^*(11)\cdot \log\tfrac{1}{\tau^*(11)} \\
		&\leq (1-x) \log\tfrac{1}{1-x} + x\log\tfrac{3}{x} \\&= h(x) + x \log 3 \ .
	\end{align*}
	The above inequality uses the concavity of the function $y \mapsto y \log \tfrac{1}{y}$. Note that in the range $[0,3/4)$, the function $x \mapsto h(x) + x\log 3$ is increasing: clearly $h_2(0) + 0\cdot \log_23 = 0$, and moreover the derivative of $h(x) + x\log_23$ with respect to $x$ is $\log\inparen{\tfrac{3(1-x)}{x}}$, which is positive assuming $\frac{3(1-x)}{x} > 1$, which rearranges to $x < 3/4$. Hence, as $x \leq 3p$ and $p < 1/4$, we conclude 
	\[
		H(\tau^*) \leq h(3p) + 3p\log3 \ .
	\]
	Now, we claim that $\dim(\tau^*) = 2$. Let $U = \mathrm{span}(\supp(\tau))$. If $\dim(\tau^*) \leq 1$, then $\dim(U) \leq 2$ and $111 \in U$ (recall that the kernel of $(a,b,c)\mapsto (a+b,b+c)$ is $\{000,111\}$). This implies 
	\begin{align*}
		U \in &\{\{000\}, \{000,111\}, \{000,111,001,110\}, \\& \{000,111,010,101\}, \{000,111,100,011\}\} \ .
	\end{align*}
	In any of the above cases, we find that $\tau$ contradicts Condition~\ref{cond:distinct-cols}. For example, if $U = \{000,111,001,110\}$, then $\PR{u_i \neq u_j}=0$. 
	
	As $\tau \in T_n$ was arbitrary, we conclude that 
	\begin{align*}
	    \bar\gamma(p,1,3) &= \lim_{n \to \infty}\max_{\tau \in T_n}\min_{\tau' \in \cI_\tau}\frac{H(\tau')}{\dim(\tau')} \\&\leq \frac{h(3p)+3p\log3}{2},
	\end{align*}
	as desired.
	
	\paragraph{Lower bound on $\bar\gamma(p,1,3)$.} For any $n \in \N$, we define a type $\tau_{n,0} \in T_n$ and show that 
	\[
	    \min_{\tau' \in \cI_{\tau_{n,0}}}\frac{H(\tau')}{\dim(\tau')} \geq \frac{h(3p)+3p\log3}{2} - o_{n \to \infty}(1);
	\]
	taking limits will then show $\bar\gamma(p,1,3) \geq \frac{h(3p)+3p\log3}{2}$, as desired. 
	
	Define $\rho_n \sim \F_2^{3 \times 1}$ by assigning probability mass $\frac{1}{n}\floor{\frac{p}{2}\cdot n}$ to each of the elements $(001,0),(010,0),(100,0)$, $(011,1),(101,1),(110,1) \in \F_2^3 \times \F_2$, and splitting the remaining probability mass as evenly as possible between the $(000,0)$ and $(111,1)$, being sure to obey the condition that $n\cdot \rho_n(000,0)$ and $n \cdot \rho_n(111,1)$ are integers. Intuitively, $\rho_n$ is tending to the ``limit'' distribution $\rho$ defined by 
	\begin{align*}
		\rho(000,0)&=\rho(111,1)=\frac{1-3p}{2} ~~ \text{and}\\
		\rho(001,0)&=\rho(010,0)=\rho(100,0)=\rho(011,1)\\&=\rho(101,1)=\rho(110,1) = \frac{p}{2};
	\end{align*}
	however, as we require a type lying in $T_n$, we cannot directly work with $\rho$. Let $\tau_{n,0}$ denote the distribution of $u$ for $(u,z) \sim \rho_n$. It is immediate that Conditions \ref{cond:p-bounded}, \ref{cond:integrality} and \ref{cond:distinct-cols} are satisfied for $\rho_n$ and $\tau_{n,0}$. Also, observe that $\dim(\tau_{n,0}) = 3$, as $\tau_{n,0}$ has full-support for sufficiently large $n$ (this uses the assumption $p \in (0,1/4)$). 
	
	Our plan now is to consider all of the implied types of $\tau_{n,0}$ one-by-one. We will show that each such implied type $\tau_{n,0}'$ satisfies 
	\[
	    \frac{H(\tau_{n,0}')}{\dim(\tau_{n,0}')} \geq \frac{h(3p)+3p\log3}{2} - o_{n\to\infty}(1),
	\]
	and therefore, we will deduce
	\[
	    \min_{\tau_{n,0}' \in \cI_{\tau_{n,0}}}\frac{H(\tau_{n,0}')}{\dim(\tau_{n,0}')} \geq \frac{h(3p)+3p\log3}{2} - o_{n\to\infty}(1).
	\]
	First, one can compute that
	\[
		\frac{H(\tau_{n,0})}{\dim(\tau_{n,0})} \geq \frac{h_2(3p)+1+3p\log_23}{3} - o_{n \to \infty}(1).
	\]
	Next, consider the type $\tau_{n,1}$ implied by the map $(a,b,c) \mapsto (a+b,a+c)$. One can calculate
	\[
		\frac{H(\tau_{n,1})}{\dim(\tau_{n,1})} \geq \frac{h_2(3p) + 3p\log3}{2} - o_{n \to \infty}(1).
	\]
	Next, consider any type $\tau_{n,2}$ implied by a full-rank map $\F_2^3 \to \F_2^2$ with a vector from $B$ in the kernel. We compute
	\[
		\frac{H(\tau_{n,2})}{\dim(\tau_2)} \geq \frac{h(2p)+1}{2} - o_{n\to\infty}(1).
	\]
	Next, consider any type $\tau_{n,3}$ implied by a full-rank map $\F_2^3 \to \F_2$ with $111$ in the kernel. In this case,
	\[
		\frac{H(\tau_{n,3})}{\dim(\tau_{n,3})} \geq h(2p) - o_{n \to \infty}(1). 
	\]
	Finally, consider any type $\tau_{n,4}$ implied by a full-rank map $\F_2^3 \to \F_2$ without $111$ in the kernel. In this case, 
	\[
	    \frac{H(\tau_{n,4})}{\dim(\tau_{n,4})} \geq 1-o_{n\to\infty}(1).
	\]
	This completes the computation of $\frac{H(\tau_n')}{\dim(\tau_n')}$ for each $\tau_n' \in \cI_{\tau_{n,0}}.$
	
	 It is now just a finite check to verify that \[\frac{H(\tau_n')}{\dim(\tau_n')} \geq \frac{h(3p)+\log 3}{2}-o_{n\to\infty}(1).\] For example, to show that 
	\begin{align*}
		\frac{H(\tau_{n,0})}{\dim(\tau_{n,0})} &= \frac{h(3p)+1+3p\log3}{3} - o_{n \to \infty}(1) \\&\geq \frac{h(3p)+\log3}{2} - o_{n \to \infty}(1),
	\end{align*}
    one can reason as follows. First, it clearly suffices to show
	\[
	    \frac{h(3p)+1+3p\log3}{3} \geq \frac{h(3p)+\log3}{2}.
	\]
	Now, note that both the left-hand side and the right-hand side are $0$ at $p = 1/4$, and moreover the derivative of the right-hand side minus the left-hand side is $\frac{1}{6}\inparen{\log\inparen{\frac{3p}{1-3p}} - \log3}$, which is negative if $p<1/4$. We omit the remaining computations, which are completely routine; for a pictorial proof, see Figure~\ref{fig:exp-thresh}.
	\FloatBarrier
	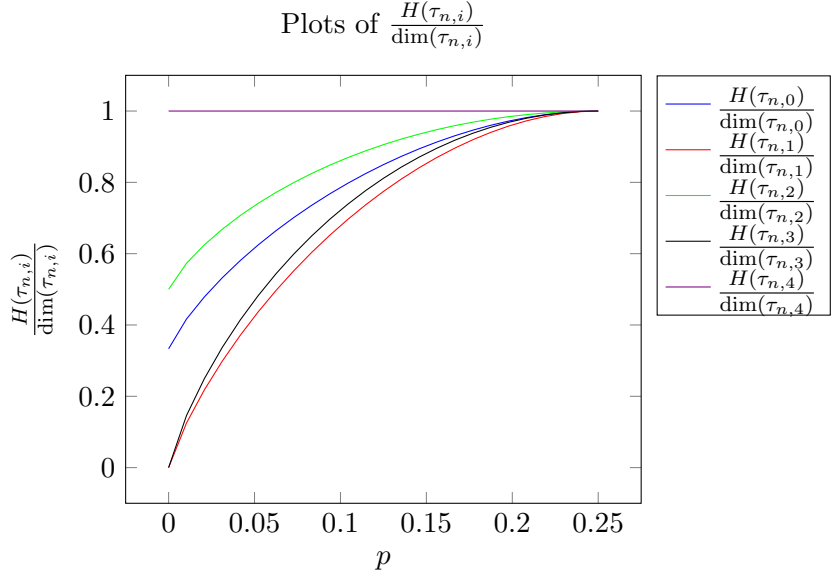
\begin{figure}[h!]
		\centering
		\begin{tikzpicture}[scale= .7]
		\begin{axis} [
		title=Plots of $\frac{H(\tau_{n,i})}{\dim(\tau_{n,i})}$,
		xlabel={$p$},
		ylabel={$\frac{H(\tau_{n,i})}{\dim(\tau_{n,i})}$},
		xticklabels={-0.05, 0, 0.05, 0.1, 0.15, 0.2, 0.25},
		legend pos=outer north east
		]
		\addplot[blue,domain=0:0.25] {(1/3)*( (3*x)*ln(1/(3*x))/ln(2) + (1-3*x)*ln(1/(1-3*x))/ln(2) + 1 + 3*x*ln(3)/ln(2) )};
		\addlegendentry{$\frac{H(\tau_{n,0})}{\dim(\tau_{n,0})}$}
		
		\addplot[red,domain=0:0.25]  {0.5*( ((3*x)*ln(1/(3*x))/ln(2) + (1-3*x)*ln(1/(1-3*x))/ln(2)) + 3*x*ln(3)/ln(2) )};
		\addlegendentry{$\frac{H(\tau_{n,1})}{\dim(\tau_{n,1})}$}
		
		\addplot[green,domain=0:0.25] {0.5*( (2*x)*ln(1/(2*x))/ln(2) + (1-2*x)*ln(1/(1-2*x))/ln(2) + 1 )};
		\addlegendentry{$\frac{H(\tau_{n,2})}{\dim(\tau_{n,2})}$}
		
		\addplot[black,domain=0:0.25] {(2*x)*ln(1/(2*x))/ln(2) + (1-2*x)*ln(1/(1-2*x))/ln(2)};
		\addlegendentry{$\frac{H(\tau_{n,3})}{\dim(\tau_{n,3})}$}
		
		\addplot[violet,domain=0:0.25] {1};
		\addlegendentry{$\frac{H(\tau_{n,4})}{\dim(\tau_{n,4})}$}
		
		\end{axis}
		\end{tikzpicture}
		\caption{Plots of $\frac{H(\tau_{n,i})}{\dim(\tau_{n,i})}$ for each $i \in \{0,1,2,3,4\}$, ignoring $o_{n\to\infty}(1)$ terms. One can see that, uniformly over $p \in (0,1/4)$, the minimum is obtained by $\frac{H(\tau_{n,1})}{\dim(\tau_{n,1})}$.}
		\label{fig:exp-thresh}
	\end{figure}
    \FloatBarrier
	Thus, we conclude that 
	\[
	    \bar\gamma(p,1,3) = \lim_{n \to \infty}\max_{\tau \in T_n}\min_{\tau' \in \cI_\tau} \frac{H(\tau')}{\dim(\tau')} \geq \frac{h(3p)+3p\log3}{2},
	\]
	completing the proof. 
\end{proof}

\end{document}